\documentclass{article}

\usepackage[affil-it]{authblk}
\usepackage{amsfonts}
\usepackage{amsmath,amsthm,amssymb}
\usepackage{enumerate}
\usepackage[english]{babel}
\usepackage{graphicx}
\usepackage[margin=2.8cm]{geometry}

\usepackage{hyperref}
\hypersetup{colorlinks=true,citecolor=blue,linkcolor=blue,filecolor=blue,urlcolor=blue}

\theoremstyle{plain}
\newtheorem{theorem}{Theorem}[section]
\newtheorem{lemma}[theorem]{Lemma}

\newtheorem{corollary}[theorem]{Corollary}

\theoremstyle{definition}
\newtheorem{definition}[theorem]{Definition}
\newtheorem{remark}[theorem]{Remark}

\newcommand*{\cI}{\mathcal{I}}

\newcommand*{\cT}{\mathcal{T}}

\newcommand*{\cW}{\mathcal{W}}

\newcommand*{\eps}{\varepsilon}

\newcommand*{\id}{\mathrm{id}}
\newcommand*{\tr}{\mathrm{tr}}
\newcommand*{\ket}[1]{| #1 \rangle}
\newcommand*{\bra}[1]{\langle #1 |}
\newcommand*{\spr}[2]{\langle #1 | #2 \rangle}
\newcommand*{\proj}[1]{|#1\rangle\!\langle #1|}
\newcommand*{\Sym}{\mathrm{Sym}}

\newcommand{\ceil}[1]	{\left\lceil #1 \right\rceil}

\newcommand*{\comment}[1] {}

\pretolerance=1000

\begin{document}

\title{Quantum conditional mutual information \\ and approximate Markov chains}

\author[1,2]{Omar Fawzi}
\author[1]{Renato Renner}
\affil[1]{ETH Zurich, Switzerland}
\affil[2]{LIP\footnote{UMR 5668 ENS Lyon - CNRS - UCBL - INRIA, Université de Lyon.}, ENS de Lyon, France}

\date{}

\maketitle

\begin{abstract}
 A state on a tripartite quantum system $A \otimes B \otimes C$ forms a Markov chain if it can be reconstructed from its marginal on $A \otimes B$ by a quantum operation from $B$ to $B \otimes C$.  We show that the quantum conditional mutual information $I(A: C | B)$ of an arbitrary state is an upper bound on its distance to the closest reconstructed state. It thus quantifies how well the Markov chain property is approximated. 
  \end{abstract}

\section{Introduction} \label{sec_intro}

The \emph{conditional mutual information} $I(A : C | B)_{\rho} = H(\rho_{A B}) + H(\rho_{B C}) - H(\rho_B) - H(\rho_{A B C})$ of a state $\rho_{ABC}$ on a tripartite system $A \otimes B \otimes C$ is meant to quantify the correlations between $A$ and $C$ from the point of view of $B$. Here $H(\rho) = -\tr(\rho \log_2 \rho)$ is the von Neumann entropy. Apart from its central role in traditional information theory, the conditional mutual information has recently found applications in new areas of computer science and physics. Examples include communication and information complexity (see~\cite{Bra12} and references therein), de Finetti type theorems~\cite{BH13, BH13b} and also the study of quantum many-body systems~\cite{Kim13thesis}. The importance of the conditional mutual information for such applications is due to its various useful properties. In particular, it has an additivity property called the \emph{chain rule}: ${I(A_1 \dots A_n : C | B)} = {I(A_1 : C |B)} + {I(A_2 : C | B A_1)} + \dots + {I(A_n : C | B A_1 \dots A_{n-1})}$.

When the $B$ system is classical, the conditional mutual information $I(A : C | B)$ has a simple interpretation: it is the average over the values $b$ taken by $B$ of the (unconditional) mutual information evaluated for the conditional state on the system $A \otimes C$. This is crucial for applications because the (unconditional) mutual information can be related to operational quantities such as the distance to product states using Pinsker's inequality for instance. However, when $B$ is quantum, the conditional mutual information is significantly more complicated and much less is known about it. In fact, even the fact that $I(A:C|B) \geq 0$, also known as \emph{strong subadditivity} of the von Neumann entropy, is a highly non-trivial theorem~\cite{LieRus73}. The structure of states that satisfy $I(A:C|B)_{\rho} = 0$ was also studied~\cite{Pet88,HJPW04}. It has been found that a zero conditional mutual information characterises states $\rho_{A B C}$ whose $C$ system can be reconstructed just by acting on $B$, i.e., there exists a quantum operation~$\cT_{B \to BC}$ from the $B$ to the $B \otimes C$ system such that 
\begin{align}
\label{eq_markovchain}
\rho_{ABC} = \cT_{B \to BC}(\rho_{AB}) \ .
\end{align}
States $\rho_{A B C}$ that satisfy this condition are called \emph{(quantum) Markov chains}. When $B$ is classical the condition~\eqref{eq_markovchain} simply means that, for all values $b$ taken by $B$, the conditional state on $A \otimes C$ is a product state. We say that \emph{$A$ and $C$ are independent given $B$}. 

A natural question that is very relevant for applications is to characterise states for which the conditional mutual information is approximately zero, i.e., for which it is guaranteed that $I(A:C|B) \leq \epsilon$ for some $\epsilon > 0$. In applications  involving $n$ systems $A_1, \ldots, A_n$, such a guarantee is often obtained from an upper bound on the total conditional mutual information ${I(A_1 \dots A_n : C | B)} \leq c$ (which can even be the trivial bound $2 \log_2 \dim C$). The chain rule mentioned above then implies that, on average over~$i$, we have ${I(A_i : C | B A_1 \dots A_{i-1})} \leq c/n$. The authors of~\cite{ILW08} gave evidence for the difficulty of characterising such states in the quantum setting by finding states for which the conditional mutual information is small whereas their distance to any Markov chain is large (see also~\cite{CSW12} for more extreme examples). Recent works by~\cite{WL, Kim13, Zha12} made the important observation that instead of considering the distance to a (perfect) Markov chain, another possibly more appropriate measure would be the accuracy with which Eq.~\ref{eq_markovchain} is satisfied. In fact, it was conjectured in~\cite{Kim13} that the conditional mutual information is lower bounded by the trace distance between the two sides of Eq.~\ref{eq_markovchain} for a specific form for the map $\cT_{B \to BC}$ known sometimes as the Petz map (cf.\ Eq.~\ref{eq_petzmap} below). Later, in the context of studying R\'enyi generalisations of the conditional mutual information, the authors of~\cite{BSW14} refined this conjecture by replacing the trace distance with the negative logarithm of the fidelity (see also~\cite{SesWil14}). Here, we prove a variant of this last conjecture where the map $\cT_{B \to BC}$ does not necessarily have the form of a Petz map.

\paragraph{Main result.} We prove that for any state $\rho_{A B C}$ on $A \otimes B \otimes C$ there exists a quantum operation $\mathcal{T}_{B \to B C}$ from the $B$ system to the $B \otimes C$ system such that the fidelity of the reconstructed state \begin{align} \label{eq_sigmadef}
  \sigma_{A B C} =  \mathcal{T}_{B \to B C}(\rho_{A B}) 
\end{align}
is at least\footnote{The \emph{fidelity} of $\rho$ and $\sigma$ is defined as $F(\rho , \sigma) = \| \sqrt{\rho} \sqrt{\sigma} \|_1$. }
\begin{align} \label{eq_maininequality}
   F(\rho_{A B C}, \sigma_{A B C}) 
  \geq 2^{-\frac{1}{2} I(A : C | B)_{\rho}}  \ .
\end{align}
We refer to Theorem \ref{thm_maininequality} for a more precise statement. 

\paragraph{Reformulations and implications.} A first immediate implication of our inequality is the \emph{strong subadditivity} of the von Neumann entropy,  ${I(A : C | B)}_{\rho} \geq 0$~\cite{LieRus73}. The latter may be rewritten in terms of the conditional von Neumann entropy, $H(A|B)_{\rho} = H(\rho_{A B}) - H(\rho_B)$, as  
\begin{align} \label{eq_strongsubadditivity}
  H(A|B)_\rho \geq H(A| B C)_\rho
\end{align}
and is also known as the \emph{data processing inequality}. Furthermore, \eqref{eq_maininequality}~implies that if~\eqref{eq_strongsubadditivity} holds with equality for some state $\rho_{A B C}$ then it satisfies the Markov chain condition~\eqref{eq_markovchain}, reproducing the result from~\cite{Pet88,HJPW04}. The work presented here may thus be viewed as a robust extension of this result | if~\eqref{eq_strongsubadditivity} holds with \emph{approximate} equality then the Markov chain condition is fulfilled \emph{approximately}. 

Our result may also be rewritten as 
\begin{align}
  \inf_{\sigma_{A B C}} D_{\frac{1}{2}}(\rho_{A B C}, \sigma_{A B C}) \leq I(A : C | B)_\rho \ ,
\end{align}
where the infimum ranges over all \emph{recovered} states, i.e., states of the form~\eqref{eq_sigmadef}, and where $D_{{1}/{2}}(\rho \| \sigma) = -2 \log_2 F(\rho, \sigma)$ is the R\'enyi divergence of  order $\alpha = {1}/{2}$~\cite{MDSFT13, WWY13}. We remark that the quantity on the left hand side is equal to the \emph{surprisal of the fidelity of recovery}, which has been introduced and studied in detail in~\cite{SesWil14}. 

Finally, we note that~\eqref{eq_maininequality} also implies an upper bound on the trace distance, which we denote by $\Delta(\cdot, \cdot)$, between $\rho_{A B C}$ and the recovered state $\sigma_{A B C}$, 
\begin{align} \label{eq_maininequalityc}
  \frac{1}{\ln 2} \Delta( \rho_{A B C}, \sigma_{A B C})^2 \leq I(A : C | B)_{\rho} \ .
\end{align}
The bound is readily verified using $\Delta(\cdot, \cdot)^2 \leq 1-F(\cdot, \cdot)^2$ (cf.\ Lemma~\ref{lem_tracedistancefidelity}) and $1-2^{-x} \leq \ln(2) x$.

\paragraph{Tightness.}  One may ask whether, conversely to our main result, the conditional mutual information of a state $\rho_{A B C}$ also gives a lower bound on its distance to any reconstructed state $\sigma_{A B C}$ of the form~\eqref{eq_sigmadef}. To answer this question, we note that, as a consequence of the data processing inequality, we have
     \begin{align}
     I(A : C | B)_{\rho}
     = H(A | B)_\rho - H(A | B C)_{\rho}
     \leq H(A | B C)_{\sigma} - H(A | B C)_\rho \ .
   \end{align}
  The entropy difference on the right hand side can be bounded by the Alicki-Fannes inequality~\cite{AF03} in terms of the trace distance between the two states, yielding\footnote{We refer to~\cite{BSW14} for a more detailed discussion, including a proof that the same bound holds also when the conditional mutual information is evaluated for $\sigma$ instead of $\rho$.} 
   \begin{align}
     I(A : C | B)_{\rho}
   \leq 8 \Delta \log_2(\dim A) - 4 \Delta \log_2(2 \Delta) - 2 (1-2 \Delta) \log_2(1-2 \Delta)   \qquad \text{for $\Delta \leq \frac{1}{2}$}\ .
   \end{align}
  This can be seen as a converse to~\eqref{eq_maininequalityc}. To simplify the comparison, we may use
    \begin{align}
     8 \Delta- 4 \Delta \log_2(2 \Delta) -  2 (1-2 \Delta) \log_2(1-2 \Delta) \leq 7 \sqrt{\Delta}  \qquad \text{for $\Delta \leq \frac{1}{11}$} \ ,
        \end{align}
     which gives
    \begin{align}
        I(A : C | B)_{\rho} \leq 7 \log_2(\dim A) \sqrt{\Delta(\rho_{A B C},   \sigma_{A B C})} \ . 
    \end{align}
Note that a term proportional to the logarithm of the dimension of $A$ is necessary in general as the trace distance is always upper bounded by~$1$, whereas the conditional mutual information may be as large as $2 \log_2 \dim A$. 

\paragraph{The classical case.}
  Inequality~\eqref{eq_maininequality} is easily obtained in the case where $B$ is classical, i.e., when $\rho_{A B C}$ is a qcq-state,
  \begin{align}
    \rho_{A B C} = \sum_{b} P_B(b) \, \proj{b}_B \otimes \rho_{AC,b} \ ,
  \end{align}
  for some probability distribution $P_{B}$, an orthonormal basis $\{\ket{b}\}_b$ of $B$, and a family of states $\{\rho_{A C, b}\}_{b}$ on $A \otimes C$.  Let $\mathcal{T}_{B \to B C}$ be any map such that
  \begin{align}
    \mathcal{T}_{B \to B C}(\proj{b}) = \proj{b} \otimes \rho_{C,b}  \qquad (\forall b) \ ,
  \end{align}
 where $\rho_{C, b} = \tr_A(\rho_{A C,b})$. Then the reconstructed state $\sigma_{A B C} = \mathcal{T}_{B \to B C}(\rho_{A B})$ is the qcq-state
\begin{align} \label{eq_qcqMarkov}
 \sigma_{A B C} = \sum_{b} P_B(b) \rho_{A, b} \otimes \proj{b} \otimes \rho_{C, b} \ ,
\end{align}
where $\rho_{A, b} = \tr_C(\rho_{A C, b})$.  We remark that $\sigma_{A B C}$ is a Markov chain. Furthermore, a straightforward calculation shows that the relative entropy\footnote{See Section~\ref{sec_relativeentropy} for a definition.} $D(\rho_{A B C} \| \sigma_{A B C})$ between  $\rho_{A B C}$ and $\sigma_{A B C}$ is given by
 \begin{align} \label{eq_classicalidentity}
   D(\rho_{A B C} \| \sigma_{A B C}) = I(A : C | B)_\rho \ .
 \end{align}
 Inequality~\eqref{eq_maininequality} then follows from Lemma~\ref{lem_DFidelity}. 
  
\paragraph{Related results.}  While the conditional mutual information is well understood in the classical case and has various interesting properties (see, e.g., \cite{R02}), these properties do not necessarily hold for quantum states. For example, identity~\eqref{eq_classicalidentity} cannot be generalised directly to the case where $B$ is non-classical (see~\cite{WL} for a discussion). Furthermore, it has been discovered that there exist states $\rho_{A B C}$ that have a large distance to the closest Markov chain, while the conditional mutual information is small~\cite{ILW08, CSW12, Erk14}. We remark that this is not in contradiction to~\eqref{eq_maininequality} as the reconstructed state $\sigma_{A B C}$, defined by~\eqref{eq_sigmadef}, is not necessarily a Markov chain. (Note that this is a major difference to the classical case sketched above.)

As mentioned above, the special case of~\eqref{eq_maininequality} where $I(A : C | B) = 0$ has been studied in earlier work~\cite{Pet88,HJPW04}. There, it has also been shown that the relevant reconstruction map $\mathcal{T}_{B \to B C}$ is of the form 
\begin{align}
\label{eq_petzmap}
  X_B \mapsto  {\rho_{B C}^{\frac{1}{2}} (\rho_B^{-\frac{1}{2}} X_B  \rho_B^{-\frac{1}{2}} \otimes \id_C) \rho_{B C}^{\frac{1}{2}}} \ .
\end{align}
However, it remained unclear whether this particular map also works in the case where $I(A : C | B)$ is strictly larger than zero, even though several conjectures in this direction were proposed and studied~\cite{WL,Kim13,Zha12,BSW14}. We refer to~\cite{LW14} for a detailed account of the evolution of these conjectures.
We note that our result provides some information about the structure of the map for which~\eqref{eq_maininequality} holds (cf.\ Theorem~\ref{thm_maininequality}), but leaves open the question whether it is of this particular form.  

There is a large body of literature underlying the fundamental role that the conditional mutual information plays in quantum information theory.  Notably, it has been shown to characterise the communication rate for the task of \emph{quantum state redistribution} in the asymptotic limit of many independent copies of a resource state~\cite{DY08}. Furthermore,  the quantum conditional mutual information is the basis for an important measure of entanglement, known as \emph{squashed entanglement}~\cite{ChrWin03}. The properties of this entanglement measure thus hinge on the properties of $I(A : C | B)$. In this context, lower bounds on $I(A : C | B)$ in terms of the distance between the marginal $\rho_{A C}$ from the set of separable states have been proved in~\cite{BCY11} and later improved in~\cite{LiWin14}.  We also note that another lower bound on the conditional mutual information in terms of a distance between certain operators derived from $\rho_{A B C}$ has recently been stated in~\cite{ZhaWu14}. This bound is based on a novel monotonicity bound for the relative entropy~\cite{CL14}. Our work may be used to obtain strengthened versions of some of these results. We are going to illustrate this for the case of squashed entanglement.

\paragraph{Applications.} 

For us, one motivation to study how well the conditional mutual information characterises approximate Markov chains is in the context of device-independent quantum key distribution~\cite{BDFR14}. Another implication, proposed in~\cite{WL,LW14},  is a novel lower bound on the squashed entanglement of any bipartite state. The bound depends only on the trace distance to the closest $k$-extendible\footnote{A non-negative operator $\omega_{A C}$ is called \emph{$k$-extendible} if there exists a non-negative operator $\bar{\omega}_{A C_1 \cdots C_n}$ such that $\bar{\omega}_{A C_i} = \omega_{A C}$ for all $i = 1, \ldots, k$. \label{ftn_extendible}} state, and also implies a strong lower bound in terms of the trace distance to the closest separable state (cf.\ Appendix~\ref{app_squashed} for details).

It would be interesting to investigate whether inequality~\eqref{eq_maininequality} can lead to better quantum de Finetti theorems. In fact, the authors of~\cite{BH13,BH13b} recently gave beautiful proofs of various de Finetti theorems using the conditional mutual information. For the quantum version, they apply an informationally complete measurement to reduce the problem to the classical case, but this comes at the cost of a factor that is exponential in the number of systems. We also believe that inequality \eqref{eq_maininequality} will be helpful in proving communication complexity lower bounds via the \emph{quantum information complexity}~\cite{JRS03,JN10,KLLRX12,Tou14}.

\paragraph{Structure of the proof.} The proof of inequality~\eqref{eq_maininequality} is based on two main ideas, which we discuss in separate sections. The first is the use of \emph{one-shot entropy measures}~\cite{Renner05} to bound the von Neumann relative entropy (Section~\ref{sec_relativeentropy}). The second is an extension of the method of \emph{de Finetti reductions}~\cite{Ren07,CKMR07,CKR09,Renner10} (Section~\ref{sec_deFinetti}). We use the latter to derive a general tool for evaluating the fidelity of permutation-invariant states (Section~\ref{sec_fidelity}). The proof of~\eqref{eq_maininequality} then proceeds in two main steps in which these techniques are applied successively (Section~\ref{sec_proof}). 
  
\section{Typicality bounds on the relative entropy} \label{sec_relativeentropy} 

In this section we are going to derive bounds on the relative entropy that will be used in the proof of Theorem~\ref{thm_maininequality}. The method we use to obtain these bounds is inspired by a recent approach~\cite{BeaRen12} to prove strong subadditivity of the von Neumann entropy (see Eq.~\ref{eq_strongsubadditivity}). The idea  there was to first prove strong subadditivity for one-shot entropies~\cite{Renner05} and then use \emph{typicality} or, more precisely, the \emph{Asymptotic Equipartition Property}~\cite{TCR09} to obtain the desired statement for the von Neumann entropy. Here we proceed analogously: we use one-shot versions of the relative entropy (defined in Appendix~\ref{app_relativeentropy}) to obtain bounds on the von Neumann relative entropy.

The \emph{(von Neumann) relative entropy} $D(\rho \| \sigma)$ for two non-negative operators  $\rho$ and $\sigma$ is defined as
\begin{align}
  D(\rho \| \sigma) = \frac{1}{\tr(\rho)}\tr\bigl(\rho(\log_2 \rho - \log_2 \sigma)\bigr) \ , 
\end{align}
where we set $D(\rho \| \sigma) = \infty$ if the support of $\rho$ is not contained in the support of $\sigma$.
Our statements also refer to the \emph{trace distance}. While  this distance is often defined for density operators only, we define it here more generally for any non-negative operators $\rho$ and $\sigma$ by
\begin{align}
  \Delta(\rho, \sigma) = \frac{1}{2} \|\rho - \sigma\|_1 + \frac{1}{2}\bigl|\tr(\rho - \sigma)\bigr| 
\end{align}
(see Section~3.2 of~\cite{Tom12}). Note that the second term is zero if $\rho$ and $\sigma$ are both density operators. We also remark that the trace distance may be rewritten as
\begin{align} \label{eq_tracedistancepositive}
  \Delta(\rho, \sigma) = \max\bigl[\tr(Y^+), \tr(Y^-)\bigr] \ ,
\end{align}
where $Y+$ and $Y^-$ are the positive and negative parts of $\rho - \sigma$, i.e., $\rho - \sigma = Y^+ - Y^-$ with $Y^+ \geq 0$, $Y^- \geq 0$, and $\tr(Y^+ Y^-) = 0$. It follows that we can write $\Delta$ as 
\begin{align} \label{eq_tracedistancemaximisation}
  \Delta(\rho, \sigma) =  \sup_{0 \leq Q \leq \id} |\tr(Q (\rho - \sigma))| \ .
\end{align}
\comment{ 
To prove this expression, we prove both inequalities. Taking two particular values for $Q$ being the projector onto the support of $Y^+$ and the projector on the support of $Y^{-}$, we see that $\Delta(\rho, \sigma) = \max\bigl[\tr(Y^+), \tr(Y^-)\bigr] \leq  \sup_{0 \leq Q \leq \id} |\tr(Q (\rho - \sigma))|$. Also for any $0 \leq Q \leq \id$, we have $\tr(Q(\rho - \sigma)) = \tr(Q(Y^+ - Y^{-})) \leq \tr(Q Y^+) \leq \tr(Y^+)$. Similarly, $\tr(Q(\rho - \sigma)) \geq - \tr(Q Y^{-}) \geq - \tr(Y^{-})$, which proves that $ \sup_{0 \leq Q \leq \id} |\tr(Q (\rho - \sigma))| \leq \Delta(\rho, \sigma)$. \\
}
One can easily see from this expression that for any trace non-increasing completely positive map $\cW$ we have
\begin{align} \label{eq_tracedistancemonotone}
  \Delta(\cW(\rho), \cW(\sigma)) \leq \Delta(\rho, \sigma)  \ .
\end{align}

\comment{
For any trace non-increasing completely positive map $\cW$,
\begin{align}
    \Delta(\cW(\rho), \cW(\sigma))
    = \sup_{0 \leq Q \leq \id}  |\tr(Q \cW(\rho - \sigma))|
    &=  \sup_{0 \leq Q \leq \id}  |\tr(\cW^*(Q)(\rho - \sigma))|  \\
    &\leq \sup_{0 \leq \bar{Q} \leq \id} |\tr(\bar{Q} (\rho - \sigma))|
    = \Delta(\rho, \sigma) \ ,
\end{align}
where $\cW^*$ is the adjoint map. The inequality holds because $\cW^*$ is sub-unital and completely positive, so that $0 \leq \cW^*(Q) \leq \id$.
}

Our first lemma provides an upper bound on the relative entropy in terms of sequences of operators that satisfy an operator inequality. 

\begin{lemma} \label{lem_boundfromsequence}
  Let  $\rho$ be a density operator, let $\sigma$ be a non-negative operator, and let $\{\bar{\rho}_n\}_{n \in \mathbb{N}}$ be a sequence of non-negative operators such that for some $s \in \mathbb{R}$
 \begin{align} \label{eq_limitdistancesmallerone}
   \bar{\rho}_n \leq 2^{s n} \sigma^{\otimes n} \quad  (\forall n \in \mathbb{N}) \qquad \text{and} \qquad \lim_{n \to \infty} \Delta(\bar{\rho}_n, \rho^{\otimes n})  < 1 \ .
  \end{align}
  Then $D(\rho \| \sigma) \leq s$.
\end{lemma}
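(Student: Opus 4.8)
The plan is to exploit the tensor-power structure and reduce the statement to an asymptotic statement about the quantity $\frac{1}{n}\tr(\rho^{\otimes n}(\log_2 \rho^{\otimes n} - \log_2 \sigma^{\otimes n}))$, which equals $D(\rho\|\sigma)$ for every $n$ by additivity of the relative entropy under tensor products. The hypothesis $\bar\rho_n \leq 2^{sn}\sigma^{\otimes n}$ is an operator inequality, so the natural move is to take logarithms in a way that survives: since $\log_2$ is operator monotone, and since $\bar\rho_n$ is supported (up to the support issue, which I address below) on the support of $\sigma^{\otimes n}$, one gets an inequality of the form $\log_2 \bar\rho_n \leq sn\,\id + \log_2 \sigma^{\otimes n}$ on the relevant support. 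The idea then is to sandwich: use the second hypothesis, that $\bar\rho_n$ is trace-distance-close to $\rho^{\otimes n}$ for large $n$, to transfer this operator bound into a bound on $\tr(\rho^{\otimes n}(\cdots))$ up to a controlled error.

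Concretely, I would first observe that $\tr(\bar\rho_n \log_2 \sigma^{\otimes n}) \geq \tr(\bar\rho_n \log_2 \bar\rho_n) - sn\,\tr(\bar\rho_n)$, i.e. $-\tr(\bar\rho_n(\log_2\bar\rho_n - \log_2\sigma^{\otimes n})) \leq sn\,\tr(\bar\rho_n)$; in other words, the ``un-normalised relative entropy'' of $\bar\rho_n$ against $\sigma^{\otimes n}$ is at most $sn$ (per copy, roughly). Now I want to replace $\bar\rho_n$ by $\rho^{\otimes n}$. Let $\eps_n = \Delta(\bar\rho_n,\rho^{\otimes n})$, with $\eps := \lim_n \eps_n < 1$. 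The key estimate is a continuity/lower-semicontinuity bound: I expect to use something like the fact that relative entropy behaves well under the replacement of one argument by a nearby operator, but the cleanest route is a direct one — split $\rho^{\otimes n} = \bar\rho_n + (\rho^{\otimes n} - \bar\rho_n)$ and bound the contribution of the difference using $\Delta$ together with a bound on $\|\log_2\sigma^{\otimes n}\|$ type quantities, which grow only linearly in $n$. Combined with a lower bound on $H(\rho^{\otimes n}) = nH(\rho)$, this should give $D(\rho^{\otimes n}\|\sigma^{\otimes n}) \leq sn + o(n)$ once one divides by $(1-\eps_n)$-type factors; dividing by $n$ and letting $n\to\infty$ yields $D(\rho\|\sigma)\leq s$.

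The support condition requires care and I would handle it first: if $\mathrm{supp}(\rho)\not\subseteq\mathrm{supp}(\sigma)$, then $D(\rho\|\sigma)=\infty$ and I must show the hypotheses are then unsatisfiable. But $\bar\rho_n \leq 2^{sn}\sigma^{\otimes n}$ forces $\mathrm{supp}(\bar\rho_n)\subseteq\mathrm{supp}(\sigma^{\otimes n}) = \mathrm{supp}(\sigma)^{\otimes n}$, so $\bar\rho_n$ lives in a subspace; meanwhile $\rho^{\otimes n}$ has a non-trivial component orthogonal to that subspace (of ``weight'' $1 - \tr(P\rho)^n$ where $P$ is the projector onto $\mathrm{supp}(\sigma)$), and $\tr(P\rho) < 1$ makes this weight tend to $1$. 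Testing $\Delta$ against $Q = \id - P^{\otimes n}$ in formula~\eqref{eq_tracedistancemaximisation} then shows $\Delta(\bar\rho_n,\rho^{\otimes n}) \geq 1 - \tr(P\rho)^n \to 1$, contradicting the second hypothesis. So we may assume the supports are nested and everything above is finite.

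The main obstacle I anticipate is getting the error term from the trace-distance closeness to be genuinely $o(n)$ rather than $O(n)$: a crude bound on $|\tr((\rho^{\otimes n} - \bar\rho_n)\log_2\sigma^{\otimes n})|$ via H\"older gives $\eps_n \cdot \|\log_2 \sigma^{\otimes n}\|_\infty$, and $\|\log_2\sigma^{\otimes n}\|_\infty$ scales like $n$, which would only give $D(\rho\|\sigma) \leq s + (\text{something involving }\eps)$ — not quite enough. The fix is presumably to work with $\frac{1}{1-\eps_n}$ prefactors coming from normalising the positive part of $\rho^{\otimes n} - \bar\rho_n$ appropriately, or to use a smoothed/one-shot min-relative-entropy quantity (as the section title ``one-shot entropy measures'' suggests) whose AEP limit is exactly $D(\rho\|\sigma)$ with no residual $\eps$-dependence once $\eps < 1$ is fixed and $n\to\infty$. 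In other words, the right framing is: hypothesis~\eqref{eq_limitdistancesmallerone} says the smoothed max-relative-entropy $D_{\max}^{\eps_n}(\bar\rho_n\|\sigma^{\otimes n}) \leq sn$ for smoothing radius tending to a constant $<1$, and then the quantum AEP for $D_{\max}$ gives $D(\rho\|\sigma) = \lim_n \frac{1}{n}D_{\max}^{\eps}(\rho^{\otimes n}\|\sigma^{\otimes n}) \leq s$. I would structure the proof around that chain, citing the AEP from the one-shot literature referenced in the excerpt.
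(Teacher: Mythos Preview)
Your final approach --- bound a one-shot relative entropy of $\rho^{\otimes n}$ against $\sigma^{\otimes n}$ using $\bar\rho_n$ as a witness, then invoke the AEP --- is correct and is exactly the paper's strategy. The only difference is the choice of one-shot quantity: the paper uses the hypothesis-testing relative entropy $D_H^\epsilon$ rather than $D_{\max}^\epsilon$. Lemma~\ref{lem_DHupperbound} gives $D_H^\epsilon(\rho^{\otimes n}\|\sigma^{\otimes n}) \leq sn - \log_2(1-c/\epsilon)$ directly from $\bar\rho_n \leq 2^{sn}\sigma^{\otimes n}$ and $\Delta(\bar\rho_n,\rho^{\otimes n}) \leq c < \epsilon$, with no assumption that $\tr(\bar\rho_n)\leq 1$ and no need to convert trace distance to purified distance; then Stein's lemma (Lemma~\ref{lem_QAEPequal}) yields $D(\rho\|\sigma) = \lim_n \frac{1}{n}D_H^\epsilon(\rho^{\otimes n}\|\sigma^{\otimes n}) \leq s$. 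Your $D_{\max}^\epsilon$ route works too, but you would need the \emph{lower-bound} direction of its AEP (the paper only states the upper bound, Lemma~\ref{lem_DAEP}), and you would have to normalise $\bar\rho_n$ to trace $\leq 1$ and pass from trace to purified distance via Fuchs--van de Graaf --- all doable, but $D_H^\epsilon$ is the cleaner fit for these particular hypotheses. Your support argument is correct, as is your diagnosis that the direct operator-log approach leaves an uncontrolled $O(n)$ error term.
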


\begin{proof}
  By assumption, there exist $c<1$ and $n_0 \in \mathbb{N}$ such that 
   \begin{align}
    \Delta(\bar{\rho}_n, \rho^{\otimes n}) \leq c
  \end{align}
  holds for all $n \geq n_0$.   Let $\epsilon \in (c,1)$. By Lemma~\ref{lem_DHupperbound} we have
    \begin{align}
    D_H^\epsilon(\rho^{\otimes n} \| \sigma^{\otimes n}) \leq  s n -\log_2 (1-c/\epsilon)   \ ,
    \end{align}
  where $D_H^\epsilon(\cdot \| \cdot)$ is the generalised relative entropy defined in Appendix~\ref{app_relativeentropy}. Setting $C = 1- c/\epsilon > 0$ we conclude that
  \begin{align}
    \lim_{n \to \infty} \frac{1}{n} D_H^\epsilon(\rho^{\otimes n} \| \sigma^{\otimes n}) 
  \leq \lim_{n \to \infty} \bigl(s + \frac{1}{n} \log_2 \frac{1}{C} \bigr) = s \ .
  \end{align}
  The claim then follows from the Asymptotic Equipartition Property of $D_H^\epsilon(\cdot \| \cdot)$ (Lemma~\ref{lem_QAEPequal}).
\end{proof}

The following lemma is in some sense a converse of Lemma~\ref{lem_boundfromsequence}. 

\begin{lemma} \label{lem_sequencefrombound}
  Let $\rho$ be a density operator, let $\sigma$ be non-negative operator, and let $s >     D(\rho \| \sigma)$.  Then there exists $\kappa > 0$ and a sequence of non-negative operators $\{\bar{\rho}_n\}_{n \in \mathbb{N}}$ with $\tr(\bar{\rho}_n) \leq 1$ such that  
  \begin{align} \label{eq_rhobarconditions}
     \bar{\rho}_n \leq 2^{s n} \sigma^{\otimes n}  \quad  (\forall n \in \mathbb{N})
 \qquad \text{and} \qquad  \lim_{n \to \infty} 2^{n \kappa} \Delta(\bar{\rho}_n, \rho^{\otimes n}) = 0 \ .
  \end{align}
\end{lemma}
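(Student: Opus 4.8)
The plan is to construct $\bar\rho_n$ from a one-shot hypothesis-testing-type argument applied to the i.i.d.\ state $\rho^{\otimes n}$, and then to control the trace distance using a quantitative version of the Asymptotic Equipartition Property. Concretely, fix $s' $ with $D(\rho\|\sigma) < s' < s$, and take $Q_n$ to be the optimal test realising $D_H^{\eps_n}(\rho^{\otimes n}\|\sigma^{\otimes n})$ for a suitable sequence $\eps_n \to 0$; by the AEP (Lemma~\ref{lem_QAEPequal}) and its convergence rate, one can choose $\eps_n = 2^{-\delta n}$ for some small $\delta>0$ while still having $\frac{1}{n} D_H^{\eps_n}(\rho^{\otimes n}\|\sigma^{\otimes n}) \le s'$ for all large $n$. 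The natural candidate is then $\bar\rho_n = \sqrt{Q_n}\,\rho^{\otimes n}\sqrt{Q_n}$, or even more simply $\bar\rho_n$ defined via the spectral projector of $\rho^{\otimes n} - 2^{sn}\sigma^{\otimes n}$ onto its negative part (the "pinched" piece of $\rho^{\otimes n}$ living below the threshold $2^{sn}\sigma^{\otimes n}$). With this choice the operator inequality $\bar\rho_n \le 2^{sn}\sigma^{\otimes n}$ holds essentially by construction, and $\tr(\bar\rho_n)\le 1$ since $\bar\rho_n \le \rho^{\otimes n}$.

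The heart of the matter is the second condition: showing $\Delta(\bar\rho_n,\rho^{\otimes n})$ decays exponentially, i.e.\ that $2^{n\kappa}\Delta(\bar\rho_n,\rho^{\otimes n})\to 0$ for some $\kappa>0$. Here I would use \eqref{eq_tracedistancepositive}: $\Delta(\bar\rho_n,\rho^{\otimes n}) = \max[\tr(Y_n^+),\tr(Y_n^-)]$, and since $\bar\rho_n \le \rho^{\otimes n}$ we have $Y_n^- = 0$, so $\Delta(\bar\rho_n,\rho^{\otimes n}) = \tr(\rho^{\otimes n} - \bar\rho_n)$, which is exactly the "error probability" of the associated test. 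The task is therefore to produce a test whose type-I error is exponentially small while the threshold $2^{sn}$ on $\sigma^{\otimes n}$ still suffices. Since $s > D(\rho\|\sigma)$ strictly, this is precisely the regime of a Chernoff/Hoeffding-type exponent: the optimal exponent for type-I error subject to the rate $s$ on the null hypothesis is positive whenever $s > D(\rho\|\sigma)$. Invoking the quantum Stein/Hoeffding bound (or a direct Chernoff-bound computation on $\rho^{\otimes n}$ versus $2^{sn}\sigma^{\otimes n}$, exploiting that the relevant operators are tensor powers so their eigenvalues concentrate) yields $\tr(\rho^{\otimes n}-\bar\rho_n) \le 2^{-n\mu}$ for some $\mu>0$ depending on the gap $s - D(\rho\|\sigma)$; choosing any $\kappa\in(0,\mu)$ finishes the proof.

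The main obstacle I anticipate is making the exponential-decay estimate clean while keeping the operator inequality $\bar\rho_n \le 2^{sn}\sigma^{\otimes n}$ on the nose for \emph{every} $n$ rather than only asymptotically — a naive Chernoff argument controls the trace but may overshoot the operator bound by subexponential factors. The fix is to decouple the two requirements: define $\bar\rho_n$ directly as $\{\rho^{\otimes n} \le 2^{sn}\sigma^{\otimes n}\}\,\rho^{\otimes n}\,\{\rho^{\otimes n} \le 2^{sn}\sigma^{\otimes n}\}$ using the spectral projector $\{\cdot\le\cdot\}$ of $\rho^{\otimes n}-2^{sn}\sigma^{\otimes n}$ (this enforces the operator inequality exactly and gives $\tr\bar\rho_n\le1$), and then bound $\tr(\rho^{\otimes n}-\bar\rho_n) = \tr\big(\{\rho^{\otimes n} > 2^{sn}\sigma^{\otimes n}\}\rho^{\otimes n}\big)$ by a standard operator Markov / Chernoff argument: for any $t\in(0,1)$, $\{\rho^{\otimes n} > 2^{sn}\sigma^{\otimes n}\} \le 2^{-tsn}(\rho^{\otimes n})^{t}(\sigma^{\otimes n})^{-t}\cdot(\text{something positive})$, leading to a bound of the form $2^{-tsn}\tr\big((\rho)^{1+t}(\sigma)^{-t}\big)^n$, whose exponent is negative for small $t>0$ exactly because $\frac{d}{dt}\big|_{t=0}$ of the exponent equals $D(\rho\|\sigma) - s < 0$. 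This is the one genuinely quantitative estimate; everything else is bookkeeping.
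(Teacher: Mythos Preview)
Your strategy is genuinely different from the paper's and, at the architectural level, sound: define $\bar\rho_n$ via the Neyman--Pearson projector $P_n=\{\rho^{\otimes n}\le 2^{sn}\sigma^{\otimes n}\}$ so that the operator inequality and the trace bound hold by construction, and then show $\tr\bigl((\id-P_n)\rho^{\otimes n}\bigr)$ decays exponentially because $s>D(\rho\|\sigma)$. The paper instead invokes the smooth max-relative entropy $D_{\max}^{\eps}$ and its Asymptotic Equipartition Property (Lemma~\ref{lem_DAEP}): choosing $\eps_n$ so that $D_{\max}^{\eps_n}(\rho^{\otimes n}\|\sigma^{\otimes n})<ns$, the very definition of $D_{\max}^{\eps}$ produces a $\bar\rho_n\le 2^{sn}\sigma^{\otimes n}$ with $\sqrt{1-F(\bar\rho_n,\rho^{\otimes n})^2}\le\eps_n$, and the explicit second-order term in Lemma~\ref{lem_DAEP} immediately gives $\eps_n=\sqrt{2}\cdot 2^{-\kappa' n/2}$ for an explicit $\kappa'>0$. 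All the hard analysis is thereby outsourced to a single cited result, with no test to construct and no Chernoff estimate to carry out.

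Your route, by contrast, has two concrete gaps as written. First, the step ``since $\bar\rho_n\le\rho^{\otimes n}$ we have $Y_n^-=0$, so $\Delta(\bar\rho_n,\rho^{\otimes n})=\tr(\rho^{\otimes n}-\bar\rho_n)$'' is false: $P_n$ is a spectral projector of $\rho^{\otimes n}-2^{sn}\sigma^{\otimes n}$ and does \emph{not} commute with $\rho^{\otimes n}$, so $P_n\rho^{\otimes n}P_n\le\rho^{\otimes n}$ need not hold. This is repairable via the gentle measurement lemma, which gives $\|\rho^{\otimes n}-P_n\rho^{\otimes n}P_n\|_1\le 2\sqrt{\tr((\id-P_n)\rho^{\otimes n})}$, at the cost of a square root in the exponent. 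Second, the ``operator Markov/Chernoff'' inequality you sketch, $\{\rho^{\otimes n}>2^{sn}\sigma^{\otimes n}\}\le 2^{-tsn}(\rho^{\otimes n})^t(\sigma^{\otimes n})^{-t}\cdot(\text{positive})$, is not a valid operator inequality when $\rho$ and $\sigma$ do not commute (the right-hand side is not even Hermitian as written). The \emph{trace} bound you want, $\tr\bigl(\{\rho^{\otimes n}>2^{sn}\sigma^{\otimes n}\}\rho^{\otimes n}\bigr)\le 2^{-n\mu}$ for some $\mu>0$ when $s>D(\rho\|\sigma)$, is true, but obtaining it requires either pinching by $\sigma^{\otimes n}$ (reducing to the commuting case at a polynomial cost and then using data processing for the Petz--R\'enyi divergence $D_{1+t}$), or citing the quantum Hoeffding bound directly. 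Your derivative computation $\frac{d}{dt}\big|_{t=0}[-ts+\log_2\tr(\rho^{1+t}\sigma^{-t})]=D(\rho\|\sigma)-s<0$ is correct and is exactly why the exponent is positive for small $t$, but the inequality it is meant to justify needs one of these non-commutative tools to be made rigorous.
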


\begin{proof}
The proof uses the smooth relative max-entropy $D_{\max}^\epsilon(\cdot \| \cdot)$ defined in Appendix~\ref{app_relativeentropy}. 
The Asymptotic Equipartition Property for this entropy measure (Lemma~\ref{lem_DAEP}) asserts that there exists $n_0 \in \mathbb{N}$ such that for any $n \geq n_0$
\begin{align}
  D^{\epsilon_n}_{\max}(\rho^{\otimes n} \| \sigma^{\otimes n}) < n s
\end{align}
for $\epsilon_n > 0$ chosen such that
  \begin{align} \label{eq_epsilonndef}
    D(\rho \| \sigma) + c \sqrt{\frac{\log_2(2/\eps_n^2)}{n}} = s \ ,
  \end{align}
  where $c$ is independent of~$n$. Inserting this into the definition of $D_{\max}^\epsilon(\cdot \| \cdot)$ we find that there exists a non-negative operator $\bar{\rho}_n$ with $\tr(\bar{\rho}_n) \leq 1$ such that 
  \begin{align}
    \bar{\rho}_n \leq 2^{s n} \sigma^{\otimes n}
  \end{align}
  and
  \begin{align} \label{eq_epsilonnbound}
     \sqrt{1 - F(\bar{\rho}_n, \rho^{\otimes n})^2} \leq \epsilon_n \ .
  \end{align}
   Eq.~\eqref{eq_epsilonndef} may be rewritten as
  \begin{align}
    \epsilon_n = \sqrt{2} \, 2^{-\kappa' n/2}
    \qquad \text{with} \quad
    \kappa' = \left(\frac{s-D(\rho\|\sigma)}{c}\right)^2 \ .
  \end{align}
  Inserting this in~\eqref{eq_epsilonnbound} and using Lemma~\ref{lem_tracedistancefidelity}, we conclude that
  \begin{align}
    \Delta(\bar{\rho}_n, \rho^{\otimes n}) \leq \sqrt{2} \, 2^{-\kappa' n/2} \ .
  \end{align}
  This proves~\eqref{eq_rhobarconditions} for any $\kappa < \kappa'/2$. (Note that for $n < n_0$ we may simply set $\bar{\rho}_n = 0$ so that the left hand side of~\eqref{eq_rhobarconditions}  holds for all $n \in \mathbb{N}$.)
  \end{proof}
  
  The next lemma asserts that the relative entropy, evaluated for $n$-fold product states, has the following stability property: if one acts with the same trace non-increasing map on the two arguments then the relative entropy cannot substantially increase. This property is used in the proof of Theorem~\ref{thm_maininequality}  (but see also Remark~\ref{rem_Dmap}). 
  
  \begin{lemma} \label{lem_Dmapping}
    Let $\rho$ be a density operator, let $\sigma$ be a non-negative operator on the same space, and let $\{\cW_n\}_{n \in \mathbb{N}}$ be a sequence of trace non-increasing completely positive maps on the $n$-fold tensor product of this space. If $\tr(\cW_n(\rho^{\otimes n}))$  decreases less than exponentially in $n$, i.e., 
    \begin{align}
      \liminf_{n \to \infty} e^{\xi n} \tr\bigl(\cW_n(\rho^{\otimes n})\bigr) > 0
    \end{align}
    for any $\xi > 0$, then
    \begin{align}
      \limsup_{n \to \infty} \frac{1}{n} D\bigl(\cW_n(\rho^{\otimes n}) \| \cW_n(\sigma^{\otimes n})\bigr) \leq D(\rho \| \sigma) \ .
    \end{align}
  \end{lemma}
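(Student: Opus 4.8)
If $D(\rho\|\sigma) = \infty$ there is nothing to prove, so I would assume $D(\rho\|\sigma) < \infty$; then $\mathrm{supp}\,\rho\subseteq\mathrm{supp}\,\sigma$ and $\rho\leq\lambda\sigma$ for some finite $\lambda\geq 0$. Fix an arbitrary $s > D(\rho\|\sigma)$; the plan is to show $\tfrac1n D\bigl(\cW_n(\rho^{\otimes n})\,\big\|\,\cW_n(\sigma^{\otimes n})\bigr)\leq s$ for all sufficiently large $n$, after which letting $s\downarrow D(\rho\|\sigma)$ finishes the proof. The two tools are Lemma~\ref{lem_sequencefrombound}, which converts the slack $s > D(\rho\|\sigma)$ into a sequence of operators satisfying the good inequality $\bar\rho_N\leq 2^{sN}\sigma^{\otimes N}$ and exponentially close to $\rho^{\otimes N}$, and Lemma~\ref{lem_boundfromsequence}, which runs this implication backwards; the bridge between them is the monotonicity of the trace distance under trace non-increasing completely positive maps, Eq.~\eqref{eq_tracedistancemonotone}.

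First I would apply Lemma~\ref{lem_sequencefrombound} to $\rho,\sigma$ and $s$ to obtain a constant $\kappa>0$ and operators $\{\bar\rho_N\}_{N\in\mathbb N}$ with $\tr\bar\rho_N\leq 1$, $\bar\rho_N\leq 2^{sN}\sigma^{\otimes N}$ for all $N$, and $2^{N\kappa}\Delta(\bar\rho_N,\rho^{\otimes N})\to 0$. Now fix a large $n$, write $R_n=\cW_n(\rho^{\otimes n})$, $S_n=\cW_n(\sigma^{\otimes n})$, $p_n=\tr R_n$ and $\hat R_n=R_n/p_n$ (a density operator with $\mathrm{supp}\,\hat R_n\subseteq\mathrm{supp}\,S_n$, since $R_n\leq\lambda^n S_n$), and --- this is the key step --- feed the index-$nm$ member of the sequence through the $m$-fold tensor power of $\cW_n$ by setting
\begin{align}
  \bar\tau_m \;:=\; p_n^{-m}\,\cW_n^{\otimes m}\!\bigl(\bar\rho_{nm}\bigr) \qquad (m\in\mathbb N)\ .
\end{align}
Regrouping $\cH^{\otimes nm}$ into $m$ blocks of $n$ copies gives $\cW_n^{\otimes m}(\sigma^{\otimes nm})=S_n^{\otimes m}$ and $\cW_n^{\otimes m}(\rho^{\otimes nm})=R_n^{\otimes m}$, and $\cW_n^{\otimes m}$ is again trace non-increasing and completely positive (its adjoint $(\cW_n^*)^{\otimes m}$ is sub-unital); hence one reads off at once that $\bar\tau_m\leq 2^{(sn+\log_2(1/p_n))m}S_n^{\otimes m}$ for every $m$, and, using Eq.~\eqref{eq_tracedistancemonotone}, that for all large $m$
\begin{align}
  \Delta\bigl(\bar\tau_m,\hat R_n^{\otimes m}\bigr)
  \;=\; p_n^{-m}\,\Delta\bigl(\cW_n^{\otimes m}(\bar\rho_{nm}),\cW_n^{\otimes m}(\rho^{\otimes nm})\bigr)
  \;\leq\; p_n^{-m}\,\Delta\bigl(\bar\rho_{nm},\rho^{\otimes nm}\bigr)
  \;\leq\; \bigl(2^{-n\kappa}/p_n\bigr)^{m}\ .
\end{align}

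Here the hypothesis enters. Taking $\xi=\tfrac12\kappa\ln 2$ in $\liminf_{n \to \infty} e^{\xi n}\tr(\cW_n(\rho^{\otimes n}))>0$ gives $p_n\geq c\,2^{-\kappa n/2}$ for some $c>0$ and all large $n$, hence $2^{n\kappa}p_n\to\infty$, so for every sufficiently large $n$ we have $2^{-n\kappa}/p_n<1$ and therefore $\lim_{m\to\infty}\Delta(\bar\tau_m,\hat R_n^{\otimes m})=0<1$. For such $n$, Lemma~\ref{lem_boundfromsequence}, applied to the density operator $\hat R_n$, the non-negative operator $S_n$ and the sequence $\{\bar\tau_m\}_m$, yields $D(\hat R_n\|S_n)\leq sn+\log_2(1/p_n)$. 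Since $R_n=p_n\hat R_n$ we have $D(R_n\|S_n)=\log_2 p_n+D(\hat R_n\|S_n)$, so $D\bigl(\cW_n(\rho^{\otimes n})\,\big\|\,\cW_n(\sigma^{\otimes n})\bigr)\leq \log_2 p_n + sn + \log_2(1/p_n) = sn$ for all large $n$, which is exactly what was wanted.

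I expect the only real difficulty to be conceptual, namely finding this "blocking" device. A naive approach --- take the Lemma~\ref{lem_sequencefrombound} sequence for the pair $(\rho^{\otimes n},\sigma^{\otimes n})$ itself, push it through $\cW_n$, and bound the relative entropy of the images by a continuity estimate --- fails, because $\cW_n$ can produce output operators with arbitrarily small nonzero eigenvalues, against which an exponentially small trace-distance perturbation does not translate into a small relative-entropy error. Tensoring on the \emph{inner} pair $(\rho,\sigma)$ instead keeps the decay exponent $\kappa$ a fixed constant, independent of $n$, which is precisely what allows it to beat the subexponential decay of $\tr(\cW_n(\rho^{\otimes n}))$; everything else is routine bookkeeping with the (sub)normalisations and with the identity $D(cX\|Y)=\log_2 c+D(X\|Y)$.
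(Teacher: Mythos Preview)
Your proof is correct and follows essentially the same route as the paper's. Both arguments invoke Lemma~\ref{lem_sequencefrombound} to obtain the sequence $\{\bar\rho_N\}$, use the same ``blocking'' device $N=nm$ and push $\bar\rho_{nm}$ through $\cW_n^{\otimes m}$, control the resulting trace distance via Eq.~\eqref{eq_tracedistancemonotone} together with the subexponential-decay hypothesis, and then close with Lemma~\ref{lem_boundfromsequence}. The only cosmetic difference is that the paper rescales \emph{both} arguments of the relative entropy by $r_n=1/p_n$ and uses $D(r_n X\|r_n Y)=D(X\|Y)$, whereas you normalise only the first argument to $\hat R_n$ and carry the additive $\log_2 p_n$ term explicitly; the two bookkeeping choices are equivalent.
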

  
  \begin{proof}
    Let $\delta > 0$. Lemma~\ref{lem_sequencefrombound} tells us that there exists $\kappa > 0$ and a sequence of non-negative operators $\{\bar{\rho}_m\}_{m \in \mathbb{N}}$ such  that
  \begin{align} \label{eq_CgivenABinequality}
    \bar{\rho}_m \leq 2^{m (D(\rho \| \sigma) +\delta)} \sigma^{\otimes m} 
  \end{align}
  and
  \begin{align}\label{eq_rhoABCndistanceboundstrong}
      \lim_{m \to \infty} e^{\kappa m} \Delta( \bar{\rho}_m, \rho^{\otimes m}) = 0 \ .
  \end{align}
  To abbreviate notation, we define  $r_n = 1/\tr(\cW_n(\rho^{\otimes n}))$.  Note that, by assumption, $r_n$ grows less than exponentially in $n$, so that  
  \begin{align} \label{eq_rnzero}
      r_n < e^{\kappa n}  
  \end{align}
 holds  for $n$ sufficiently large. 
    
  Let now $k,n \in \mathbb{N}$ and set $m = k n$. Applying $\cW_n$ and multiplying with the factor $r_n$ on the two sides of~\eqref{eq_CgivenABinequality} yields
  \begin{align}
   r_n^k \cW_n^{\otimes k}(\bar{\rho}_{n k})
  & \leq 2^{k n (D(\rho \| \sigma) + \delta)} \bigl(r_n \cW_n(\sigma^{\otimes n})\bigr)^{\otimes k}  \ .
  \end{align}
   As $\cW_n$ is trace non-increasing,  
    \begin{multline}\label{eq_rhoABCndistanceboundstrongp_2}
      \lim_{k \to \infty} \Delta\bigl(r_n^k \cW_n^{\otimes k}(\bar{\rho}_{n k}), \left(  r_n \cW_n(\rho^{\otimes n}) \right)^{\otimes k}\bigr)
      =     \lim_{k \to \infty} r_n^k \Delta\bigl(\cW_n^{\otimes k}(\bar{\rho}_{n k}),  \cW_n^{\otimes k}(\rho^{\otimes n k}) \bigr) \\
      \leq \lim_{k \to \infty} r_n^k \Delta( \bar{\rho}_{n k},  \rho^{\otimes n k} ) 
      \leq  \lim_{k \to \infty} e^{\kappa n k} \Delta( \bar{\rho}_{n k},  \rho^{\otimes n k} ) 
      = 0 \ ,
  \end{multline}
  where the  first inequality follows from the monotonicity property of the trace distance~\eqref{eq_tracedistancemonotone}, the second inequality follows from~\eqref{eq_rnzero}, and where the final equality follows from~\eqref{eq_rhoABCndistanceboundstrong}. We can now apply Lemma~\ref{lem_boundfromsequence} to the density operator $r_n \cW_n(\rho^{\otimes n})$ and the non-negative operator $r_n \cW_n(\sigma^{\otimes n})$, which gives
  \begin{align}
    D\bigl(r_n \cW_n(\rho^{\otimes n}) \| r_n \cW_n(\sigma^{\otimes n}) \bigr) \leq n\bigl(D(\rho \| \sigma) + \delta\bigr) \ .
  \end{align}
  Noting that multiplying both arguments of the relative entropy with the same factor leaves it unchanged we conclude
    \begin{align}
    \frac{1}{n} D\bigl(\cW_n(\rho^{\otimes n}) \| \cW_n(\sigma^{\otimes n}) \bigr) \leq D(\rho \| \sigma) + \delta \ .
  \end{align}  
  Taking the limit $n \to \infty$ and noting that $\delta > 0$ was arbitrary, the claim of the lemma follows. 
  \end{proof}

   \begin{remark} \label{rem_Dmap}
   Lemma~\ref{lem_Dmapping} will be used in one of the steps of the proof of Theorem~\ref{thm_maininequality}. We note that, alternatively, this step may also be based on the inequality (cf.\ Lemma~25 of~\cite{BCFST13})
  \begin{align} \label{eq_Dmonotone}
    (1-\epsilon) D\bigl(\cW(\rho) \| \cW(\sigma)\bigr) \leq D(\rho \| \sigma) + \epsilon \log_2(\tr(\sigma)/\epsilon) \ ,
  \end{align}
  which holds for any  density operator $\rho$, any non-negative operator $\sigma$,  any trace non-increasing completely positive map $\cW$, and $\epsilon = 1-\tr\bigl(\cW(\rho)\bigr)$ (see also Footnote~\ref{ftn_altproof}). However, Lemma~\ref{lem_Dmapping} provides a stronger stability condition for the relative entropy of product states (notably when $\epsilon \gg 0$) and may therefore be useful for generalisations of our results.
  
 \comment{\\ To prove~\eqref{eq_Dmonotone}, we define
  \begin{align}
   \bar{\cW} : \quad X \mapsto \cW(X) + \tr(X - \cW(X)) \omega
\end{align}
where $\omega$ is some density operator orthogonal to $\rho$ and $\sigma$. Because $\bar{\cW}$ is trace-preserving, we can use the monotonicity of the relative entropy under such maps to conclude that
  \begin{align}
    (1-\epsilon) D\bigl(\cW(\rho) \| \cW(\sigma)\bigr) + \epsilon D\bigl(\epsilon \omega \| \tr(\sigma - \cW(\sigma)) \omega\bigr)
    =  D\bigl(\bar{\cW}(\rho) \| \bar{\cW}(\sigma)\bigr) 
    \leq D(\rho \| \sigma) \ .
      \end{align}
  The inequality then follows from
   \begin{align}
      \epsilon D\bigl(\epsilon \omega \| \tr(\sigma - \cW(\sigma)) \omega\bigr)
    = \epsilon \log_2(\epsilon) - \epsilon \log_2(\tr(\sigma-\cW(\sigma)))
    \geq \epsilon \log_2(\epsilon) - \epsilon \log_2(\tr(\sigma))
   = - \epsilon \log_2(\tr(\sigma)/\epsilon) \ .
   \end{align}}
   \end{remark}
   
   As a corollary of Lemma~\ref{lem_Dmapping} we also obtain the well known \emph{Asymptotic Equipartition Property} (see, e.g., Chapter~3 of~\cite{CovTho05}).  We state it here explicitly as Lemma~\ref{lem_typicalsubspace} because we are going to use it within the proof of Theorem~\ref{thm_maininequality} and because it illustrates the use of Lemma~\ref{lem_Dmapping}. 
   
\begin{lemma} \label{lem_typicalsubspace} 
Let $\rho$ be a density operator. For any $n \in \mathbb{N}$ let $\rho^{\otimes n} = \sum_{s \in S_n} s \, \Pi_s$ where $S_n$ is the set of eigenvalues of $\rho^{\otimes n}$ and where $\Pi_s$, for $s \in S_n$, is the projector onto the corresponding eigenspace. Furthermore, for any $\delta > 0$, let $S_n^\delta$ be the subset of $S_n$ defined by
  \begin{align}
    S_n^\delta = \bigl\{s \in S_n : \, s \in [2^{-n(H(\rho) + \delta)}, 2^{-n(H(\rho) - \delta)}] \bigr\} \ .
  \end{align}
  Then $\lim_{n \to \infty} \sum_{s \in S_n^{\delta}} \tr(\Pi_s \rho^{\otimes n}) = 1$ and  the convergence is exponentially fast in $n$.
 \end{lemma}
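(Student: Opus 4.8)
The plan is to reduce the statement to two applications of Lemma~\ref{lem_Dmapping}, one controlling the eigenvalues of $\rho^{\otimes n}$ that are too large and one controlling those that are too small. For $n \in \mathbb{N}$ set $\Pi_n^+ = \sum_{s \in S_n :\, s > 2^{-n(H(\rho)-\delta)}} \Pi_s$ and $\Pi_n^- = \sum_{s \in S_n :\, s < 2^{-n(H(\rho)+\delta)}} \Pi_s$, and let $q_n^\pm = \tr(\Pi_n^\pm \rho^{\otimes n})$. Every eigenvalue of $\rho^{\otimes n}$ lies in exactly one of the three regions --- inside $[2^{-n(H(\rho)+\delta)}, 2^{-n(H(\rho)-\delta)}]$, above it, or below it --- so $\sum_{s \in S_n^\delta} \tr(\Pi_s \rho^{\otimes n}) = 1 - q_n^+ - q_n^-$, and it suffices to show that $q_n^+$ and $q_n^-$ each decay exponentially in $n$.

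Consider $q_n^+$ first. The key structural fact is that $q_n^+$ is \emph{supermultiplicative}: if an eigenvalue of $\rho^{\otimes n}$ exceeds $2^{-n(H(\rho)-\delta)}$ and one of $\rho^{\otimes m}$ exceeds $2^{-m(H(\rho)-\delta)}$, then their product exceeds $2^{-(n+m)(H(\rho)-\delta)}$, so $\Pi_{n+m}^+ \geq \Pi_n^+ \otimes \Pi_m^+$ and hence $q_{n+m}^+ \geq q_n^+ q_m^+$. By Fekete's lemma the limit $L^+ = \lim_{n} (q_n^+)^{1/n} = \sup_n (q_n^+)^{1/n}$ exists in $[0,1]$, and it suffices to show $L^+ < 1$, since then $q_n^+ \leq (L^+)^n$ for every $n$. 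Suppose instead $L^+ = 1$, i.e.\ $q_n^+$ decays subexponentially, so that $\liminf_n e^{\xi n} q_n^+ > 0$ for every $\xi > 0$. Then Lemma~\ref{lem_Dmapping}, applied to $\rho$, to the non-negative operator $\id$ on the single-copy space (for which $D(\rho \| \id) = -H(\rho)$), and to the trace non-increasing completely positive maps $\cW_n \colon X \mapsto \Pi_n^+ X \Pi_n^+$ (which satisfy $\tr(\cW_n(\rho^{\otimes n})) = q_n^+$), would give $\limsup_n \tfrac1n D\bigl(\cW_n(\rho^{\otimes n}) \| \cW_n(\id^{\otimes n})\bigr) \leq -H(\rho)$. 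But $\Pi_n^+$ commutes with $\rho^{\otimes n}$ and $\cW_n(\id^{\otimes n}) = \Pi_n^+$, so for every $n$ with $q_n^+ > 0$ a direct computation gives
\[
  D\bigl(\cW_n(\rho^{\otimes n}) \| \Pi_n^+\bigr) = \frac{1}{q_n^+} \sum_{s > 2^{-n(H(\rho)-\delta)}} (\dim \Pi_s)\, s \log_2 s > -n\bigl(H(\rho)-\delta\bigr) ,
\]
since $\log_2 s > -n(H(\rho)-\delta)$ throughout the sum; hence $\limsup_n \tfrac1n D(\cdots) \geq -H(\rho) + \delta > -H(\rho)$, a contradiction. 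Therefore $L^+ < 1$.

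For $q_n^-$ the argument is the same with $\id$ replaced by the non-negative operator $\sigma = \rho^2$ (for which $D(\rho \| \sigma) = H(\rho)$) and with $\cW_n \colon X \mapsto \Pi_n^- X \Pi_n^-$: one checks as before that $q_n^-$ is supermultiplicative, and that
\[
  D\bigl(\cW_n(\rho^{\otimes n}) \| \cW_n(\sigma^{\otimes n})\bigr) = \frac{1}{q_n^-} \sum_{s < 2^{-n(H(\rho)+\delta)}} (\dim \Pi_s)\, s\,(-\log_2 s) > n\bigl(H(\rho)+\delta\bigr) ,
\]
using $-\log_2 s > n(H(\rho)+\delta)$ throughout; this exceeds $n D(\rho \| \sigma)$ and so, by Lemma~\ref{lem_Dmapping}, forces $L^- := \lim_n (q_n^-)^{1/n} < 1$. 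Combining the two bounds, $1 - \sum_{s \in S_n^\delta} \tr(\Pi_s \rho^{\otimes n}) = q_n^+ + q_n^- \leq 2 c^n$ with $c = \max(L^+, L^-) < 1$, which gives the claim together with the asserted exponential rate.

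The part I expect to be the real obstacle is the rate. Lemma~\ref{lem_Dmapping} is asymptotic (it bounds $\limsup \tfrac1n D$), so on its own it only forces $q_n^\pm$ to decay subexponentially --- a priori even only along a subsequence --- and turning this into an exponential bound valid for all $n$ is exactly what the supermultiplicativity of $q_n^\pm$ plus Fekete's lemma are for. The remaining points are routine: the degenerate case $H(\rho)=0$ (where $q_n^\pm = 0$ identically), and the support conventions when $\rho$ is not of full rank (the eigenvalue $0$ lies in the range of $\Pi_n^-$ but contributes nothing to $q_n^-$, and each relative entropy above is evaluated on the support of its first argument).
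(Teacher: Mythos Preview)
Your argument is correct and follows the paper's overall strategy --- split into upper and lower tails and derive a contradiction from Lemma~\ref{lem_Dmapping} for each --- but differs in two interesting ways. For the lower tail you take $\sigma=\rho^2$, using $D(\rho\|\rho^2)=H(\rho)$; the paper instead passes to a purification $\rho_{DR}$ and works with $D(\rho_{DR}\|\id_D\otimes\rho_R)=H(\rho)$, which forces it to track an extra additive term $\tfrac{1}{n}\log_2\tr(\cW_n^-(\rho_{DR}^{\otimes n}))$ and argue it vanishes. Your choice stays on the original space and makes the computation a one-liner, so it is genuinely simpler. Second, the paper phrases the contradiction hypothesis as ``decreases less than exponentially fast'' and applies Lemma~\ref{lem_Dmapping} directly; your supermultiplicativity\,/\,Fekete step is what actually bridges ``not exponentially decaying'' to the $\liminf$ hypothesis of Lemma~\ref{lem_Dmapping}, so your treatment of the rate is the more careful one. (The Fekete step is unproblematic here because either $q_n^{\pm}\equiv 0$ --- which happens exactly when $p_{\max}\le 2^{-(H(\rho)-\delta)}$, respectively $p_{\min}\ge 2^{-(H(\rho)+\delta)}$ --- or else $q_n^{\pm}>0$ for \emph{every} $n$, since the extremal eigenvalue of $\rho^{\otimes n}$ is $p_{\max}^n$, respectively $p_{\min}^n$; so the $-\infty$ issue in Fekete never arises.)
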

 
 \begin{proof}
   Let $n \in \mathbb{N}$ and consider the projector $ \bar{\Pi}^+_n =\sum_{s \in S_n^{+}} \Pi_s$,  where $S_n^+ = \{s \in S_n : \, s  >  2^{-n(H(\rho)-\delta)}  \}$.  An explicit evaluation of the relative entropy shows that
   \begin{align}
     D(\bar{\Pi}^+_n \rho^{\otimes n} \bar{\Pi}^+_n\| \bar{\Pi}^+_n)  
     = \frac{\tr\bigl(\rho^{\otimes n} \bar{\Pi}_n^+ \log_2(\rho^{\otimes n}  \bar{\Pi}_n^+)\bigr)}{\tr(\rho^{\otimes n} \bar{\Pi}^+_n)}   
     \geq  \frac{ \tr\bigl(\rho^{\otimes n} \bar{\Pi}_n^+ \log_2(2^{-n(H(\rho)-\delta)} \bar{\Pi}_n^+)\bigr)  }{\tr(\rho^{\otimes n} \bar{\Pi}^+_n)}         
     =  - n \bigl(H(\rho) - \delta\bigr) \ . 
    \end{align}
  Using 
   \begin{align} \label{eq_entropyrelative}
     - H(\rho) = D(\rho \| \id)  
   \end{align}
   and defining the map $\cW^+_n : \, X \mapsto  \bar{\Pi}^+_n X  \bar{\Pi}^+_n$ we can rewrite this bound as
   \begin{align} \label{eq_Wndecrease}
     \frac{1}{n} D(\cW^+_n(\rho^{\otimes n}) \| \cW^+_n(\id^{\otimes n})) 
     \geq D(\rho \| \id) + \delta \ .
   \end{align}
   If we now assume, by contradiction, that  $\tr(\rho^{\otimes n} \bar{\Pi}^+_n) = \tr(\cW^+_n(\rho^{\otimes n}))$ decreases less than exponentially fast in~$n$, Lemma~\ref{lem_Dmapping} tells us that
   \begin{align}
     \limsup_{n \to \infty} \frac{1}{n} D(\cW^+_n(\rho^{\otimes n}) \| \cW^+_n(\id^{\otimes n})) \leq D(\rho \| \id)  \ .
   \end{align}
   This is obviously in contradiction to~\eqref{eq_Wndecrease} and thus proves that $\tr(\rho^{\otimes n} \bar{\Pi}^+_n)$ decreases exponentially fast in~$n$. 
   
   Similarly, we may consider the projector $ \bar{\Pi}^-_n =\sum_{s \in S_n^{-}} \Pi_s$ where $S_n^- = \{s \in S_n : \, s  <  2^{-n(H(\rho)+\delta)}  \}$.  Here, instead of~\eqref{eq_entropyrelative}, we use that for any purification $\rho_{D R}$ of $\rho_D = \rho$
   \begin{align} \label{eq_condentropyrelative}
     H(\rho) =  - H(D|R)_{\rho} = D(\rho_{D R} \| \id_D \otimes \rho_R) \ .
   \end{align} 
  We may choose the purification such that $(\bar{\Pi}^-_n \otimes \id_{R^n}) \rho_{D R}^{\otimes n} (\bar{\Pi}^-_n \otimes \id_{R^n})  = (\bar{\Pi}^-_n \otimes \bar{\Pi}^-_n) \rho_{D R}^{\otimes n} ( \bar{\Pi}^-_n \otimes \bar{\Pi}^-_n)$. Then 
   \begin{multline}
   D\bigl((\bar{\Pi}^-_n \otimes \id_{R^n}) \rho_{D R}^{\otimes n} (\bar{\Pi}^-_n \otimes \id_{R^n}) \big\| \bar{\Pi}^-_n \otimes \rho_R^{\otimes n}\bigr)
   = \log_2\tr(\rho_D^{\otimes n} \bar{\Pi}^-_n) - \frac{\tr\bigl( \rho_R^{\otimes n} \bar{\Pi}^-_n  \log_2(\rho_R^{\otimes n} \bar{\Pi}^-_n) \bigr)}{\tr(\rho_D^{\otimes n} \bar{\Pi}^-_n)} \\
   \geq  \log_2\tr(\rho_D^{\otimes n} \bar{\Pi}^-_n) - \frac{\tr\bigl( \rho_R^{\otimes n} \bar{\Pi}^-_n\log_2(2^{-n(H(\rho) + \delta)}\bar{\Pi}^-_n) \bigr)}{\tr(\rho_D^{\otimes n} \bar{\Pi}^-_n)} 
   =  \log_2\tr(\rho_D^{\otimes n} \bar{\Pi}^-_n) + n\bigl(H(\rho) + \delta\bigr) \ .
      \end{multline}
    Defining $\cW^-_n : \, X_{D R} \mapsto (\bar{\Pi}^-_n \otimes \id_R) X_{D R} (\bar{\Pi}^-_n \otimes \id_R)$ and inserting~\eqref{eq_condentropyrelative} we obtain the bound
   \begin{align} \label{eq_WDRbound}
    \frac{1}{n} D\bigl(\cW^-_n(\rho_{D R}^{\otimes n}) \big\| \cW^-_n (\id_{D}^{\otimes n} \otimes \rho_R^{\otimes n})\bigr)  \geq D(\rho_{D R} \| \id_D \otimes \rho_R) + \delta + \frac{\log_2 \tr\bigl(\cW^-_n(\rho_{D R}^{\otimes n})\bigr)}{n}  \ .
   \end{align}    
    Assume now, by contradiction, that $\tr(\rho_D^{\otimes n} \bar{\Pi}^-_n) = \tr(\cW^-_n(\rho_{D R}^{\otimes n}))$ decreases less then exponentially fast in~$n$. Then the last term of~\eqref{eq_WDRbound} approaches~$0$ in the limit of large~$n$. In particular, we have
   \begin{align}
     \limsup_{n \to \infty} \frac{1}{n} D\bigl(\cW^-_n(\rho_{D R}^{\otimes n}) \big\| \cW^-_n (\id_{D}^{\otimes n} \otimes \rho_R^{\otimes n})\bigr)  \geq D(\rho_{D R} \| \id_D \otimes \rho_R) + \delta \ ,
   \end{align}
 which contradicts the statement of  Lemma~\ref{lem_Dmapping}.  We have thus shown that both $\tr( \rho^{\otimes n} \bar{\Pi}^-_n)$ and  $\tr( \rho^{\otimes n} \bar{\Pi}^+_n )$ decrease exponentially fast in~$n$.  The claim of the lemma follows because $\sum_{s \in S_n^{\delta}}  \Pi_s = \id - \bar{\Pi}^+_n - \bar{\Pi}^-_n$.
 \end{proof}
 
 We conclude this section with a remark that is going to be useful for our proof of Theorem~\ref{thm_maininequality}. 
 
 \begin{remark} \label{rem_typicalsize}
  Considering the decomposition $\rho = \sum_{r \in R} r  \pi_r$, it is easy to see that all eigenvalues of $\rho^{\otimes n}$ have the form $\prod_{r \in R} r^{n_r}$, where  $(n_r)_{r \in R}$ are partitions of $n$, i.e., elements from the set 
  \begin{align}
  Q_n = \bigl\{(n_r)_{r \in R},  \, n_r \in \mathbb{N}_0, \, \sum_{r \in R} n_r = n \bigr\} \ .
\end{align}
Hence, the  set $S_n$ of eigenvalues of $\rho^{\otimes n}$ used within Lemma~\ref{lem_typicalsubspace} has size at most $|S_n| \leq |Q_n|$. Since $|Q_n| = \binom{n+|R|-1}{n} \leq (n+1)^{|R|}$, where $|R| \leq \mathrm{rank}(\rho)$ is the number of different eigenvalues of~$\rho$, we can upper bound the size of $S_n$ by
\begin{align}
  | S_n | \leq (n+1)^{\mathrm{rank}(\rho)}  \ .
\end{align}
\end{remark}

\section{Generalised de Finetti reduction} \label{sec_deFinetti}

The main result of this section, stated as Lemma~\ref{lem_postselection}, is motivated by a variant of the method of \emph{de Finetti reductions} proposed in~\cite{CKR09}. (This variant is also known as \emph{postselection technique}; we refer to~\cite{Renner10} for a not too technical presentation.) De Finetti reductions are generally used to study states on $n$-fold product systems $S^{\otimes n}$ that are invariant under permutations of the subsystems~\cite{Ren07,CKMR07}. More precisely, the idea is to reduce the analysis of any  density operator $\rho_{S^n}$ in the symmetric subspace $\Sym^n(S)$ of $S^{\otimes n}$ to the | generally simpler | analysis of states of the form $\sigma_{S}^{\otimes n}$, where $\sigma_S$ is pure. We extend this method to the case where $S = D \otimes E$ is a bipartite space and where the marginal of $\rho_{S^n} = \rho_{D^n E^n}$ on $D^{\otimes n}$ is known to have the form 
\begin{align} \label{eq_marginalcondition}
  \tr_{E^n}(\rho_{D^n E^n}) = \rho_{D^n} =  \sigma_{D}^{\otimes n} 
\end{align}
for some given state $\sigma_D$ on $D$. Lemma~\ref{lem_postselection} implies that, in this case, the analysis can be reduced to states of the form $\sigma_{D E}^{\otimes n}$, where $\sigma_{D E}$ is a purification of $\sigma_D$. (We note that a similar extension has been proposed earlier for another variant of the de Finetti reduction method; see Remark~4.3.3 of~\cite{Renner05}.)  Lemma~\ref{lem_postselection} will play a central role for the derivation of the claims of Section~\ref{sec_fidelity} below. Its proof uses  concepts from representation theory, which are presented in Appendix~\ref{app_representation}.

\begin{lemma} \label{lem_postselection}
  Let $D$ and $E$ be Hilbert spaces and let $\sigma_{D}$ be a non-negative operator on $D$. Then there exists a probability measure $\mathrm{d}\phi$ on the set of purifications $\proj{\phi}_{D E}$ of~$\sigma_{D}$ such that  
    \begin{align}
    \rho_{D^n E^n} \leq (n+1)^{d^2-1} \int \proj{\phi}_{D E}^{\otimes n} \mathrm{d} \phi 
  \end{align}
  holds for any $n \in \mathbb{N}$, any permutation-invariant purification $\rho_{D^n E^n}$ of $\sigma_D^{\otimes n}$, and $d = \max[\dim(D), \dim(E)]$. 
   \end{lemma}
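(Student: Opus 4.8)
The plan is to exploit Schur–Weyl duality to decompose the symmetric subspace $\Sym^n(D\otimes E)$ in a way that is compatible with the ``fixed marginal'' constraint \eqref{eq_marginalcondition}. First I would recall that any permutation-invariant purification $\rho_{D^nE^n}$ of $\sigma_D^{\otimes n}$ is a density operator supported on $\Sym^n(D\otimes E)$, so it suffices to bound the operator $\rho_{D^nE^n}$ by a multiple of a ``de Finetti'' operator of the form $\int \proj{\phi}_{DE}^{\otimes n}\,\mathrm d\phi$. For the \emph{unconstrained} symmetric subspace the classical identity (see \cite{CKMR07,Ren07}) states that $\int \proj{\phi}_{DE}^{\otimes n}\,\mathrm d\phi$, with $\mathrm d\phi$ the Haar-induced measure on pure states, equals $\binom{n+d^2-1}{n}^{-1}\,P_{\Sym^n(DE)}$, which already gives a bound with the right $(n+1)^{d^2-1}$-type prefactor once one also controls that $\rho_{D^nE^n}\le P_{\Sym^n(DE)}$. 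The point of the lemma, however, is that the measure $\mathrm d\phi$ may be supported on purifications \emph{of the specific state $\sigma_D$}, and this is where the marginal hypothesis enters.

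The key step is to analyze how the constraint $\tr_{E^n}\rho_{D^nE^n}=\sigma_D^{\otimes n}$ cuts down the symmetric subspace. I would decompose $D^{\otimes n}$ according to the eigenspaces of $\sigma_D^{\otimes n}$, or more invariantly use the isotypic decomposition $D^{\otimes n}\cong\bigoplus_\lambda \mathcal{Q}_\lambda^D\otimes\mathcal{P}_\lambda$ under the commuting actions of $U(d_D)$ and $S_n$ (Schur–Weyl), and correspondingly for $(D\otimes E)^{\otimes n}$. Since $\rho_{D^nE^n}$ is a purification of $\sigma_D^{\otimes n}$ and is permutation invariant, its support must lie in the span of tensor factors that are consistent with $\sigma_D^{\otimes n}$'s block structure; concretely, writing $\sigma_D=\sum_r r\,\pi_r$, every vector in the support of $\rho_{D^nE^n}$ decomposes over the ``type classes'' indexed by partitions $(n_r)_{r\in R}$ as in Remark~\ref{rem_typicalsize}. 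Within each type class one can write the reduced vector as a controlled superposition and absorb the $E$-part into a purification $\ket{\phi}_{DE}$ of $\sigma_D$; the number of such type classes, and the dimension of the relevant ``multiplicity'' spaces, is polynomial in $n$ with exponent bounded by $d^2-1$, which produces the claimed prefactor $(n+1)^{d^2-1}$.

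Concretely, I would argue: (i) there is a measure $\mathrm d\phi$ on purifications of $\sigma_D$ such that $\int\proj{\phi}_{DE}^{\otimes n}\,\mathrm d\phi$ is, up to the polynomial factor, at least the projector $P_n$ onto the subspace $\mathcal{S}_n\subseteq\Sym^n(DE)$ spanned by vectors of the form $\ket{\phi}_{DE}^{\otimes n}$ with $\ket{\phi}_{DE}$ a purification of $\sigma_D$; (ii) every permutation-invariant purification of $\sigma_D^{\otimes n}$ is supported on $\mathcal{S}_n$, hence $\rho_{D^nE^n}\le P_n$; and (iii) combine (i) and (ii). Step (i) is a Schur-type ``averaging'' computation: the group $U(\mathcal{N}_{\sigma_D})$ of isometries fixing $\sigma_D$ (i.e.\ acting on the purifying degrees of freedom) acts irreducibly enough on $\mathcal{S}_n$ that averaging over it gives $P_n$ up to the stated factor, with the exponent governed by the number of parameters in the relevant representation, which is at most $d^2-1$.

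I expect the main obstacle to be step (i): carefully identifying the right group action on the set of purifications, computing the dimension of the invariant subspace $\mathcal{S}_n$ (or equivalently the number of irreducible components and their multiplicities), and verifying that this dimension is bounded by $(n+1)^{d^2-1}$ rather than by a larger polynomial. This requires the representation-theoretic input deferred to Appendix~\ref{app_representation}; in particular one needs to track how the Schur–Weyl multiplicities for $(D\otimes E)^{\otimes n}$ restrict once the $D$-marginal is pinned, and to check that the purifications of $\sigma_D$ sweep out exactly the needed subspace. The monotonicity/marginal argument in step (ii) should be comparatively routine, following the standard symmetric-subspace reasoning, and step (iii) is immediate.
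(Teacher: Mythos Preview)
Your strategy (i)+(ii)+(iii) has a genuine gap: step~(i) is false as stated. The operator $\tau_{D^nE^n}=\int\proj{\phi}_{DE}^{\otimes n}\,\mathrm d\phi$, with $\phi$ ranging over purifications of a fixed $\sigma_D$, is \emph{not} bounded below by $(n+1)^{-(d^2-1)}$ times the projector onto $\mathcal S_n$. When $\sigma_D$ has full rank one can check (and the paper's computation confirms) that $\mathcal S_n=\Sym^n(D\otimes E)$, but the smallest eigenvalue of $\tau_{D^nE^n}$ on this space is exponentially small in $n$: for instance the unit vector $\ket{\Psi}=(\ket{0}_D\ket{0}_E)^{\otimes n}\in\Sym^n(DE)$ satisfies $\bra{\Psi}\tau\ket{\Psi}\le p^n$ where $p=\bra{0}\sigma_D\ket{0}<1$, because $|\spr{00}{\phi}|^2\le p$ for every purification $\ket{\phi}$ of $\sigma_D$. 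So no polynomial prefactor suffices in (i). Relatedly, the group of unitaries on $E$ (the ``isometries fixing $\sigma_D$'') does \emph{not} act irreducibly on $\mathcal S_n$; averaging over it produces an operator with block-diagonal Schur--Weyl form and coefficients $\dim V_\lambda/\dim U_\lambda$ that vary across blocks, not a multiple of a projector.

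The paper circumvents this by never trying to lower-bound $\tau$ uniformly. Instead it writes $\tau_{D^nE^n}=(\sigma_D^{\otimes n}\otimes\id)^{1/2}(\kappa_{D^n}\otimes\id)^{1/2}\,\id_{\Sym^n(DE)}\,(\kappa_{D^n}\otimes\id)^{1/2}(\sigma_D^{\otimes n}\otimes\id)^{1/2}$ for an explicit permutation-invariant invertible $\kappa_{D^n}$ built from Schur--Weyl data, and then conjugates \emph{both} $\rho_{D^nE^n}$ and $\tau_{D^nE^n}$ by the inverse $(\sigma_D^{\otimes n}\kappa_{D^n})^{-1/2}\otimes\id_{E^n}$. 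This sends $\tau$ to $\id_{\Sym^n(DE)}$ and $\rho$ to an operator $Q$ supported on $\Sym^n(DE)$, so that $Q\le\tr(Q)\,\id_{\Sym^n(DE)}$; the marginal hypothesis $\rho_{D^n}=\sigma_D^{\otimes n}=\tau_{D^n}$ is then used to show $\tr(Q)=\dim\Sym^n(DE)\le(n+1)^{d^2-1}$, and conjugating back gives the claim. The point is that the exponentially small eigenvalues of $\tau$ are matched by correspondingly small components of $\rho$ in those directions, precisely because they share the same $D^n$-marginal; your decoupled bounds (i) and (ii) throw this correlation away.
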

 
  \begin{proof}
   For the following argument, we assume without loss of generality that  $d = \dim(D) = \dim(E)$, and that $\sigma_{D}$ has full rank and is therefore invertible on $D$. (If this is not the case one may embed the smaller space in one of dimension $d$ and replace $\sigma_D$ by $\sigma_D + \epsilon \,  \id_D$ for $\epsilon > 0$. The claim is then obtained in the limit $\epsilon \to 0$.) We define
 \begin{align}
   \ket{\theta}_{D E} = \sum_i \ket{d_i}_D \otimes \ket{e_i}_E \ ,
 \end{align}
 where $\{\ket{d_i}_D\}_i$ and $\{\ket{e_i}_E\}_i$ are orthonormal bases of $D$ and $E$, respectively. Let now
\begin{align}
  T_{D^n E^n} = \int (\id_{D^n} \otimes U_E^{\otimes n}) \proj{\theta}_{D E}^{\otimes n} (\id_{D^n} \otimes U_E^{\otimes n})^{\dagger} \mathrm{d} U \ ,
\end{align}
where $\mathrm{d} U$ is the Haar probability measure on the group of unitaries on $E$. Because  $(\sigma_D^{\frac{1}{2}} \otimes U_E) \proj{\theta} {(\sigma_D^{\frac{1}{2}} \otimes U_E^{\dagger})}$ is a purification of $\sigma_D$ for any $U_E$, the operator
\begin{align}
  \tau_{D^n E^n}
  =  (\sigma_D^{\otimes n} \otimes \id_{E^n})^{\frac{1}{2}} T_{D^n E^n} (\sigma_D^{\otimes n} \otimes \id_{E^n})^{\frac{1}{2}}
  =  \int \bigl((\sigma_D^{\frac{1}{2}} \otimes U_E) \proj{\theta}_{D E} (\sigma_D^{\frac{1}{2}} \otimes U_E^{\dagger})\bigr)^{\otimes n} \mathrm{d} U
\end{align}
is obviously of the form
\begin{align}
  \tau_{D^n E^n} = \int \proj{\phi}_{D E}^{\otimes n} \mathrm{d} \phi \ ,
\end{align}
for some suitably chosen measure $\mathrm{d} \phi$ on the set of purifications $\proj{\phi}_{D E}$ of $\sigma_{D}$. It therefore suffices to show that
\begin{align} \label{eq_tauform}
   \rho_{D^n E^n} \leq (n+1)^{d^2-1} \tau_{D^n E^n} \ .
\end{align}

We do this by analysing the structure of $T_{D^n E^n}$. For this we employ the Schur-Weyl duality, which equips the product space $(D \otimes E)^{\otimes n}$ with a convenient structure (see Appendix~\ref{app_representation}). Specifically, according to Lemma~\ref{lem_SchurWeyldecomposition}, the vector $\ket{\theta}_{D E}^{\otimes n}$ decomposes as
\begin{align} \label{eq_thetadecomposition}
  \ket{\theta}_{D E}^{\otimes n}
  =  \sum_{\lambda} \ket{\phi_\lambda}_{U_{D, \lambda} U_{E, \lambda}} \otimes \ket{\Psi_\lambda}_{V_{D, \lambda} V_{E,\lambda}} \ .
\end{align}
where, for each Young diagram $\lambda$,
\begin{align}
  \ket{\Psi_\lambda}_{V_{D, \lambda} V_{E,\lambda}}
  = \sqrt{\dim(V_\lambda)} \ket{\psi_\lambda}_{V_{D, \lambda} V_{E,\lambda}}
  = \sum_{k} \ket{v_k}_{V_{D, \lambda}} \otimes \ket{\bar{v}_k}_{V_{E, \lambda}}
  \end{align}
  for orthonormal bases $\{\ket{v_k}_{V_{D, \lambda}}\}_k$ and $\{\ket{\bar{v}_k}_{V_{E, \lambda}}\}_k$ of $V_{D,\lambda}$ and $V_{E, \lambda}$, respectively, and $ \ket{\phi_\lambda}_{U_{D, \lambda} U_{E, \lambda}}$ is a vector in $U_{D, \lambda} \otimes  U_{E, \lambda}$. The latter may always be written in the Schmidt decomposition as
 \begin{align} \label{eq_philambda}
   \ket{\phi_\lambda}_{U_{D, \lambda} U_{E, \lambda}} = \sum_{j} \alpha_{\lambda, j} \ket{u_j}_{U_{D, \lambda}} \otimes \ket{\bar{u}_j}_{U_{E, \lambda}} \ ,
 \end{align}
where $\{\ket{u_j}_{U_{D, \lambda}}\}_j$ and $\{\ket{\bar{u}_j}_{U_{E, \lambda}}\}_j$  are orthonormal bases of of $U_{D, \lambda}$ and $U_{E, \lambda}$, respectively, and $\alpha_{\lambda, j}$ are appropriately chosen coefficients, which we assume without loss of generality  to be real. The marginal of $\proj{\theta}_{D E}^{\otimes n}$ on  $D^{\otimes n} \cong \bigoplus_{D, \lambda} U_{D,\lambda} \otimes V_{D, \lambda}$ is equal to the identity and can thus be written as
\begin{align} \label{eq_marginalidentity}
  \tr_{E^n}(\proj{\theta}_{D E}^{\otimes n}) = \id_{D^n} = \sum_\lambda \id_{U_{D, \lambda}} \otimes \id_{V_{D, \lambda}} \ .
\end{align}
Comparing this to~\eqref{eq_thetadecomposition} shows that all coefficients $\alpha_{\lambda, j}$ in~\eqref{eq_philambda} must be equal to~$1$, i.e.,
 \begin{align}
   \ket{\phi_\lambda}_{U_{D, \lambda} U_{E, \lambda}} = \sum_{j} \ket{u_j}_{U_{D, \lambda}} \otimes \ket{\bar{u}_j}_{U_{E, \lambda}} \ .
 \end{align}
Note also that, according to the Schur-Weyl duality (see, e.g., Theorem~1.10 of~\cite{Christandl06}), $U_E^{\otimes n}$ acts on
 $E^{\otimes n} \cong \bigoplus_{E, \lambda} U_{E,\lambda} \otimes V_{E, \lambda}$ as $\sum_\lambda U_{E, \lambda}(U) \otimes \id_{V_{E, \lambda}}$. We thus have
 \begin{align}
     (\id_{D^n} \otimes U_E^{\otimes n}) \ket{\theta}^{\otimes n}
  =  \sum_{\lambda, j} \ket{u_j}_{U_{D, \lambda}} \otimes U_{E, \lambda}(U) \ket{\bar{u}_j}_{U_{E, \lambda}} \otimes \ket{\Psi_\lambda}_{V_{D, \lambda} V_{E, \lambda}} \ .
 \end{align}
 We may therefore write
 \begin{align} \label{eq_Tform}
   T_{D^n E^n} = \sum_{\lambda, \lambda', j, j'} \ket{u_j} \! \bra{u_{j'}}_{U_{D, \lambda} \gets U_{D, \lambda'}} \otimes (T_{\lambda, \lambda', j, j'})_{U_{E, \lambda} \gets U_{E, \lambda'}} \otimes \ket{\Psi_\lambda} \! \bra{\Psi_{\lambda'}}_{(V_{D, \lambda} V_{E, \lambda}) \gets (V_{D, \lambda'} V_{E, \lambda'})} \ ,
 \end{align}
 where $T_{\lambda, \lambda', j, j'}$ is the homomorphism between $U_{E, \lambda'}$ and $U_{E, \lambda}$ defined by
 \begin{align}
   (T_{\lambda, \lambda', j, j'})_{U_{E, \lambda} \gets U_{E, \lambda'}}  = \int U_{E, \lambda}(U) \ket{\bar{u}_j} \! \bra{\bar{u}_{j'}}_{U_{E, \lambda} \gets U_{E, \lambda'}} U_{E, \lambda'}(U)^{\dagger} \mathrm{d} U \ .
 \end{align}
 Since this operator manifestly commutes with the action of the unitary, Schur's lemma (see, e.g., Lemma~0.8 of~\cite{Christandl06}), together with the fact that $U_{E, \lambda}$ and $U_{E, \lambda'}$ are inequivalent for $\lambda \neq \lambda'$,  implies that it has the form
 \begin{align}
   (T_{\lambda, \lambda', j, j'})_{U_{E, \lambda} \gets U_{E, \lambda'}}  = \mu_{\lambda, j, j'} \delta_{\lambda, \lambda'} \id_{U_{E, \lambda}}
 \end{align}
for appropriately chosen coefficients $\mu_{\lambda, j, j'}$.  Inserting this in~\eqref{eq_Tform} gives
 \begin{align}
  T_{D^n E^n} = \sum_{\lambda, j, j'}  \mu_{\lambda, j, j'} \ket{u_j} \! \bra{u_{j'}}_{U_{D, \lambda}} \otimes \id_{U_{E, \lambda}} \otimes \proj{\Psi_\lambda}_{V_{D, \lambda} V_{E, \lambda}} \ .
 \end{align}
 Because the marginal  of $T_{D^n E^n}$ on $D^{\otimes n}$,
 \begin{align}
   T_{D^n} = \sum_{\lambda, j, j'}  \mu_{\lambda, j, j'} \dim(U_\lambda)  \ket{u_j} \! \bra{u_{j'}}_{U_{D, \lambda}} \otimes \id_{V_{D, \lambda}}  \ ,
 \end{align}
 must be equal to the marginal of $\proj{\theta}_{D E}^{\otimes n}$, we conclude from~\eqref{eq_marginalidentity} that $\mu_{\lambda, j, j'} = \frac{1}{\dim(U_\lambda)} \delta_{j, j'}$. Hence,
 \begin{align}
  T_{D^n E^n}
  = \sum_{\lambda} {\textstyle \frac{\dim(V_\lambda)}{\dim(U_\lambda)}} \,  \id_{U_{D, \lambda}} \otimes \id_{U_{E, \lambda}} \otimes \proj{\psi_\lambda}_{V_{D, \lambda} V_{E, \lambda}} \ ,
 \end{align}
 where $\ket{\psi_\lambda}_{V_{E, \lambda}}$ is normalised.

Defining the invertible operator
\begin{align}
  \kappa_{D^n} = \sum_{\lambda} {\textstyle \frac{\dim(V_{\lambda})}{\dim(U_\lambda)}} \, \id_{U_{D, \lambda}} \otimes \id_{V_{D, \lambda}}
\end{align}
we have
\begin{align}
  S_{D^n E^n}
  = (\kappa_{D^n} \otimes \id_{E^n})^{-\frac{1}{2}} T_{D^n E^n} (\kappa_{D^n} \otimes \id_{E^n})^{-\frac{1}{2}}
  =  \sum_{\lambda} \id_{U_{D, \lambda}} \otimes \id_{U_{E, \lambda}} \otimes \proj{\psi_\lambda}_{V_{D, \lambda} V_{E, \lambda}} \ .
\end{align}
Note that $\kappa_{D^n}$ commutes with any permutation, because, according to the Schur-Weyl duality, permutations act like  $\sum_\lambda \id_{U_\lambda} \otimes V_\lambda(\pi)$ on the decomposition of $D^{\otimes n}$. Consequently, because the support of $T_{D^n E^n}$ is contained in the symmetric subspace $\Sym^n(D \otimes E)$, the same must hold for $S_{D^n E^n}$.  Furthermore, for any vector  $\ket{\Omega} \in \Sym^n(D \otimes E)$, it follows from its representation according to Lemma~\ref{lem_SchurWeyldecomposition} that $S_{D^n E^n} \ket{\Omega} = \ket{\Omega}$. This proves that
\begin{align} \label{eq_Sidentity}
  S_{D^n E^n} = \id_{\Sym^n(D \otimes E)} \ .
\end{align}

Consider now the operator
\begin{align}
  Q_{D^n E^n} = (\kappa_{D^n} \otimes \id_{E^n})^{-\frac{1}{2}}  (\sigma_D^{\otimes n} \otimes \id_{E^n})^{-\frac{1}{2}} \rho_{D^n E^n}  (\sigma_D^{\otimes n} \otimes \id_{E^n})^{-\frac{1}{2}} (\kappa_{D^n} \otimes \id_{E^n})^{-\frac{1}{2}} \ .
\end{align}
Since the support of $\rho_{D^n E^n}$ is contained in $\Sym^n({D \otimes E})$, the same must hold for $Q_{D^n E^n}$ and we find
\begin{align}
  Q_{D^n E^n} \leq \|Q_{D^n E^n}\|_\infty  \id_{\Sym^n(D \otimes E)} \leq \tr(Q_{D^n E^n}) \id_{\Sym^n(D \otimes E)} =   \tr(Q_{D^n E^n}) S_{D^n E^n}  \ .
\end{align}
This, in turn, implies that
\begin{align}
  \rho_{D^n E^n} \leq \tr(Q_{D^n E^n}) (\sigma_D^{\otimes n} \otimes \id_{E^n})^{\frac{1}{2}} T_{D^n E^n} (\sigma_D^{\otimes n} \otimes \id_{E^n})^{\frac{1}{2}}
  = \tr(Q_{D^n E^n})  \tau_{D^n E^n} \ .
\end{align}
To conclude the proof of~\eqref{eq_tauform}, we note that $\rho_{D^n} = \sigma_D^{\otimes n} = \tau_{D^n}$, which implies
\begin{multline}
  \tr(Q_{D^n E^n})
= \tr\bigl((\kappa_{D^n} \otimes \id_{E^n})^{-1}   (\sigma_D^{\otimes n} \otimes \id_{E^n})^{-\frac{1}{2}} \rho_{D^n E^n}  (\sigma_D^{\otimes n} \otimes \id_{E^n})^{-\frac{1}{2}} \bigr) \\
= \tr\bigl(\kappa_{D^n}^{-1} (\sigma_D^{\otimes n})^{-\frac{1}{2}} \rho_{D^n} (\sigma_D^{\otimes n})^{-\frac{1}{2}}  \bigr) \\
= \tr\bigl(\kappa_{D^n}^{-1} (\sigma_D^{\otimes n})^{-\frac{1}{2}} \tau_{D^n} (\sigma_D^{\otimes n})^{-\frac{1}{2}}  \bigr) \\
= \tr\bigl((\kappa_{D^n} \otimes \id_{E^n})^{-1}   (\sigma_D^{\otimes n} \otimes \id_{E^n})^{-\frac{1}{2}} \tau_{D^n E^n}  (\sigma_D^{\otimes n} \otimes \id_{E^n})^{-\frac{1}{2}} \bigr) \\
= \tr\bigl((\kappa_{D^n} \otimes \id_{E^n})^{-1} T_{D^n E^n}\bigr) \\
= \tr(S_{D^n E^n})
= \tr(\id_{\Sym^n(D \otimes E)})
= \dim(\Sym^n(D \otimes E))
\leq (n+1)^{d^2-1} \ .
\end{multline}
 \end{proof}
 
Because any permutation-invariant density operator has a permutation-invariant purification,  Lemma~\ref{lem_postselection} can be easily extended so that $\rho_{D^n E^n}$ does not need to be pure.
 
 \begin{corollary} \label{cor_postselection}
  Let $D$ and $E$ be Hilbert spaces and let $\sigma_{D}$ be a non-negative operator on $D$. Then there exists a probability measure $\mathrm{d} \sigma_{D E}$ on the set of non-negative extensions $\sigma_{D E}$  of $\sigma_D$ such that  
      \begin{align} \label{eq_postselectionext}
    \rho_{D^n E^n} \leq (n+1)^{d^2-1} \int \sigma_{D E}^{\otimes n} \mathrm{d} \sigma_{D E} 
  \end{align}
  holds for any $n \in \mathbb{N}$, any permutation-invariant non-negative extension $\rho_{D^n E^n}$ of $\sigma_D^{\otimes n}$, and $d = \dim(D) \dim(E)^2$.
   \end{corollary}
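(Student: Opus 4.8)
The plan is to deduce the corollary from Lemma~\ref{lem_postselection} by purifying $\rho_{D^n E^n}$ in a permutation-invariant fashion, applying the lemma, and then tracing out the purifying system. First I would construct the purification. Let $\bar D$ and $\bar E$ be copies of $D$ and $E$, write $S = D \otimes E$ and $\bar S = \bar D \otimes \bar E$, and let $\ket{\Omega}_{S^n \bar S^n} = \ket{\omega}_{S \bar S}^{\otimes n}$ where $\ket{\omega}_{S \bar S} = \sum_i \ket{i}_S \otimes \ket{i}_{\bar S}$ for fixed product bases. The canonical purification $\ket{\rho}_{S^n \bar S^n} = (\rho_{S^n}^{1/2} \otimes \id_{\bar S^n})\ket{\Omega}_{S^n \bar S^n}$ satisfies $\tr_{\bar S^n}\proj{\rho} = \rho_{S^n}$, so that $\ket{\rho}$ is a purification of $\sigma_D^{\otimes n}$ (its $D^n$-marginal being $\tr_{E^n}\rho_{D^n E^n} = \sigma_D^{\otimes n}$). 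It is moreover permutation invariant: the permutation operator $P_\pi$ on $S^{\otimes n}$ commutes with $\rho_{S^n}^{1/2}$ because $\rho_{S^n}$ is permutation invariant, while $(P_\pi \otimes \bar P_\pi)\ket{\Omega} = \ket{\Omega}$ because $\ket{\Omega}$ is a tensor power of a single bipartite vector; hence $(P_\pi \otimes \bar P_\pi)\ket{\rho} = \ket{\rho}$, which is exactly invariance under permutations of the $n$ copies of the supersite $D \otimes E \otimes \bar D \otimes \bar E$.

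Next I would apply Lemma~\ref{lem_postselection} to $\ket{\rho}$ with the two spaces taken to be $D$ and $\tilde E := E \otimes \bar D \otimes \bar E$, observing that $\proj{\rho}_{D^n \tilde E^n}$ is a permutation-invariant purification of $\sigma_D^{\otimes n}$ and that $\max[\dim D, \dim \tilde E] = \dim D\,(\dim E)^2 =: d$. This produces a probability measure $\mathrm{d}\phi$ on the set of purifications $\proj{\phi}_{D \tilde E}$ of $\sigma_D$ with
\begin{align}
  \proj{\rho}_{D^n \tilde E^n} \;\leq\; (n+1)^{d^2-1}\int \proj{\phi}_{D \tilde E}^{\otimes n}\,\mathrm{d}\phi \ .
\end{align}
Finally I would take the partial trace over $\bar D^n \bar E^n$ on both sides. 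The partial trace is completely positive, hence preserves the operator inequality; the left-hand side becomes $\rho_{D^n E^n}$, and each term $\proj{\phi}_{D \tilde E}^{\otimes n}$ becomes $\sigma_{D E}^{\otimes n}$ with $\sigma_{D E} := \tr_{\bar D \bar E}\proj{\phi}_{D \tilde E}$, a non-negative extension of $\sigma_D$ (indeed $\tr_E \sigma_{D E} = \tr_{\tilde E}\proj{\phi}_{D \tilde E} = \sigma_D$). Pushing $\mathrm{d}\phi$ forward along $\phi \mapsto \sigma_{D E}$ gives the measure $\mathrm{d}\sigma_{D E}$ for which~\eqref{eq_postselectionext} holds.

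The main obstacle is really only the first step, the existence of a permutation-invariant purification; the remaining steps (the marginal identities, positivity of the partial trace, and the dimension count $\max[\dim D,\, \dim E \cdot \dim D \cdot \dim E] = \dim D (\dim E)^2$) are routine. As in the proof of Lemma~\ref{lem_postselection}, degeneracies of $\sigma_D$ or dimension mismatches could in principle be treated by an embedding-and-limit argument, but here this does not arise since the copies $\bar D, \bar E$ may be chosen freely.
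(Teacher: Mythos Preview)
Your proof is correct and follows essentially the same route as the paper: purify $\rho_{D^n E^n}$ permutation-invariantly, apply Lemma~\ref{lem_postselection} with the enlarged $E$-system, then trace out the purifying factors and push forward the measure. The only difference is that the paper invokes Lemma~\ref{lem_symmetricpurification} for the existence of a permutation-invariant purification with $\dim R \leq \dim(D\otimes E)$, whereas you construct the canonical purification $(\rho_{S^n}^{1/2}\otimes\id)\ket{\Omega}$ explicitly; both yield the same dimension bound $d = \dim(D)\dim(E)^2$.
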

   
   \begin{proof}
     According to Lemma~\ref{lem_symmetricpurification}, $\rho_{D^n E^n}$ has a permutation-invariant purification  $\rho_{D^n E^n R^n}$ with purifying system $R^{\otimes n}$, where $\dim R \leq \dim (D \otimes E)$.  Lemma~\ref{lem_postselection} with $E$ replaced by $E \otimes R$, applied to $\rho_{D^n E^n R^n}$, yields
   \begin{align}
\rho_{D^nE^nR^n} \leq (n+1)^{d^2 - 1} \int \proj{\phi}_{D E R}^{\otimes n} \mathrm{d} \phi  \ ,
\end{align}  
where $\mathrm{d} \phi$ is a probability measure on the purifications $\proj{\phi}_{D E R}$ of $\sigma_D$ and $d = \max[\dim D, \dim {E \otimes R}] \leq \dim D (\dim E)^2$.  Taking the partial trace over $R^{\otimes n}$ on both sides gives
   \begin{align}
     \rho_{D^nE^n} \leq (n+1)^{d^2 - 1} \int \tr_R(\proj{\phi}_{D E R})^{\otimes n} \mathrm{d} \phi  \ .
    \end{align}  
    The claim follows because the probability measure $\mathrm{d} \phi$ on the pure states $\proj{\phi}_{D E R}$ can be replaced by the induced measure $\mathrm{d} \sigma_{D E}$   on the marginal states $\sigma_{D E} =  \tr_R(\proj{\phi}_{D E R})$.  
   \end{proof}
 
Even though we do not need it here, it is worth pointing out that, by virtue of the Choi-Jamio\l{}kowski isomorphism~\cite{Jam72,Cho75}, the claim above can be rephrased in terms of completely positive trace-preserving maps.  As shown in~\cite{BHOS14}, this is useful to derive an improved variant of inequality~\eqref{eq_maininequality}.

 \begin{corollary} \label{cor_postselectionmap}
 Let $D$ and $E$ be Hilbert spaces. Then there exists a probability measure $\mathrm{d}\tau$ on the set of completely positive trace-preserving maps $\tau_{D \to E}$ such that\footnote{The inequality means that the difference between the right hand side and the left hand side is a completely positive map.}
  \begin{align}
  \label{eq_postselection_maps}
    \cW_{D^n \to E^n} \leq (n+1)^{d^2-1} \int \tau_{D \to E}^{\otimes n} \mathrm{d} \tau 
  \end{align}
  holds for any $n \in \mathbb{N}$, any completely positive trace-preserving map $\cW_{D^n \to E^n}$ that is permutation-invariant (i.e., $\cW \circ \pi = \pi \circ \cW$ for all permutations $\pi$), and $d = \dim(D) \dim(E)^2$.
\end{corollary}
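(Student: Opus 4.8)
The plan is to deduce the statement from Corollary~\ref{cor_postselection} by transporting it through the Choi-Jamio\l{}kowski isomorphism~\cite{Jam72,Cho75}. Fix a copy $\bar D$ of $D$, and for a linear map $\cV$ from $L(D)$ to $L(E)$ let its Choi operator on $\bar D \otimes E$ be $J_{\cV} = (\id_{\bar D} \otimes \cV)(\proj{\Omega}_{\bar D D})$, where $\ket{\Omega}_{\bar D D} = \sum_i \ket{i}_{\bar D} \otimes \ket{i}_D$. I will use the standard facts that $\cV \mapsto J_{\cV}$ is a \emph{linear} bijection between maps $L(D) \to L(E)$ and operators on $\bar D \otimes E$, that $\cV$ is completely positive iff $J_{\cV} \geq 0$, and that $\cV$ is trace-preserving iff $\tr_E(J_{\cV}) = \id_{\bar D}$; in particular $\cV' - \cV$ is completely positive iff $J_{\cV} \leq J_{\cV'}$. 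I also use two compatibility properties: first, the Choi operator of a tensor product $\bigotimes_{i=1}^n \cV_i$ of maps $L(D) \to L(E)$, naturally an operator on $\bar D^{\otimes n} \otimes E^{\otimes n}$, coincides with $\bigotimes_{i=1}^n J_{\cV_i}$ after the canonical ``shuffle'' isomorphism $\bar D^{\otimes n} \otimes E^{\otimes n} \cong (\bar D \otimes E)^{\otimes n}$; second, under this same identification a map $\cW_{D^n \to E^n}$ with $\cW \circ \pi = \pi \circ \cW$ for all permutations $\pi$ has a permutation-invariant Choi operator on $(\bar D \otimes E)^{\otimes n}$, and conversely.

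First I would take a permutation-invariant completely positive trace-preserving $\cW_{D^n \to E^n}$ and set $\rho_{D^n E^n} = J_{\cW_{D^n \to E^n}}$, regarded via the shuffle as an operator on $(\bar D \otimes E)^{\otimes n}$; we then apply Corollary~\ref{cor_postselection} with its spaces instantiated as $D \mapsto \bar D$ and $E \mapsto E$. By the facts above $\rho_{D^n E^n} \geq 0$, it is permutation invariant, and $\tr_{E^n}(\rho_{D^n E^n}) = \id_{\bar D}^{\otimes n}$, so it is a permutation-invariant non-negative extension of $\sigma_{\bar D}^{\otimes n}$ for the choice $\sigma_{\bar D} = \id_{\bar D}$. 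Corollary~\ref{cor_postselection} then supplies a probability measure $\mathrm{d}\sigma_{\bar D E}$ on the set of non-negative extensions $\sigma_{\bar D E}$ of $\id_{\bar D}$ such that
\begin{align}
  J_{\cW_{D^n \to E^n}} \leq (n+1)^{d^2-1} \int \sigma_{\bar D E}^{\otimes n} \, \mathrm{d}\sigma_{\bar D E}
\end{align}
holds for every $n$, with $d = \dim(\bar D)\dim(E)^2 = \dim(D)\dim(E)^2$.

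It remains to read this operator inequality back in terms of maps. Every $\sigma_{\bar D E}$ in the support satisfies $\sigma_{\bar D E} \geq 0$ and $\tr_E(\sigma_{\bar D E}) = \id_{\bar D}$, hence is the Choi operator of a unique completely positive trace-preserving map $\tau_{D \to E}$; pushing $\mathrm{d}\sigma_{\bar D E}$ forward along this bijection defines a probability measure $\mathrm{d}\tau$ on such maps. By the first compatibility property, $\sigma_{\bar D E}^{\otimes n}$ is the Choi operator of $\tau_{D \to E}^{\otimes n}$, and since the Choi correspondence is a linear order-isomorphism of finite-dimensional spaces it commutes with integration against $\mathrm{d}\tau$ and with inversion. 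Applying the inverse Choi map to both sides of the displayed inequality therefore gives exactly~\eqref{eq_postselection_maps}.

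The proof is essentially bookkeeping once Corollary~\ref{cor_postselection} is in hand; the one step I expect to require care is the second compatibility property, i.e., checking that $\cW \circ \pi = \pi \circ \cW$ for the permutation actions on $L(D^{\otimes n})$ and $L(E^{\otimes n})$ is equivalent to invariance of $J_{\cW}$ under the diagonal permutation action on the $n$ tensor factors of $(\bar D \otimes E)^{\otimes n}$ (via the ``transpose trick'' relating $(P_\pi^{\bar D} \otimes \id)\ket{\Omega}_{\bar D D}$ to $(\id \otimes (P_\pi^D)^{\dagger})\ket{\Omega}_{\bar D D}$ for permutation unitaries $P_\pi$), and dually that the product operators $\sigma_{\bar D E}^{\otimes n}$ returned by the corollary come from products of identical CPTP maps. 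Measurability of the push-forward measure is automatic since in finite dimensions the Choi correspondence is a homeomorphism.
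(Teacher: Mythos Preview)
Your proposal is correct and follows essentially the same route as the paper: apply Corollary~\ref{cor_postselection} to the Choi operator of $\cW_{D^n\to E^n}$ (whose $\bar D^{\,n}$-marginal is the $n$-fold product of $\id_{\bar D}$, or $\id_{\bar D}/\dim D$ in the paper's normalised convention), then invert the Choi isomorphism. The paper states this more tersely and does not spell out the shuffle/permutation-invariance bookkeeping you flag, but the argument is the same.
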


\begin{proof}
Let $\rho_{D^nE^n} = J^{\otimes n}(\cW_{D^n \to E^n})$ where  $J$ denotes the Choi-Jamio\l{}kowski isomorphism on the mappings from~$D$ to~$E$. That is, $\rho_{D^nE^n} = ({\cW_{\bar{D}^n \to E^n} \otimes \cI_{D^n}})(\Psi_{\bar{D} D}^{\otimes n})$, where $\Psi_{\bar{D} D}$ is a maximally entangled state. The marginal of $\rho_{D^n E^n}$ on $D^n$ equals $\sigma_D^{\otimes n}$ with $\sigma_D = \frac{\id_D}{\dim D}$. Furthermore, as the map $\cW_{\bar{D}^n \to E^n} \otimes \cI_{D^n}$ is permutation invariant, so is the state $\rho_{D^n E^n}$. Hence \eqref{eq_postselectionext}~holds. Since the corresponding probability measure $\mathrm{d} \sigma_{D E}$ is restricted to the set of density operators $\sigma_{D E}$ with marginal $\sigma_D$, each $\sigma_{D E}$ is the image of a trace-preserving completely positive map $\tau_{D \to E}$ under the isomorphism $J$, i.e., $\sigma_{D E} = J(\tau_{D \to E})$. The claim thus follows by applying the inverse of $J^{\otimes n}$ to both sides of~\eqref{eq_postselectionext}.
\end{proof} 
 
 \section{Fidelity between permutation-invariant operators} \label{sec_fidelity}

The purpose of this section is to provide techniques to approximate the fidelity of permutation-invariant states. They play a key  role in the proof of Theorem~\ref{thm_maininequality}. The derivation of the statements below is based  on the generalised de Finetti reduction method introduced in Section~\ref{sec_deFinetti}. Furthermore, we will  use several established facts about the fidelity, which are summarised in Appendix~\ref{app_fidelity}.

\begin{lemma} \label{lem_fidelitysymmetricoptimal}
  Let $\rho_{D^n E^n}$ be a permutation-invariant non-negative operator on $(D \otimes E)^{\otimes n}$ and let $\sigma_D$ be a non-negative operator on $D$. Then there exists a non-negative extension $\sigma_{D E}$ of $\sigma_D$ on $D \otimes E$ such that 
  \begin{align}
   F(\rho_{D^n E^n}, \sigma_{D E}^{\otimes n})
  \geq  (n+1)^{-d^2/2} F(\rho_{D^n}, \sigma_{D}^{\otimes n}) \ ,
  \end{align}
  where $d = \dim(D) \dim(E)^2$.
  Furthermore, if $\rho_{D^n E^n}$ is pure then $\sigma_{D E}$ is pure and $d \leq \max[\dim(D), \dim(E)]$. 
\end{lemma}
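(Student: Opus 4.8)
The plan is to establish the ``pure case'' first (the \emph{furthermore} clause), where $\rho_{D^nE^n}=\proj{\rho}$ is a permutation-invariant pure state, and then reduce the general statement to it by purification. So fix a permutation-invariant unit vector $\ket{\rho}_{D^nE^n}$; we may assume it lies in $\Sym^n(D\otimes E)$. The core of the argument is to construct a single purification $\ket{\tilde\rho}_{D^nE^n}$ of $\sigma_D^{\otimes n}$ that is \emph{both} permutation invariant with support in $\Sym^n(D\otimes E)$ \emph{and} Uhlmann-optimal for $\rho_{D^n}$, i.e.\ $|\langle\rho|\tilde\rho\rangle|=F(\rho_{D^n},\sigma_D^{\otimes n})$. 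Concretely, I would write $\ket{\rho}=(\sqrt{\rho_{D^n}}\otimes V)\ket{\theta}_{DE}^{\otimes n}$ with $\ket{\theta}_{DE}=\sum_i\ket{d_i}_D\ket{e_i}_E$ and $V$ a unitary on $E^{\otimes n}$, and set $\ket{\tilde\rho}=\big((\sqrt{\sigma_D})^{\otimes n}\bar U^{\dagger}\otimes V\big)\ket{\theta}_{DE}^{\otimes n}$, where $\bar U$ is a unitary on $D^{\otimes n}$ extending the partial isometry in the polar decomposition of $X:=\sqrt{\rho_{D^n}}\,(\sqrt{\sigma_D})^{\otimes n}$. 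Since $\sqrt{\rho_{D^n}}$ and $(\sqrt{\sigma_D})^{\otimes n}$ commute with every permutation of the $n$ factors of $D^{\otimes n}$, so does $X$, and hence $\bar U$ (and, analogously, $V$) may be chosen permutation invariant; this is where the representation-theoretic structure of $D^{\otimes n}$ from Appendix~\ref{app_representation} enters. A short computation using $\langle\theta^{\otimes n}|(A_{D^n}\otimes\id_{E^n})|\theta^{\otimes n}\rangle=\tr(A_{D^n})$ then gives $\tr_{E^n}\proj{\tilde\rho}=\sigma_D^{\otimes n}$, $\ket{\tilde\rho}\in\Sym^n(D\otimes E)$, and $\langle\rho|\tilde\rho\rangle=\tr(X\bar U^{\dagger})=\tr|X|=\|\sqrt{\rho_{D^n}}\sqrt{\sigma_D^{\otimes n}}\|_1=F(\rho_{D^n},\sigma_D^{\otimes n})$.

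With this in hand, apply the de Finetti reduction of Lemma~\ref{lem_postselection} to the permutation-invariant pure extension $\proj{\tilde\rho}$ of $\sigma_D^{\otimes n}$: there is a probability measure $\mathrm d\phi$ on the purifications $\proj{\phi}_{DE}$ of $\sigma_D$ with $\proj{\tilde\rho}\le (n+1)^{d^2-1}\int\proj{\phi}_{DE}^{\otimes n}\,\mathrm d\phi$ and $d=\max[\dim D,\dim E]$. Sandwiching this operator inequality between $\bra{\rho}$ and $\ket{\rho}$ and using the previous paragraph,
\[
  F(\rho_{D^n},\sigma_D^{\otimes n})^2=|\langle\rho|\tilde\rho\rangle|^2\le (n+1)^{d^2-1}\int|\langle\rho|\phi^{\otimes n}\rangle|^2\,\mathrm d\phi \ ,
\]
so, since $\mathrm d\phi$ is a probability measure, there is a $\phi$ with $|\langle\rho|\phi^{\otimes n}\rangle|\ge (n+1)^{-(d^2-1)/2}F(\rho_{D^n},\sigma_D^{\otimes n})\ge (n+1)^{-d^2/2}F(\rho_{D^n},\sigma_D^{\otimes n})$. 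Taking $\sigma_{DE}:=\proj{\phi}$ — a pure extension of $\sigma_D$ — and recalling $F(\proj{\rho}_{D^nE^n},\proj{\phi}_{DE}^{\otimes n})=|\langle\rho|\phi^{\otimes n}\rangle|$ proves the pure case.

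For the general case I would first replace $\rho_{D^nE^n}$ by a permutation-invariant purification $\rho_{D^nE^nR^n}$ with $\dim R\le\dim(D\otimes E)$ (Lemma~\ref{lem_symmetricpurification}, chosen so that its support lies in the symmetric subspace). Applying the pure case already proved, with $E$ replaced by $E\otimes R$, yields a pure extension $\proj{\phi}_{DER}$ of $\sigma_D$ with $F(\rho_{D^nE^nR^n},\proj{\phi}_{DER}^{\otimes n})\ge (n+1)^{-\bar d^2/2}F(\rho_{D^n},\sigma_D^{\otimes n})$ and $\bar d=\max[\dim D,\dim E\dim R]\le\dim D(\dim E)^2=d$. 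Setting $\sigma_{DE}:=\tr_R\proj{\phi}_{DER}$, an extension of $\sigma_D$, and using monotonicity of the fidelity under the partial trace over $R^n$ (Appendix~\ref{app_fidelity}), we obtain $F(\rho_{D^nE^n},\sigma_{DE}^{\otimes n})\ge F(\rho_{D^nE^nR^n},\proj{\phi}_{DER}^{\otimes n})\ge (n+1)^{-d^2/2}F(\rho_{D^n},\sigma_D^{\otimes n})$, which is the claim.

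The step I expect to be the main obstacle is the construction in the first paragraph: showing that the Uhlmann-optimal purification of $\sigma_D^{\otimes n}$ relative to $\rho_{D^n}$ can be chosen permutation invariant and supported in $\Sym^n(D\otimes E)$, so that Lemma~\ref{lem_postselection} is applicable. The substance there is representation-theoretic: since $X=\sqrt{\rho_{D^n}}(\sqrt{\sigma_D})^{\otimes n}$ is an intertwiner for the $S_n$-action on $D^{\otimes n}$, it decomposes blockwise over the isotypic components, so $\ker X$ and $\ker X^{\dagger}$ are isomorphic as $S_n$-representations and the polar partial isometry extends to a permutation-invariant unitary (and likewise for $V$). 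Everything else — the de Finetti reduction, the sandwiching-and-averaging argument, and the reduction of the general case to the pure one — is routine. A minor point that also needs a word is the reduction ``we may assume $\ket{\rho}\in\Sym^n(D\otimes E)$'': a permutation-invariant pure state lies in $\Sym^n$ or in the antisymmetric subspace of $(D\otimes E)^{\otimes n}$, and only the former is needed (in the reduction to the general case it is automatic, since one is free to choose the purification in the symmetric subspace).
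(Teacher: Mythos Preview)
Your proposal is correct and follows essentially the same route as the paper: the construction in your first paragraph (a permutation-invariant Uhlmann-optimal purification of $\sigma_D^{\otimes n}$ via the polar decomposition of $\sqrt{\rho_{D^n}}(\sqrt{\sigma_D})^{\otimes n}$) is precisely what the paper isolates as Lemma~\ref{lem_fidelitysymmetricpurification}, and the remainder---applying Lemma~\ref{lem_postselection}, sandwiching and maximising over $\phi$, then reducing the mixed case to the pure one by purifying into $R^n$ and tracing out---matches the paper's proof step for step. The only cosmetic difference is that the paper carries the purifying system $R$ from the outset rather than first proving the pure case and then reducing to it; also note that the paper's version of your first-paragraph construction writes $\ket{\Omega}=(\rho_{D^n}^{-1/2}\otimes\id)\ket{\Psi}$ rather than introducing a separate isometry $V$ on $E^{\otimes n}$, which sidesteps the need to argue that $V$ can be chosen permutation invariant.
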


\begin{proof}
  Let $\rho_{D^n E^n R^n} = \proj{\Psi}_{D^n E^n R^n}$ be a permutation-invariant purification of $\rho_{D^n E^n}$, i.e., $\ket{\Psi} \in {\Sym^n(D \otimes E \otimes R)}$, where $\dim(R) \leq \dim(D \otimes E)$. (That such a purification exists is the statement of Lemma~\ref{lem_symmetricpurification}. We also note that, if  $\rho_{D^n E^n}$ is already pure, then $R$ can be chosen to be the trivial space~$\mathbb{C}$, i.e., $\dim(R) = 1$.) According to Lemma~\ref{lem_fidelitysymmetricpurification}, there exists a permutation-invariant purification $\bar{\sigma}_{D^n E^n R^n}$ of $\sigma_{D}^{\otimes n}$ such that
  \begin{align} \label{eq_fidelitybarsigma}
    F(\rho_{D^n}, \sigma_D^{\otimes n})
    = F(\rho_{D^n E^n R^n}, \bar{\sigma}_{D^n E^n R^n}) 
    = \sqrt{\bra{\Psi} \bar{\sigma}_{D^n E^n R^n} \ket{\Psi} }\ .
  \end{align}
  
  Let $\Gamma$ be the set of vectors $\ket{\phi}_{D E R}$ on $D \otimes E \otimes R$ such that $\tr_{E R}(\proj{\phi}_{D E R}) = \sigma_D$. According to Lemma~\ref{lem_postselection} there exists a probability measure $\mathrm{d}\phi$  on $\Gamma$ such that
   \begin{align}
      \bar{\sigma}_{D^n E^n R^n} \leq (n+1)^{d^2-1}  \int  \proj{\phi}_{D E R}^{\otimes n} \mathrm{d}\phi \ ,
    \end{align}  
    where $d = \max[\dim(D), \dim(E) \dim(R)] \leq \dim(D) \dim(E)^2$.   Using this we find
  \begin{align} \label{eq_fidelitytau}
    (n+1)^{-d^2} \bra{\Psi} \bar{\sigma}_{D^n E^n R^n} \ket{\Psi}
   \leq  \int \bra{\Psi} (\proj{\phi}_{D E R}^{\otimes n}) \ket{\Psi} \mathrm{d}\phi
   \leq \max_{\ket{\phi} \in \Gamma} \bra{\Psi} (\proj{\phi}_{D E R}^{\otimes n}) \ket{\Psi} \ .
  \end{align}
  We now set $\sigma_{D E R} = \proj{\phi}_{D E R}$, where $\ket{\phi}_{D E R} \in \Gamma$ is a vector that maximises the above expression. Note that, by the definition of the set $\Gamma$, $\sigma_{D E}$ is then a valid extension of the given operator $\sigma_D$.  (Furthermore, if $R$ is the trivial space $\mathbb{C}$ then $\sigma_{D E}$ is pure.) Combining~\eqref{eq_fidelitybarsigma} with~\eqref{eq_fidelitytau}, and using the monotonicity of the fidelity under the partial trace (Lemma~\ref{lem_fidelitypartialtrace}),  we conclude that
  \begin{align}
      (n+1)^{-d^2/2}  F(\rho_{D^n}, \sigma_D^{\otimes n})
   \leq \sqrt{\bra{\Psi} \sigma_{D E R}^{\otimes n} \ket{\Psi}}
   = F(\proj{\Psi}_{D^n E^n R^n}, \sigma_{D E R}^{\otimes n})
   \leq F(\rho_{D^n E^n}, \sigma_{D E}^{\otimes n})  \ .
  \end{align}  
\end{proof}

\begin{lemma} \label{lem_subfidelity}
  Let $\rho_{R^n S^n}$ be a permutation-invariant non-negative operator on $(R \otimes S)^{\otimes n}$ and let $\sigma_{R S}$ be a non-negative operator on $R \otimes S$. Furthermore, let $W_{R^n}$ be a permutation-invariant operator on $R^{\otimes n}$ with $\|W_{R^n}\|_\infty \leq 1$. Then there exists a unitary $U_R$  on $R$ such that\footnote{Here and in the following we simplify our notation by omitting identity operators that are clear from the context, e.g., we write $U_R$ instead of $U_R \otimes \id_S$.}
  \begin{align}  \label{eq_subfidelity}
    F\bigl(\rho_{R^n S^n}, U_R^{\otimes n}  \sigma_{R S}^{\otimes n} (U_R^{\otimes n})^{\dagger}\bigr) \geq (n+1)^{-d^2} F\bigl(W_{R^n} \rho_{R^n S^n} W_{R^n}^{\dagger},  \sigma_{RS}^{\otimes n} \bigr) \ ,
  \end{align}
  where $d =  \dim(R) \dim(S)^2$.
\end{lemma}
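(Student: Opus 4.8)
The plan is to reduce Lemma~\ref{lem_subfidelity} to Lemma~\ref{lem_fidelitysymmetricoptimal} by a purification-and-dilation argument, using the fact that an operator $W_{R^n}$ with $\|W_{R^n}\|_\infty \le 1$ can be viewed as one Kraus operator of a trace non-increasing completely positive map, and that applying such a map to an $n$-fold product structure in a permutation-invariant way is exactly the setting handled by the previous lemma. First I would introduce an ancilla $R'$ with $\dim(R') = \dim(R)$ and complete $W_R$ to a contraction in a way that respects the tensor structure: more precisely, I want to express $W_{R^n}\rho_{R^nS^n}W_{R^n}^\dagger$ as the $E^n$-marginal (or an isometric image) of a permutation-invariant operator on a larger $n$-fold product system whose $D^n$-marginal is a fixed $n$-fold power. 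The natural choice is to embed $W_R$ into a unitary $V_{RR'}$ on $R \otimes R'$ (a Stinespring-type dilation of the single Kraus operator $W_R$, padding with $\sqrt{\id - W_R^\dagger W_R}$), so that $W_{R^n} = \bra{0}_{R'^n} V_{RR'}^{\otimes n}\ket{0}_{R'^n}$ with the $R'$ ancilla initialised in a product state; since $V^{\otimes n}$ and the product ancilla state are permutation-invariant, the resulting extended operator is permutation-invariant on $(R \otimes R' \otimes S)^{\otimes n}$.

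The key steps, in order, are: (1) dilate $W_R$ to a unitary $V_{RR'}$ on $R\otimes R'$ and write $\rho'_{(RR')^nS^n} = V_{RR'}^{\otimes n}\big(\rho_{R^nS^n}\otimes \proj{0}_{R'}^{\otimes n}\big)(V_{RR'}^{\otimes n})^\dagger$, which is permutation-invariant; (2) observe that $W_{R^n}\rho_{R^nS^n}W_{R^n}^\dagger = \bra{0}_{R'^n}\rho'_{(RR')^nS^n}\ket{0}_{R'^n}$, i.e.\ it is obtained from $\rho'$ by a product projection onto $R'$; (3) apply Lemma~\ref{lem_fidelitysymmetricoptimal} with the roles of the spaces chosen so that the ``$D$'' system is $S$ (or $R\otimes S$) carrying the fixed target $\sigma$-structure, and the ``$E$'' system is $R$ (together with $R'$), to extract from $\sigma_{RS}^{\otimes n}$ an extension whose $R$-part we can control; (4) translate the extension produced by Lemma~\ref{lem_fidelitysymmetricoptimal} back through the dilation $V_{RR'}$ and the projection onto $\ket{0}_{R'}$, using monotonicity of the fidelity under partial trace / under the contraction (Lemma~\ref{lem_fidelitypartialtrace} and the fact that $F(AXA^\dagger, AYA^\dagger) \ge F(X,Y)$ for a contraction $A$, or rather its appropriate one-sided version), and finally (5) identify the resulting extension of $\sigma_{RS}$ as being of the form $U_R^{\otimes n}\sigma_{RS}^{\otimes n}(U_R^{\otimes n})^\dagger$ — this is where the unitary $U_R$ in the statement comes from, namely it is the single-system unitary that appears because the extension found by the de Finetti reduction is a rotation of $\sigma$ on the $R$ factor only (the marginal on the $S$-side is pinned to $\sigma_{RS}$'s, while the purifying freedom on $R$ is a unitary orbit).

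I expect the main obstacle to be bookkeeping the roles of the systems correctly so that Lemma~\ref{lem_fidelitysymmetricoptimal} applies verbatim and so that the extension it returns really has the rigid form ``unitary-on-$R$ applied to $\sigma_{RS}^{\otimes n}$'' rather than some more general extension. Concretely: Lemma~\ref{lem_fidelitysymmetricoptimal} guarantees an extension $\tilde\sigma_{DE}$ of a \emph{given} $\sigma_D$, so I must set things up with $\sigma_D = \sigma_{RS}$ itself (viewing $R\otimes S$ as the ``$D$'' system whose marginal is fixed) and let the extra ``$E$'' system be a copy of $R$ into which the reconstruction map writes; then the extension it produces, restricted back, is a map from $\sigma_{RS}$ to something on $(R\otimes S)\otimes R_{\text{copy}}$, and the unitary $U_R$ emerges by using the freedom in how one purifies/extends $\sigma_{RS}$ on the $R$ side together with the transpose trick (all purifications of a fixed state differ by an isometry on the purifying system, which here collapses to a unitary on $R$). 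Getting the dimension count $d = \dim(R)\dim(S)^2$ to come out of this — matching the $d = \dim(D)\dim(E)^2$ of Lemma~\ref{lem_fidelitysymmetricoptimal} with $D \leftrightarrow R\otimes S$ would give the wrong exponent, so the correct identification must instead be $D \leftrightarrow S$-type and $E \leftrightarrow R$-type — is the delicate part, and it forces the specific choice of which marginal is held fixed; I would verify that choosing the fixed marginal to live on the $S$ factors (with $\rho_{S^n}$ playing the role of $\sigma_D^{\otimes n}$ only after the $W$-conjugation, which may require first passing to a purification so that the $W$-conjugated operator's $S$-marginal is a legitimate fixed power — if it is not, one instead applies the lemma to a purification and uses $\dim(R) \le$ the purifying dimension, which is exactly where the square on $\dim(S)$ enters).
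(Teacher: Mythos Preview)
Your plan has a genuine gap at the very first step: you write the dilation as $V_{RR'}^{\otimes n}$ with the claimed identity $W_{R^n} = \bra{0}_{R'^n} V_{RR'}^{\otimes n}\ket{0}_{R'^n}$, but this forces $W_{R^n}$ to be a tensor power $W_R^{\otimes n}$ of a single-copy contraction. The lemma does not assume this---$W_{R^n}$ is merely permutation-invariant---and in the application inside the proof of Theorem~\ref{thm_maininequality} it is a typical-subspace projector $\Pi_b$, which is certainly not a product. A permutation-invariant dilation of a general $W_{R^n}$ exists, but it will not have tensor-power form, so you cannot feed it into Lemma~\ref{lem_fidelitysymmetricoptimal} in the way you describe.

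The paper's proof avoids any dilation of $W_{R^n}$. Instead it (i)~takes a permutation-invariant purification $\rho_{R^n S^n E^n}$ of $\rho_{R^n S^n}$, so that $W_{R^n}\rho_{R^n S^n E^n}W_{R^n}^\dagger$ is a permutation-invariant purification of $W_{R^n}\rho_{R^n S^n}W_{R^n}^\dagger$; (ii)~applies Lemma~\ref{lem_fidelitysymmetricoptimal} once with $D = R\otimes S$ to obtain a \emph{pure} extension $\sigma_{RSE}$ of $\sigma_{RS}$; (iii)~traces out $R^n$ via Lemma~\ref{lem_fidelityreduced}, which absorbs $W_{R^n}$ purely by monotonicity (the map $X\mapsto W_{R^n}XW_{R^n}^\dagger$ is trace non-increasing on the $R^n$ factor) and leaves $F(\rho_{S^nE^n},\sigma_{SE}^{\otimes n})$; (iv)~applies Lemma~\ref{lem_fidelitysymmetricoptimal} a \emph{second} time, now with $D = S\otimes E$ and the extension living on $R$, producing another pure state $\tilde\sigma_{RSE}$ with the same marginal $\sigma_{SE}$; and (v)~observes that two purifications of $\sigma_{SE}$ differ by a unitary on $R$, which is the $U_R$ in the statement. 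The exponent $d = \dim(R)\dim(S)^2$ comes from step~(iv), where the fixed-marginal system $S\otimes E$ has dimension $\dim(S)\cdot\dim(R\otimes S)$.

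Your closing paragraph gropes toward the right identification (fixed marginal on the $S$-side, freedom on the $R$-side) and correctly anticipates that $\dim(S)^2$ arises from a purifying system, but the mechanism you sketch---a single application of Lemma~\ref{lem_fidelitysymmetricoptimal} after a Stinespring dilation---is not how it works. The argument needs \emph{two} applications, with the $R$ system deliberately discarded in between so that $W_{R^n}$ disappears by monotonicity rather than by any structural decomposition of $W$ itself.
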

\begin{proof}
Let $\rho_{R^n S^n E^n}$ be a permutation-invariant purification of $\rho_{R^n S^n}$ on $(R \otimes S \otimes E)^{\otimes n}$, where $\dim(E) = \dim(R \otimes S)$ (cf.\ Lemma~\ref{lem_symmetricpurification}). Then $W_{R^n} \rho_{R^n S^n E^n} W_{R^n}^{\dagger}$ is a permutation-invariant purification of  $W_{R^n} \rho_{R^n S^n} W_{R^n}^{\dagger}$. Hence, according to Lemma~\ref{lem_fidelitysymmetricoptimal}, there exists a purification $\sigma_{R S E}$ of $\sigma_{R S}$ such that
\begin{align} \label{eq_Fpureprojector}
   (n+1)^{d_1^2/2}  F(W_{R^n} \rho_{R^n S^n E^n} W^{\dagger}_{R^n},  \sigma_{R S E}^{\otimes n} )
  \geq
  F(W_{R^n} \rho_{R^n S^n} W^{\dagger}_{R^n}, \sigma_{R S}^{\otimes n} ) \ ,
\end{align}
where $d_1 = \max[\dim(R \otimes S), \dim(E)] = \dim(R) \dim(S)$. We then use Lemma~\ref{lem_fidelityreduced} which asserts that
\begin{align} \label{eq_reductionstep}
  F(\rho_{S^n E^n}, \sigma_{S E}^{\otimes n})  
  \geq
  F(W_{R^n} \rho_{R^n S^n E^n} W^{\dagger}_{R^n}, \sigma_{R S E}^{\otimes n} ) \ .
\end{align}
Furthermore, again by Lemma~\ref{lem_fidelitysymmetricoptimal}, there exists a purification $\tilde{\sigma}_{R S E}$ of $\sigma_{S E}$ such that
\begin{align} \label{eq_purificationstep}
   (n+1)^{d_2^2/2} F(\rho_{R^n S^n E^n}, \tilde{\sigma}_{R S E}^{\otimes n})
  \geq  F(\rho_{S^n E^n}, \sigma_{S E}^{\otimes n}) \ ,
\end{align}
where $d_2 = \max[\dim(S \otimes E), \dim(R)] = \dim(R) \dim(S)^2$. Because all purifications are unitarily equivalent, there exists a unitary $U_R$ on $R$ such that $U_R \sigma_{R S E} U_R^\dagger = \tilde{\sigma}_{R S E}$, that is,
\begin{align}
  F\bigl(\rho_{R^n S^n E^n}, U_R^{\otimes n} \sigma_{R S E}^{\otimes n} (U_R^{\otimes n})^{\dagger}\bigr) = F(\rho_{R^n S^n E^n}, \tilde{\sigma}_{R S E}^{\otimes n}) \ .
\end{align}
Because the fidelity is non-decreasing under the partial trace (cf.\ Lemma~\ref{lem_fidelitypartialtrace}), we also have
\begin{align} \label{eq_Fpureiid}
  F\bigl(\rho_{R^n S^n}, U_R^{\otimes n}  \sigma_{R S}^{\otimes n} (U_R^{\otimes n})^{\dagger}\bigr)
  \geq F\bigl(\rho_{R^n S^n E^n}, U_R^{\otimes n}  \sigma_{R S E}^{\otimes n} (U_R^{\otimes n})^{\dagger}\bigr) \ .
\end{align}
Combining all these equations, we obtain the desired claim.
\end{proof}

\begin{remark} \label{rem_Udiagonal}
  If for some orthonormal basis $\{\ket{r}\}_r$ of $R$ the operator $W_{R^n}$ is diagonal in the corresponding product basis $\{\ket{r_1} \otimes \cdots \otimes \ket{r_n}\}_{r_1, \ldots r_n}$ then inequality~\eqref{eq_subfidelity} also holds for an operator $U_R$ which is diagonal in the basis $\{\ket{r}\}_r$ and satisfies $\| U_R \|_\infty \leq 1$, and for $d =  \dim(R)^2 \dim(S)^2$.
  
  To see this, let $\bar{R}$ be a system that is isomorphic to $R$ and let $C$ be the isometry from $R$ to $R \otimes \bar{R}$ defined by
  \begin{align}
     C = \sum_r \bigl(\ket{r}_R \otimes \ket{r}_{\bar{R}}\bigr)\bra{r}_R \ .
  \end{align}
  It is straightforward to verify that, for $W_{R^n}$ diagonal in the product basis $\{\ket{r_1} \otimes \cdots \otimes \ket{r_n}\}_{r_1, \ldots r_n}$, we have
  \begin{align}
    W_{R^n} = (C^{\dagger})^{\otimes n} (W_{R^n} \otimes \id_{\bar{R}^n}) C^{\otimes n} \ .
  \end{align}
  Let now $\rho_{R^n S^n E^n}$ and $\sigma_{R S E}$ be pure operators such that~\eqref{eq_Fpureprojector} holds. Furthermore, define $\bar{\rho}_{R^n \bar{R}^n S^n E^n} = C^{\otimes n}  \rho_{R^n S^n E^n} (C^{\dagger})^{\otimes n}$ and $\bar{\sigma}_{R \bar{R} S E} = C \sigma_{R S E} C^{\dagger}$. Using Lemma~\ref{lem_fidelityoperator} we find
  \begin{multline} \label{eq_Wsigmabar}
    F(W_{R^n} \rho_{R^n S^n E^n} W_{R^n}^{\dagger}, \sigma_{R S E}^{\otimes n})
  =
     F\bigl((C^{\dagger})^{\otimes n} W_{R^n} C^{\otimes n} \rho_{R^n S^n E^n} (C^{\dagger})^{\otimes n} W_{R^n}^{\dagger} C^{\otimes n}, \sigma_{R S E}^{\otimes n}\bigr) \\
   = F(W_{R^n} \bar{\rho}_{R^n \bar{R}^n S^n E^n} W_{R^n}^{\dagger},  \bar{\sigma}_{R \bar{R} S E}^{\otimes n}) \ .
  \end{multline}
  Furthermore, we can carry out the proof steps as in~\eqref{eq_reductionstep} and~\eqref{eq_purificationstep}, while keeping the system $\bar{R}^{\otimes n}$, to obtain
  \begin{align} \label{eq_purestepbar}
     (n+1)^{d^2/2} F(\bar{\rho}_{R^n \bar{R}^n S^n E^n}, \tilde{\sigma}_{R \bar{R} S E}^{\otimes n})
    \geq F(\bar{\rho}_{\bar{R}^n S^n E^n}, \bar{\sigma}_{\bar{R} S E}^{\otimes n} ) 
    \geq F(W_{R^n} \bar{\rho}_{R^n \bar{R}^n S^n E^n} W_{R^n}^{\dagger},  \bar{\sigma}_{R \bar{R} S E}^{\otimes n})  \ ,
  \end{align}
  for some purification $\tilde{\sigma}_{R \bar{R} S E}$ of $\bar{\sigma}_{\bar{R} S E}$. Because $\bar{\sigma}_{R \bar{R} S E}$ is pure there must exist a unitary $\bar{U}_R$ on $R$ such that $\bar{U}_R \bar{\sigma}_{R \bar{R} S E} \bar{U}_R^\dagger = \tilde{\sigma}_{R \bar{R} S E}$.   Using this and the fact that the fidelity is non-decreasing under the partial trace we find
  \begin{align} \label{eq_fidelityndbar}
  F\bigl(\bar{\rho}_{R^n \bar{R}^n S^n}, \bar{U}_R^{\otimes n}  \bar{\sigma}_{R \bar{R} S}^{\otimes n} (\bar{U}_R^\dagger)^{\otimes n} \bigr)
  \geq   F\bigl(\bar{\rho}_{R^n \bar{R}^n S^n E^n}, \bar{U}_R^{\otimes n}  \bar{\sigma}_{R \bar{R} S E}^{\otimes n} (\bar{U}_R^{\dagger})^{\otimes n}\bigr)
  = F\bigl(\bar{\rho}_{R^n  \bar{R}^n S^n E^n},   \tilde{\sigma}_{R \bar{R} S E}^{\otimes n} \bigr) \ .
  \end{align}
  Finally, by the definition of $\bar{\rho}_{R^n \bar{R}^n S^n E^n}$ and $\bar{\sigma}_{R \bar{R} S E}$, and using Lemma~\ref{lem_fidelityoperator}  we have
  \begin{align} \label{eq_UbarU}
    F\bigl(\bar{\rho}_{R^n \bar{R}^n S^n}, \bar{U}_R^{\otimes n}  \bar{\sigma}_{R \bar{R} S}^{\otimes n} (\bar{U}_R^\dagger)^{\otimes n}\bigr)
  = F\bigl(\rho_{R^n S^n}, U_R^{\otimes n} \sigma_{R S}^{\otimes n} (U_R^{\dagger})^{\otimes n} \bigr)
  \end{align}
  where $U_R = C^{\dagger} (\bar{U}_R \otimes \id_{\bar{R}}) C$. Combining this with~\eqref{eq_Fpureprojector}, \eqref{eq_Wsigmabar}, \eqref{eq_purestepbar}, and~\eqref{eq_fidelityndbar} we obtain again inequality~\eqref{eq_subfidelity}.  Furthermore, by construction, $U_R$ is diagonal in the basis $\{\ket{r}\}_r$ and satisfies $U^{\dagger}_R U_R \leq \id_R$. 
\end{remark}

\begin{remark} \label{rem_subfidelity}
  If $\rho_{R^n S^n}$ has product form $\rho_{R S}^{\otimes n}$, the statement of Lemma~\ref{lem_subfidelity} can be rewritten as
  \begin{align}
    F(\rho_{R S}, U_R \sigma_{R S} U_R^{\dagger})
    \geq \sqrt[n]{(n+1)^{-d^2} F(W_{R^n} \rho_{R S}^{\otimes n} W^{\dagger}_{R^n}, \sigma_{R S}^{\otimes n} )} \ .
  \end{align}
  Hence, for a family $\{W_{R^n}\}_{n \in \mathbb{N}}$ of permutation-invariant non-negative operators  such that $\|W_{R^n}\|_\infty \leq 1$ we have
  \begin{align} 
    \sup_{U_R} F\bigl(\rho_{R S}, U_R \sigma_{R S} U_R^{\dagger}\bigr) \geq \limsup_{n \to \infty} \sqrt[n]{F(W_{R^n} \rho_{R S}^{\otimes n} W^{\dagger}_{R^n}, \sigma_{R S}^{\otimes n} )} \ .
  \end{align}
\end{remark}

\section{Main result and proof} \label{sec_proof}

\begin{theorem} \label{thm_maininequality}
  For any density operator $\rho_{A B C}$ on $A \otimes B \otimes C$, where $A$, $B$, and $C$ are separable Hilbert spaces, there exists a trace-preserving completely positive map $\mathcal{T}_{B \to B C}$ from the space of operators on $B$ to the space of operators on $B \otimes C$ such that\footnote{$\cI_A$~denotes the identity map on the space of operators on $A$. We include it here in our notation to stress that the map leaves the $A$ system unaffected, but will usually omit it when it is clear from the context.} 
\begin{align} \label{eq_maininequalityp}
    2^{-\frac{1}{2} I(A : C | B)_\rho}
\leq F\bigl(\rho_{A B C}, (\cI_A \otimes \mathcal{T}_{B \to B C})(\rho_{A B})  \bigr) \ .
\end{align}
Furthermore, if $A$, $B$, and $C$  are finite-dimensional then $\mathcal{T}_{B \to B C}$ has the form
\begin{align} \label{eq_mappingform}
  X_B \mapsto V_{B C} \rho_{B C}^{\frac{1}{2}} (\rho_B^{-\frac{1}{2}} U_B  X_B U_B^{\dagger} \rho_B^{-\frac{1}{2}} \otimes \id_C) \rho_{B C}^{\frac{1}{2}} V_{B C}^{\dagger} 
\end{align}
on the support of $\rho_B$, where $U_B$ and $V_{B C}$ are unitaries on $B$ and $B \otimes C$, respectively.
\end{theorem}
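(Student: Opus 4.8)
The plan is to reduce the statement to the machinery of Sections \ref{sec_relativeentropy}--\ref{sec_fidelity} in two applications of the de Finetti/fidelity tools, connecting a carefully chosen reconstruction map on $n$-fold product states to the conditional mutual information via the typicality bounds on the relative entropy. First I would reduce to the finite-dimensional case (the separable case follows by an approximation argument, writing $A,B,C$ as limits of finite-dimensional truncations and using continuity of $F$ and lower semicontinuity of $I(A:C|B)$). So assume $\dim A,\dim B,\dim C<\infty$.

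The first main step is to rewrite $I(A:C|B)_\rho$ as a von Neumann relative entropy to which Lemma~\ref{lem_sequencefrombound} and Lemma~\ref{lem_Dmapping} can be applied. The natural identity is
\begin{align}
  I(A:C|B)_\rho = D\bigl(\rho_{ABC} \,\big\|\, \exp(\log\rho_{AB} + \log\rho_{BC} - \log\rho_B)\bigr)
\end{align}
only up to issues that $\exp(\log\rho_{AB}+\log\rho_{BC}-\log\rho_B)$ need not be a state; a cleaner route, and the one I would pursue, is to not seek a closed-form $\sigma$ but instead to work on $n$ copies. Consider $\rho_{ABC}^{\otimes n}$ together with the typical-subspace projectors of Lemma~\ref{lem_typicalsubspace} and Remark~\ref{rem_typicalsize} for the marginals $\rho_{AB}$, $\rho_B$, $\rho_{BC}$: these projectors commute with permutations and with the relevant marginal states, and cut the operators down to subspaces whose dimensions are controlled by $2^{nH(\cdot)}$ up to polynomial factors. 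Using the typical projectors, one produces a permutation-invariant operator $W_{(AB)^n}$ (built from $\rho_B^{-1/2}$ conjugations and typical projectors, in the spirit of~\eqref{eq_mappingform}) with $\|W\|_\infty \le 1$ such that, applying Lemma~\ref{lem_subfidelity} (in the diagonal-in-basis refinement, Remark~\ref{rem_Udiagonal}, for the part acting on the eigenbasis of $\rho_B$) and Remark~\ref{rem_subfidelity}, the $n$-th root of the fidelity $F(W\,\rho_{ABC}^{\otimes n}\,W^\dagger,\ \xi^{\otimes n})$ for an appropriate product state $\xi_{ABC}$ converges to a quantity bounded below by $2^{-\frac12 I(A:C|B)_\rho}$. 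The map $\cT_{B\to BC}$ is then extracted from the unitaries $U_B$, $V_{BC}$ and the fixed Petz-type skeleton $X\mapsto \rho_{BC}^{1/2}(\rho_B^{-1/2}X\rho_B^{-1/2}\otimes\id_C)\rho_{BC}^{1/2}$, which manifestly has the form~\eqref{eq_mappingform}; trace-preservation is arranged by composing with a fixed completely positive correction supported off the relevant subspace (a standard completion) and then noting that the fidelity bound only improves.

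The second main step is the entropic bookkeeping: one must show that the product state $\xi_{ABC}$ and the conjugating operators can be chosen so that the relevant relative entropies collapse to $H(\rho_{AB})+H(\rho_{BC})-H(\rho_B)-H(\rho_{ABC}) = I(A:C|B)_\rho$ in the limit. Here I would invoke Lemma~\ref{lem_Dmapping} (or the alternative~\eqref{eq_Dmonotone} of Remark~\ref{rem_Dmap}) to control the loss incurred when the typical-projector maps, which are only trace non-increasing, are applied to both arguments of a relative entropy; combined with Lemma~\ref{lem_boundfromsequence} this converts the operator inequalities coming from the typical subspaces into the desired bound on $F$ via $D_{1/2} = -2\log_2 F$. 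The Asymptotic Equipartition Property (Lemma~\ref{lem_typicalsubspace}) guarantees the trace-distance conditions ("$<1$", and in fact exponentially small) needed in the hypotheses of Lemmas~\ref{lem_boundfromsequence} and~\ref{lem_Dmapping}.

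The main obstacle I anticipate is the precise choice of the permutation-invariant operator $W_{(AB)^n}$ and the accompanying product state so that two things hold simultaneously: (i) $W$ is diagonal (in the appropriate $\rho_B$-eigenbasis factor) and bounded by $\id$, so that Remark~\ref{rem_Udiagonal} applies and the extracted $U_B$, $V_{BC}$ are genuinely of the advertised form, and (ii) the fidelity $F(W\rho^{\otimes n}W^\dagger,\xi^{\otimes n})$, which after taking logarithms becomes (half) a relative-entropy-type quantity, actually equals $I(A:C|B)_\rho$ in the limit and not something larger. The subtlety is that $\log\rho_{AB}$, $\log\rho_{BC}$, $\log\rho_B$ do not commute, so the "pinching" onto the joint typical subspaces of all three marginals must be carried out in a way that controls the non-commutativity; this is exactly where the polynomial de Finetti factors $(n+1)^{d^2}$ are affordable (they vanish after taking $n$-th roots) but the entropic estimates must be done to leading order in $n$. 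Getting this bookkeeping right — essentially a one-shot/asymptotic version of the equality case analysis of~\cite{Pet88,HJPW04} — is the technical heart of the argument.
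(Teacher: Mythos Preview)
Your proposal is essentially the paper's approach: reduce to finite dimensions, work on $n$ copies with typical projectors, use Lemma~\ref{lem_Dmapping} to control the relative entropy under the projector maps, convert to fidelity via Lemma~\ref{lem_DFidelity}, and extract $U_B,V_{BC}$ via Lemma~\ref{lem_subfidelity}, then take $n$th roots so the polynomial de Finetti factors vanish.

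The one place where your sketch is vaguer than it needs to be is the ``main obstacle'' you flag, and the paper's resolution is more concrete than a single operator $W_{(AB)^n}$. The starting relative entropy is simply $D(\rho_{ABC}\|\rho_{AB}\otimes\id_C)=-H(C|AB)$; after sandwiching by $\Pi_{B^nC^n}\Pi_{B^n}$ and invoking Lemma~\ref{lem_Dmapping} one has a fidelity lower bound $2^{\frac{1}{2}n(H(C|AB)-\delta)}$. Then the non-commutativity is handled not by a joint pinching but by two \emph{sequential} passes, each consisting of an eigenspace decomposition (Lemma~\ref{lem_fidelitydecomposition}) followed by Lemma~\ref{lem_subfidelity}: first decompose over the eigenprojectors $\Pi_b$ of $\rho_B^{\otimes n}$, use $\Pi_b=\sqrt{b}\,(\rho_B^{-1/2})^{\otimes n}\Pi_b$ with $b\le 2^{-n(H(B)-\delta')}$ to insert $\rho_B^{-1/2}$ and pick up $-H(B)$, and apply Lemma~\ref{lem_subfidelity} to obtain $U_B$; then repeat for $\rho_{BC}^{\otimes n}$, using $\Pi_d=d^{-1/2}(\rho_{BC}^{1/2})^{\otimes n}\Pi_d$ with $d\ge 2^{-n(H(BC)+\delta'')}$ to insert $\rho_{BC}^{1/2}$ and pick up $+H(BC)$, and apply Lemma~\ref{lem_subfidelity} again to obtain $V_{BC}$. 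The three entropies assemble to $H(BC)-H(B)-H(C|AB)=I(A:C|B)$, and no typical projector for $\rho_{AB}$ is needed. Trace preservation of the resulting map~\eqref{eq_mappingform} is immediate from $\tr_C(\rho_B^{-1/2}\rho_{BC}\rho_B^{-1/2})=\id_B$ on the support of $\rho_B$, so no separate completion step is required.
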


\begin{proof}
We first note that, by Remark~\ref{rem_infinitedimensional} below, it is sufficient to prove the statement for the case where $A$, $B$, and $C$ are finite-dimensional. Let $\delta > 0$, $\delta' > 0$, and $\delta'' > 0$. Let $n \in \mathbb{N}$ and let $\{\Pi_b\}_{b \in \bar{B}_n}$ and $\{\Pi_d\}_{d \in \bar{D}_n}$ be the families of projectors onto the eigenspaces of $\rho_B^{\otimes n}$ and $\rho_{B C}^{\otimes n}$, labelled by their eigenvalues $b \in \bar{B}_n$ and $d \in \bar{D}_n$, respectively. Furthermore, let $\bar{B}_n^{\delta'}$ and $\bar{D}_n^{\delta''}$ be the subsets of $\bar{B}_n$ and $\bar{D}_n$ defined by Lemma~\ref{lem_typicalsubspace} and define
\begin{align}
  \Pi_{B^n}  = \sum_{b \in \bar{B}_n^{\delta'}} \Pi_b \qquad \text{and} \qquad  \Pi_{B^n C^n}  = \sum_{d \in \bar{D}_n^{\delta''}} \Pi_d \ .
\end{align}
Note that for any $\eta > 0$ we have
\begin{align}  \label{eq_projtypical}
  \tr(\Pi_{B^n} \rho_B^{\otimes n})  \geq 1-\eta \qquad \text{and} \qquad
    \tr(\Pi_{B^n C^n} \rho_{B C}^{\otimes n})  \geq 1 - \eta 
\end{align}
for $n$ sufficiently large. Define the mapping on $(A \otimes B \otimes C)^{\otimes n}$
\begin{align}
  \cW_n : \quad X_{A^n B^n C^n} \mapsto (\id_{A^n} \otimes \Pi_{B^n C^n}) (\id_{A^n} \otimes \Pi_{B^n} \otimes \id_{C^n}) X_{A^n B^n C^n} (\id_{A^n} \otimes \Pi_{B^n} \otimes \id_{C^n}) (\id_{A^n} \otimes \Pi_{B^n C^n}) \ .
\end{align}
as well as the abbreviation
\begin{align}
   \Gamma_{A^n B^n C^n} =  \cW_n(\rho_{A B C}^{\otimes n}) = \Pi_{B^n C^n} \Pi_{B^n} \rho_{A B C}^{\otimes n} \Pi_{B^n} \Pi_{B^n C^n} \ .
\end{align}
It is easily seen that the map $\cW_n$ is trace non-increasing and completely positive. Furthermore, because of~\eqref{eq_projtypical}, we always have $\tr(\cW_n(\rho_{A B C}^{\otimes n})) = \tr(\Gamma_{A^n B^n C^n})  > 2/3$ for $\eta$ sufficiently small (using the gentle measurement lemma, see~\cite{Wil11} for instance). Lemma~\ref{lem_Dmapping} then tells us that, for $n$ sufficiently large,\footnote{We note that a similar conclusion may be obtained from the inequality given in Remark~\ref{rem_Dmap}. \label{ftn_altproof}}
\begin{multline}
    D\bigl(\Gamma_{A^n B^n C^n} \| \Pi_{B^n C^n} \Pi_{B^n} \rho_{A B}^{\otimes n} \Pi_{B^n} \Pi_{B^n C^n} \bigr)  
    = D\bigl( \cW_n(\rho_{A B C}^{\otimes n}) \big\| \cW_n((\rho_{A B} \otimes \id_C)^{\otimes n}) \bigr)  \\
    \leq n \bigl(D(\rho_{A B C} \| \rho_{A  B} \otimes \id) + \frac{\delta}{2}\bigr) 
    = n (-H(C|AB) + \frac{\delta}{2})  \ ,
  \end{multline}
where the last equality is the definition of the conditional entropy, $H(C|AB) =  - {\tr(\rho_{A B C} \log_2 \rho_{A B C})} + {\tr(\rho_{A B} \log_2 \rho_{A B})} =  - {D(\rho_{A B C} \| \rho_{A B} \otimes \id_C)}
$.  The relation between the fidelity and the relative entropy (Lemma~\ref{lem_DFidelity}) now allows us to  conclude that
\begin{align} 
  \frac{1}{\tr(\Gamma_{A^n B^n C^n})} F(\Gamma_{A^n B^n C^n}, \Pi_{B^n C^n} \Pi_{B^n} \rho_{A B}^{\otimes n} \Pi_{B^n} \Pi_{B^n C^n}) \geq 2^{\frac{1}{2} n (H(C|AB)-\frac{\delta}{2})} \ . 
\end{align}
We now use Lemma~\ref{lem_fidelityoperator} to remove the projector $\Pi_{B^n C^n}$ from the second argument and note that the factor $\tr(\Gamma_{A^n B^n C^n}) > 2/3$ can be absorbed by another factor $2^{-\frac{1}{4} n \delta}$ for $n$ sufficiently large. This shows that
\begin{align} \label{eq_typical}
  F(\Gamma_{A^n B^n C^n}, \Pi_{B^n} \rho_{A B}^{\otimes n} \Pi_{B^n}) \geq 2^{\frac{1}{2} n (H(C|AB)-\delta)} \ .
\end{align}

Because $\sum_{b \in \bar{B}_n} \Pi_b = \id_{B^n}$ we can  apply Lemma~\ref{lem_fidelitydecomposition}, which gives
\begin{multline}
  F(\Gamma_{A^n B^n C^n},  \Pi_{B^n} \rho_{A B}^{\otimes n} \Pi_{B^n}) 
  \leq \sum_{b \in \bar{B}_n}  F(\Gamma_{A^n B^n C^n},  \Pi_b \Pi_{B^n} \rho_{A B}^{\otimes n} \Pi_{B^n} \Pi_b) \\
    = \sum_{b \in \bar{B}_n^{\delta'}}  F(\Gamma_{A^n B^n C^n},  \Pi_b \rho_{A B}^{\otimes n}  \Pi_b)  
    \leq  |\bar{B}_n^{\delta'}| \max_{b \in \bar{B}_n^{\delta'}}  F(\Gamma_{A^n B^n C^n},  \Pi_b \rho_{A B}^{\otimes n}  \Pi_b)  
 \ ,
\end{multline}
where the equality follows from
\begin{align}
  \Pi_b \Pi_{B^n} = \begin{cases} \Pi_b & \text{if $b \in \bar{B}_n^{\delta'}$} \\ 0 & \text{otherwise.} \end{cases} 
\end{align}
Hence, there exists $b \in \bar{B}_n^{\delta'}$ such that
\begin{align} \label{eq_cctwo}
  F(\Gamma_{A^n B^n C^n},  \Pi_{B^n} \rho_{A B}^{\otimes n} \Pi_{B^n}) 
\leq |\bar{B}_n^{\delta'}| F(\Gamma_{A^n B^n C^n},  \Pi_b \rho_{A B}^{\otimes n}  \Pi_b) 
\leq \mathrm{poly}(n) F(\Gamma_{A^n B^n C^n},  \Pi_b \rho_{A B}^{\otimes n}  \Pi_b) \ ,
\end{align}
where the second inequality follows from Remark~\ref{rem_typicalsize}. By the definition of $\Pi_b$ we also have
\begin{align}
  \Pi_b =  \sqrt{b}  (\rho_B^{-\frac{1}{2}})^{\otimes n} \Pi_b \ ,
\end{align}
where $b$ is the eigenvalue of $\rho_B^{\otimes n}$ corresponding to $\Pi_b$. By the definition of $\bar{B}_n^{\delta'}$ we also have $\sqrt{b} \leq 2^{- \frac{1}{2} n(H(B) - \delta')}$ and, hence, 
\begin{multline} \label{eq_ccthree}
  F(\Gamma_{A^n B^n C^n},  \Pi_b \rho_{A B}^{\otimes n}  \Pi_b) 
  = \sqrt{b} F\bigl(\Gamma_{A^n B^n C^n},  (\rho_B^{-\frac{1}{2}})^{\otimes n}  \Pi_b \rho_{A B}^{\otimes n}  \Pi_b (\rho_B^{-\frac{1}{2}})^{\otimes n} \bigr) \\
  \leq  2^{-\frac{1}{2} n (H(B) - \delta')} F\bigl(\Gamma_{A^n B^n C^n},  (\rho_B^{-\frac{1}{2}})^{\otimes n}  \Pi_b \rho_{A B}^{\otimes n}  \Pi_b (\rho_B^{-\frac{1}{2}})^{\otimes n} \bigr) \\
   =   2^{-\frac{1}{2} n (H(B) - \delta')} F\bigl(\Pi_b (\rho_B^{-\frac{1}{2}})^{\otimes n}  \Gamma_{A^n B^n C^n} (\rho_B^{-\frac{1}{2}})^{\otimes n} \Pi_b,  \rho_{A B}^{\otimes n}  \bigr) \ ,
\end{multline}
where the equality follows from Lemma~\ref{lem_fidelityoperator}, which we will use repeatedly in the following. Furthermore, by Lemma~\ref{lem_subfidelity}, there must exist a unitary $U_B$ on $B$ such that
\begin{multline} \label{eq_ccfour}
  F\big(\Pi_b (\rho_B^{-\frac{1}{2}})^{\otimes n} \Gamma_{A^n B^n C^n}  (\rho_B^{-\frac{1}{2}})^{\otimes n} \Pi_b,  \rho_{A B}^{\otimes n} \bigr) 
    \leq  \mathrm{poly}(n)  F\bigl( (\rho_B^{-\frac{1}{2}})^{\otimes n} \Gamma_{A^n B^n C^n} (\rho_B^{-\frac{1}{2}})^{\otimes n} ,  U_B^{\otimes n} \rho_{A B}^{\otimes n}   (U_B^{\otimes n})^{\dagger}\bigr)  \\
    =   \mathrm{poly}(n)  F\bigl( \Gamma_{A^n B^n C^n}, (\rho_B^{-\frac{1}{2}})^{\otimes n} U_B^{\otimes n} \rho_{A B}^{\otimes n}   (U_B^{\otimes n})^{\dagger}  (\rho_B^{-\frac{1}{2}})^{\otimes n} \bigr)  \ .
\end{multline}
Combining now~\eqref{eq_typical}, \eqref{eq_cctwo}, \eqref{eq_ccthree}, and~\eqref{eq_ccfour} we obtain
\begin{multline} \label{eq_ccpart}
  2^{\frac{1}{2} n (H(C|A B) + H(B) - \delta - \delta')}
\leq  \mathrm{poly}(n) F\bigl(\Gamma_{A^n B^n C^n},  (\rho_B^{-\frac{1}{2}})^{\otimes n} U_B^{\otimes n}  \rho_{A B}^{\otimes n}  (U_B^{\otimes n})^{\dagger} (\rho_B^{-\frac{1}{2}})^{\otimes n} )\bigr) \\
=  \mathrm{poly}(n) F(\Pi_{B^n C^n} \Pi_{B^n}  \rho_{A B C}^{\otimes n} \Pi_{B^n} \Pi_{B^n C^n}, \gamma_{A B C}^{\otimes n}) \ ,
\end{multline}
where $\gamma_{A B C} =  \rho_B^{-\frac{1}{2}} U_B  \rho_{A B} U_B^{\dagger} \rho_B^{-\frac{1}{2}}$.

Next we use that $\sum_{d \in \bar{D}_n} \Pi_d = \id_{B^n C^n}$ and apply again Lemma~\ref{lem_fidelitydecomposition} to obtain
\begin{multline}
  F(\Pi_{B^n C^n} \Pi_{B^n}  \rho_{A B C}^{\otimes n} \Pi_{B^n} \Pi_{B^n C^n},  \gamma_{A B C}^{\otimes n}) 
  \leq \sum_{d \in \bar{D}_n}  F(\Pi_d \Pi_{B^n C^n} \Pi_{B^n}  \rho_{A B C}^{\otimes n} \Pi_{B^n} \Pi_{B^n C^n} \Pi_d, \gamma_{A B C}^{\otimes n}) \\
    = \sum_{d \in \bar{D}_n^{\delta''}}  F(\Pi_d \Pi_{B^n}  \rho_{A B C}^{\otimes n} \Pi_{B^n} \Pi_d, \gamma_{A B C}^{\otimes n}) 
    \leq |\bar{D}_n^{\delta''}| \max_{d \in \bar{D}_n^{\delta''}} F(\Pi_d \Pi_{B^n}  \rho_{A B C}^{\otimes n} \Pi_{B^n} \Pi_d, \gamma_{A B C}^{\otimes n})  \ ,
\end{multline}
where $|\bar{D}_n^{\delta''}| \leq \mathrm{poly}(n)$ by Remark~\ref{rem_typicalsize}. Hence, there exists $d \in \bar{D}_n^{\delta''}$ such that
\begin{align} \label{eq_cdone}
  F(\Pi_{B^n C^n} \Pi_{B^n}  \rho_{A B C}^{\otimes n} \Pi_{B^n} \Pi_{B^n C^n},  \gamma_{A B C}^{\otimes n}) 
\leq \mathrm{poly}(n) F(\Pi_d \Pi_{B^n}  \rho_{A B C}^{\otimes n} \Pi_{B^n} \Pi_d, \gamma_{A B C}^{\otimes n}) \ .
\end{align}
By the definition of $\bar{D}_n^{\delta''}$ we have
\begin{align}
  \Pi_d = \frac{1}{\sqrt{d}} (\rho_{B C}^{\frac{1}{2}})^{\otimes n} \Pi_d
  \end{align}
with $d \geq 2^{-n(H(B C) +  \delta'')}$. This implies 
\begin{multline} \label{eq_cdtwo}
  F(\Pi_d \Pi_{B^n}  \rho_{A B C}^{\otimes n} \Pi_{B^n} \Pi_d, \gamma_{A B C}^{\otimes n})
=  \sqrt{\frac{1}{d}}  F\bigl((\rho_{B C}^{\frac{1}{2}})^{\otimes n} \Pi_d \Pi_{B^n}  \rho_{A B C}^{\otimes n} \Pi_{B^n} \Pi_d (\rho_{B C}^{\frac{1}{2}})^{\otimes n}, \gamma_{A B C}^{\otimes n}\bigr) \\
\leq 2^{\frac{1}{2} n (H(B C) + \delta'')} F\bigl((\rho_{B C}^{\frac{1}{2}})^{\otimes n} \Pi_d \Pi_{B^n}  \rho_{A B C}^{\otimes n} \Pi_{B^n} \Pi_d (\rho_{B C}^{\frac{1}{2}})^{\otimes n}, \gamma_{A B C}^{\otimes n} \bigr) \\
= 2^{\frac{1}{2} n (H(B C) + \delta'')} F\bigl(\Pi_d \Pi_{B^n}  \rho_{A B C}^{\otimes n} \Pi_{B^n} \Pi_d, (\rho_{B C}^{\frac{1}{2}})^{\otimes n} \gamma_{A B C}^{\otimes n} (\rho_{B C}^{\frac{1}{2}})^{\otimes n}\bigr) \ .
\end{multline}
We use again Lemma~\ref{lem_subfidelity}, which asserts that there must exist a unitary $V_{B C}$ on $B \otimes C$ such that
\begin{align} \label{eq_VBCstep}
  F\bigl(\Pi_d \Pi_{B^n}  \rho_{A B C}^{\otimes n} \Pi_{B^n} \Pi_d, (\rho_{B C}^{\frac{1}{2}})^{\otimes n} \gamma_{A B C}^{\otimes n} (\rho_{B C}^{\frac{1}{2}})^{\otimes n}\bigr) 
  \leq \mathrm{poly}(n) F\bigl(\rho_{A B C}^{\otimes n}, V_{B C}^{\otimes n} (\rho_{B C}^{\frac{1}{2}})^{\otimes n} \gamma_{A B C}^{\otimes n} (\rho_{B C}^{\frac{1}{2}})^{\otimes n} (V_{B C}^{\otimes n})^{\dagger}\bigr) \ .
\end{align}
Combining this with~\eqref{eq_ccpart}, \eqref{eq_cdone} and~\eqref{eq_cdtwo}  yields
\begin{align}
  2^{\frac{1}{2} n (H(C | A B) + H(B) - H(B C) - \delta - \delta' - \delta'')} 
  \leq \mathrm{poly}(n) F\bigl(\rho_{A B C}^{\otimes n}, V_{B C}^{\otimes n} (\rho_{B C}^{\frac{1}{2}})^{\otimes n} \gamma_{A B C}^{\otimes n} (\rho_{B C}^{\frac{1}{2}})^{\otimes n} (V_{B C}^{\otimes n})^{\dagger}\bigr) \ .
\end{align}
We take the $n$th root, use $H(B C) - H(B) - H(C | A B) = I(A: C | B)$, and insert the expression for $\gamma_{A B C}$ to rewrite this as
\begin{multline}
    2^{-\frac{1}{2} I(A : C | B) - \delta - \delta' - \delta''}
\leq  \sqrt[n]{\mathrm{poly}(n)} F(\rho_{A B C},  V_{B C} \rho_{B C}^{\frac{1}{2}} \rho_B^{-\frac{1}{2}} U_B  \rho_{A B} U_B^{\dagger} \rho_B^{-\frac{1}{2}} \rho_{B C}^{\frac{1}{2}} V_{B C}^{\dagger} ) \\
 \leq  \sqrt[n]{\mathrm{poly}(n)} \max_{U_B, V_{BC}} F(\rho_{A B C},  V_{B C} \rho_{B C}^{\frac{1}{2}} \rho_B^{-\frac{1}{2}} U_B  \rho_{A B} U_B^{\dagger} \rho_B^{-\frac{1}{2}} \rho_{B C}^{\frac{1}{2}} V_{B C}^{\dagger} ) \ ,
\end{multline}
where the maximum is take over unitary transformations $U_B$ and $V_{BC}$. As this maximised fidelity is now independent of $n$ and because $\sqrt[n]{\mathrm{poly}(n)}$ approaches $1$ for $n$ large, we conclude that
\begin{align}
    2^{-\frac{1}{2} I(A : C | B) - \delta - \delta' - \delta''}
\leq \max_{U_B, V_{BC}} F(\rho_{A B C},  V_{B C} \rho_{B C}^{\frac{1}{2}} \rho_B^{-\frac{1}{2}} U_B  \rho_{A B} U_B^{\dagger} \rho_B^{-\frac{1}{2}} \rho_{B C}^{\frac{1}{2}} V_{B C}^{\dagger} ) \ .
\end{align}
Inequality~\eqref{eq_maininequalityp} now follows because $\delta > 0$, $\delta' > 0$, and $\delta'' > 0$ were arbitrary.    

It remains to verify that the map $\mathcal{T}_{B \to B C}$ is trace-preserving. But this  follows from the observation that
\begin{align} \label{eq_Ttracenonincreasing}
  \tr_C(U_B^{\dagger} \rho_B^{-\frac{1}{2}} \rho_{B C}^{\frac{1}{2}} V_{B C}^{\dagger} V_{B C} \rho_{B C}^{\frac{1}{2}} \rho_B^{-\frac{1}{2}} U_B)
  =  \tr_C(U_B^{\dagger} \rho_B^{-\frac{1}{2}} \rho_{B C} \rho_B^{-\frac{1}{2}} U_B)
  = U_B^{\dagger} \rho_B^{-\frac{1}{2}} \rho_{B} \rho_B^{-\frac{1}{2}} U_B
  = \id_B \ .
\end{align}
\end{proof}

\begin{remark} \label{rem_infinitedimensional}
Any proof of the main claim of Theorem~\ref{thm_maininequality}, 
  \begin{align}  \label{eq_maininequalitypp}
    2^{-\frac{1}{2} I(A : C | B)_\rho}
\leq \sup_{\mathcal{T}_{B \to B C}} F\bigl(\rho_{A B C}, (\cI_A \otimes \mathcal{T}_{B \to B C})(\rho_{A B})  \bigr) \ ,
  \end{align}
which uses the assumption that $A$, $B$, and $C$ are finite-dimensional Hilbert spaces, implies that the claim also holds under the less restrictive assumption that these spaces are separable.
  
To see this, let $\{P_A^{k}\}_{k \in \mathbb{N}}$, $\{P_B^{k}\}_{k \in \mathbb{N}}$, and $\{P_C^{k}\}_{k_C \in \mathbb{N}}$ be  sequences of finite-rank projectors on $A$, $B$, and $C$ which converge to $\id_A$, $\id_B$, and $\id_C$, respectively, with respect to the weak (and, hence, also the strong) operator topology (see, e.g., Definition~2 of~\cite{FAR11}). Define furthermore the density operators
\begin{align}
  \rho_{A B C}^{k, k'} = \frac{(P_A^k \otimes P_{B}^{k'} \otimes P_{C}^k) \rho_{A B C} (P_A^k \otimes P_{B}^{k'} \otimes P_{C}^k)}{\tr\bigl((P_A^k \otimes P_{B}^{k'} \otimes P_{C}^k) \rho_{A B C}\bigr)}
\end{align}
and
\begin{align}
  \rho_{A B C}^k = \frac{(P_A^k \otimes \id_B \otimes P_{C}^k) \rho_{A B C} (P_A^k \otimes \id_B \otimes P_{C}^k)}{\tr\bigl((P_A^k \otimes \id_B \otimes P_{C}^k) \rho_{A B C}\bigr)} \ .
\end{align}
We note that, for any $k \in \mathbb{N}$, the sequence $\{\rho_{A B C}^{k, k'}\}_{k' \in \mathbb{N}}$  converges to $\rho_{A B C}^k$ in the trace-norm (see, e.g., Corollary~2 of~\cite{FAR11}).  Also, $\{\rho_{A B C}^k\}_{k \in \mathbb{N}}$ converges to $\rho_{A B C}$ in the trace norm. 

Let us first consider the left hand side of~\eqref{eq_maininequalitypp}. Because, for any fixed finite dimension of system $A$, the conditional mutual information ${I(A : C | B)}_\rho = H(A|B)_\rho - H(A | B C)_\rho$ is continuous in $\rho$ with respect to the trace norm~\cite{AF03}, we have
\begin{align} \label{eq_Ikpcont}
  \lim_{k' \to \infty} I(A : C | B)_{\rho^{k, k'}} = I(A : C | B)_{\rho^{k}} 
\end{align}
for any $k \in \mathbb{N}$. In addition, using the fact that local projectors applied to the subsystems $A$ and $C$ can only decrease $I(A: C | B)_\rho$, provided we scale by the probability of such a projector,
\begin{align}
  \tr\bigl((P_A^k \otimes \id_B \otimes P_{C}^k) \rho_{A B C}\bigr) I(A : C | B)_{\rho^{k}} \leq I(A : C | B)_{\rho} 
\end{align}
holds for any $k \in \mathbb{N}$. 
\comment{To see this, consider first a single local projector, say $P^k_A$ on $A$, in which case the claim reads 
\begin{align*}
  I(A : C | B)_{\rho} \geq \tr(P^k_A \rho) I(A : C | B)_{P^k_A \rho P^k_A / \tr(P^k_A \rho)} \ .
\end{align*}
To prove this, assume that a measurement with respect to $P^k_A$ as well as its orthogonal complement is applied to $\rho$. Let furthermore $Z$ be a random variable that stores the outcome of this measurement. Then, by strong subadditivity, we have 
\begin{align*}
  I(A : C | B)_\rho = H(C|B)_\rho - H(C|A B)_\rho \geq H(C|B)_{\rho'} - H(C|ABZ)_{\rho'} \geq H(C|B Z)_{\rho'} - H(C|ABZ)_{\rho'} = I(A : C | B Z)_{\rho'} \ ,
\end{align*}
where $\rho'$ denotes the state after the measurement. Because $ I(A : C | B Z)_{\rho'}$ can be written as the expectation over the mutual information of the post-measurement states conditioned on the different values of $Z$, and because all these terms are non-negative, the above claim follows.   By symmetry, the same argument can be repeated for a projector $P^k_C$ on $C$. }
Because $\lim_{k \to \infty} \tr\bigl((P_A^k \otimes \id_B \otimes P_{C}^k) \rho_{A B C}\bigr) = \tr(\rho) = 1$, we find
\begin{align}
  \limsup_{k \to \infty} I(A : C | B)_{\rho^{k}}  \leq I(A : C | B)_{\rho} \ .
\end{align}
The combination of this statement with~\eqref{eq_Ikpcont} yields
\begin{align} \label{eq_Iasymptotics}
  2^{-\frac{1}{2} I(A : C | B)_\rho} \leq \liminf_{k \to \infty} \lim_{k' \to \infty} 2^{-\frac{1}{2} I(A : C | B)_{\rho^{k, k'}}} \ .
\end{align}

We now consider the right hand side of~\eqref{eq_maininequalitypp}. Let $\delta > 0$ and note that, for sufficiently large $k$ and $k'$, we have 
\begin{align}
  \bigl\| \rho_{A B C}^{k, k'} - \rho_{A B C} \bigr\|_1 < (\delta/2)^2 \ .
\end{align}
Because the trace norm is monotonically non-increasing under trace-preserving completely positive maps, we also have
\begin{align}
  \bigl\| \mathcal{T}_{B \to B C}(\rho_{A B}^{k, k'}) - \mathcal{T}_{B \to B C}(\rho_{A B}) \bigr\|_1 < (\delta/2)^2
\end{align} 
for any $\mathcal{T}_{B \to B C}$. Lemma~\ref{lem_fidelitycontinuoustracenorm} then implies that 
\begin{align}
   F\bigl(\rho_{A B C}^{k, k'}, \mathcal{T}_{B \to B C}(\rho_{A B}^{k, k'}) \bigr)
  < F\bigl(\rho_{A B C}, \mathcal{T}_{B \to B C}(\rho_{A B})  \bigr)  + \delta
\end{align}
But because this holds for any $\mathcal{T}_{B \to B C}$, we have
\begin{align}
  \sup_{\mathcal{T}_{B \to B C}} F\bigl(\rho_{A B C}^{k, k'}, \mathcal{T}_{B \to B C}(\rho_{A B}^{k, k'}) \bigr)
  \leq \sup_{\mathcal{T}_{B \to B C}} F\bigl(\rho_{A B C}, \mathcal{T}_{B \to B C}(\rho_{A B})  \bigr)  + \delta \ .
\end{align}
Because this holds for all $\delta > 0$ and sufficiently large $k$ and $k'$, we find that
\begin{align} \label{eq_Fasymptotics}
  \limsup_{k \to \infty} \limsup_{k' \to \infty}   \sup_{\mathcal{T}_{B \to B C}} F\bigl(\rho_{A B C}^{k, k'},  \mathcal{T}_{B \to B C}(\rho_{A B}^{k, k'}) \bigr)
  \leq \sup_{\mathcal{T}_{B \to B C}} F\bigl(\rho_{A B C}, \mathcal{T}_{B \to B C}(\rho_{A B})  \bigr)  \ .
\end{align}

To conclude the argument, we observe that if the inequality~\eqref{eq_maininequalitypp} is valid for finite-dimensional spaces $A$, $B$, and $C$ we have in particular
\begin{align}
  \liminf_{k \to \infty} \lim_{k' \to \infty} 2^{-\frac{1}{2} I(A : C | B)_{\rho^{k, k'}}}
\leq   \limsup_{k \to \infty} \limsup_{k' \to \infty}   \sup_{\mathcal{T}_{B \to B C}} F\bigl(\rho_{A B C}^{k, k'}, \mathcal{T}_{B \to B C}(\rho_{A B}^{k, k'}) \bigr) \ .
\end{align}
Combining this with~\eqref{eq_Iasymptotics} and~\eqref{eq_Fasymptotics} then proves the claim that the inequality holds for arbitrary separable spaces  $A$, $B$, and $C$. 
\end{remark}

\begin{remark}
  By Remark~\ref{rem_Udiagonal}, the unitary $U_B$ chosen in~\eqref{eq_ccfour} may be replaced by an operator which commutes with $\rho_B$ and satisfies $\|U_B\|_{\infty} \leq 1$. Analogously, the unitary $V_{B C}$ chosen in~\eqref{eq_VBCstep} may be replaced by an operator of the form $V_{B C} = V'_B V''_{B C}$ where $V'_B$ commutes with~$\rho_B$ and $V''_{B C}$ commutes with~$\rho_{B C}$, and where $\|V'_B\|_\infty \leq 1$ and $\|V''_{B C}\|_\infty \leq 1$. Similarly to~\eqref{eq_Ttracenonincreasing} one can see that the resulting recovery map $\cT_{B \to B C}$ is trace non-increasing. Furthermore, we have
  \begin{multline}
     \cT_{B \to B C}(\rho_B) 
     = V_{B C} \rho_{B C}^{\frac{1}{2}} (\rho_B^{-\frac{1}{2}} U_B  \rho_B U_B^{\dagger} \rho_B^{-\frac{1}{2}} \otimes \id_C) \rho_{B C}^{\frac{1}{2}} V_{B C}^{\dagger} 
     = V_{B C} \rho_{B C}^{\frac{1}{2}} (U_B  U_B^{\dagger} \otimes \id_C) \rho_{B C}^{\frac{1}{2}}  V_{B C}^{\dagger}  \\
     \leq  V_{B C} \rho_{B C}^{\frac{1}{2}} \rho_{B C}^{\frac{1}{2}} V_{B C}^{\dagger}   
     = V'_B  \rho_{B C}^{\frac{1}{2}} V''_{B C} (V''_{B C})^{\dagger}  \rho_{B C}^{\frac{1}{2}} (V'_B)^{\dagger}
    \leq V'_B  \rho_{B C} (V'_B)^{\dagger} \ .
  \end{multline}
  In particular, we have
  \begin{align}
    \tr_C\bigl(\cT_{B \to B C}(\rho_B)\bigr) \leq \rho_B \quad \text{and} \quad \tr_B\bigl(\cT_{B \to B C} (\rho_B)\bigr) \leq \rho_C \ .
  \end{align}
  This implies that one can always choose a recovery map that exactly reproduces the marginal on $B$ and the marginal on~$C$. 
  
  \comment{One may specify this recovery map explicitly as follows. Let $\rho'_{B C} = \cT_{B \to B C}(\rho_B)$, where $\cT_{B \to B C}$ is the trace non-increasing map defined above. Furthermore, define $\omega_B = \rho_B - \rho'_B$ and $\omega_C = \rho_C - \rho'_C$. Note that $\omega_B$ and $\omega_C$ are non-negative and have the same trace $\tr(\omega_B) = \tr(\omega_C) = 1 - \tr(\rho'_{B C})$. A trace-preserving recovery map may then be defined by
  \begin{align} \label{eq_recoverymapadded}
    X_B \mapsto \cT(X_B) + \bigl(\tr(X_B) - \tr\cT(X_B)\bigr) \frac{\omega_B}{\tr(\omega_B)} \otimes \frac{\omega_C}{\tr(\omega_C)} \ .
  \end{align}
  Note that the second term is non-negative for any non-negative input. The map is therefore completely positive. Furthermore, for $X_B = \rho_B$ we have
  \begin{align}
    \tr(X_B) - \tr\cT(X_B)
    = \tr(\rho_B) - \tr(\rho'_{B C}) = \tr(\omega_B) = \tr(\omega_C)  \ .
   \end{align}
  Hence, the output of the map defined by~\eqref{eq_recoverymapadded} on input $\rho_B$ is
  \begin{align}
    \rho'_{B C} + \frac{\omega_B \otimes \omega_C}{\tr(\omega_C)} \ .
  \end{align} 
  The marginal of this operator on $B$ is obviously equal to $\rho'_B = \omega_B = \rho_B$.  Similarly, the marginal on $C$ is equal to $\rho_C$. }
 \end{remark}
        
\appendix

\section*{Appendices}

\section{One-shot relative entropies} \label{app_relativeentropy}

In this appendix, we briefly review the \emph{generalised relative entropy} introduced in~\cite{WanRen12,DKFRR13} and the \emph{smooth max-relative entropy} introduced in~\cite{Datta08} (we will use a slightly modified variant defined in~\cite{TCR09,Tom12}). 

\begin{definition} \label{def_oneshotentropy}
For any two non-negative operators $\rho$ and $\sigma$ and for any $\epsilon \in [0,\tr(\rho)]$, the \emph{generalised relative entropy}  is defined  by\footnote{For $\epsilon = 0$ the quantity is defined via continuous extension, i.e., $D_H^0(\rho \| \sigma) = \lim_{\epsilon \downarrow 0} D_H^\epsilon(\rho \| \sigma)$.}
\begin{align} \label{eq_DHdefinition}
  2^{- D_H^\epsilon(\rho \| \sigma)} = \inf_{\substack{0 \leq Q \leq 1 \\ \tr(Q \rho) \geq \epsilon}} \tr(Q \sigma) / \epsilon\ ,
\end{align}
where the optimisation is over operators $Q$. For $\rho$ a density operator,\footnote{We note that the definition proposed in~\cite{TCR09,Tom12} applies more generally to any $\rho$ with $\tr(\rho) \leq 1$.} the \emph{$\epsilon$-smooth max-relative entropy} is defined by
  \begin{align} \label{eq_Dmaxdef}
    2^{-D_{\max}^\epsilon(\rho\|\sigma)}
  =
    \sup_{\substack{\mu \bar{\rho} \leq \sigma \\ 1-F(\bar{\rho}, \rho)^2 \leq \epsilon^2}} \mu \ ,
  \end{align}
  where the optimisation is over non-negative operators $\bar{\rho}$ with $\tr(\bar{\rho}) \leq 1$.\footnote{The expressions on the right hand side of~\eqref{eq_DHdefinition} and~\eqref{eq_Dmaxdef} may be equal to~$0$, in which case the corresponding relative entropy is defined to be equal to $\infty$.}
\end{definition}

\begin{remark} \label{rem_DHrescaling}
  The second argument, $\sigma$, of the two one-shot entropy measures of Definition~\ref{def_oneshotentropy} may be rescaled easily because
  \begin{align} 
    D_H^\epsilon(\rho \| \lambda \sigma) & = D_H^\epsilon(\rho \| \sigma) - \log_2(\lambda)  \\
    D_{\max}^\epsilon(\rho \| \lambda \sigma) & = D_{\max}^\epsilon(\rho \| \sigma) - \log_2(\lambda)
  \end{align}  
  holds for any $\lambda > 0$. 
\end{remark}

The generalised relative entropy may be expressed equivalently as follows. 
\begin{lemma}
For any two non-negative operators $\rho$ and $\sigma$, 
\begin{align} \label{eq_DHdual}
2^{-D_H^\epsilon(\rho \| \sigma)} =  \sup_{\substack{\mu (\rho-Y) \leq \sigma  \\ Y \geq 0 \\ \mu \geq 0}} \mu(1- \tr(Y)/\epsilon) \ .
\end{align}
where the optimisation is over operators $Y$ and reals $\mu$. 
\end{lemma}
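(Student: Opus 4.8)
The right-hand side is exactly the Lagrange dual of the minimization defining $2^{-D_H^\epsilon(\rho\|\sigma)}$ in~\eqref{eq_DHdefinition}, with $Y$ the multiplier for the constraint $Q\le\id$ and $\mu$ the multiplier for $\tr(Q\rho)\ge\epsilon$. Accordingly the plan is to prove the two inequalities separately, the ``$\ge$'' direction being elementary weak duality and the ``$\le$'' direction requiring a minimax argument to rule out a duality gap. Throughout I take $\epsilon>0$; the case $\epsilon=0$ is then covered by the continuous extension in Definition~\ref{def_oneshotentropy}.

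For ``$\ge$'', fix any $Q$ with $0\le Q\le\id$ and $\tr(Q\rho)\ge\epsilon$, and any $Y\ge 0$, $\mu\ge 0$ with $\mu(\rho-Y)\le\sigma$. Then $\tr(Q\sigma)\ge\mu\tr\bigl(Q(\rho-Y)\bigr)=\mu\tr(Q\rho)-\mu\tr(QY)\ge\mu\epsilon-\mu\tr(Y)$, using $\mu\ge 0$, $\tr(Q\rho)\ge\epsilon$, and $0\le\tr(QY)\le\tr(Y)$ (the last since $0\le Q\le\id$ and $Y\ge 0$). Dividing by $\epsilon$ and taking the infimum over $Q$ on the left and the supremum over $(\mu,Y)$ on the right gives $2^{-D_H^\epsilon(\rho\|\sigma)}\ge$ RHS.

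For ``$\le$'', I would dualize only the trace constraint, keeping the compact convex set $\{0\le Q\le\id\}$ as the $Q$-domain:
\begin{align}
  2^{-D_H^\epsilon(\rho\|\sigma)}\,\epsilon
  = \inf_{0\le Q\le\id}\ \sup_{\mu\ge 0}\ \bigl[\tr(Q\sigma)+\mu(\epsilon-\tr(Q\rho))\bigr] \ .
\end{align}
The integrand is affine in $Q$ and affine in $\mu$, so Sion's minimax theorem (or standard SDP strong duality; the primal is strictly feasible whenever $\epsilon<\tr(\rho)$, and $\epsilon=\tr(\rho)$ can be treated by continuity) permits the interchange, yielding $2^{-D_H^\epsilon(\rho\|\sigma)}\,\epsilon=\sup_{\mu\ge 0}\bigl[\mu\epsilon+\min_{0\le Q\le\id}\tr(Q(\sigma-\mu\rho))\bigr]$. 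The inner minimum equals $-\tr\bigl((\mu\rho-\sigma)_+\bigr)$, attained at the projector onto the positive eigenspace of $\mu\rho-\sigma$, since $\max_{0\le Q\le\id}\tr(QN)=\tr(N_+)$ for Hermitian $N$. Using next the identity $\tr(N_+)=\min\{\tr(Z):Z\ge 0,\ Z\ge N\}$ (valid for Hermitian $N$, with minimizer $Z=N_+$) I obtain $2^{-D_H^\epsilon(\rho\|\sigma)}\,\epsilon=\sup\{\mu\epsilon-\tr(Z):\mu\ge 0,\ Z\ge 0,\ \mu\rho-\sigma\le Z\}$. Finally, dividing by $\epsilon$ and substituting $Z=\mu Y$—a bijection between $\{Z\ge 0\}$ and $\{\mu Y:Y\ge 0\}$ for $\mu>0$, under which $\mu\rho-\sigma\le Z$ becomes $\mu(\rho-Y)\le\sigma$ and $\mu-\tr(Z)/\epsilon$ becomes $\mu(1-\tr(Y)/\epsilon)$—together with the observation that the $\mu=0$ term contributes the value $0$ on both sides, gives the claimed expression.

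The main obstacle is the interchange of $\inf$ and $\sup$, i.e.\ establishing the absence of a duality gap; I resolve it by retaining compactness of $\{0\le Q\le\id\}$ and invoking Sion's minimax rather than full SDP strong duality. The remaining steps—evaluating $\min_Q\tr(Q(\sigma-\mu\rho))$, the positive-part identity, and the $Z=\mu Y$ bookkeeping including the $\mu=0$ corner—are routine. In infinite dimensions one replaces norm-compactness of $\{0\le Q\le\id\}$ by weak-operator compactness and argues accordingly, but for the application it is enough to treat the finite-dimensional case.
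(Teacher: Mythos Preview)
Your proof is correct and follows the same underlying idea as the paper's---Lagrangian/SDP duality followed by the change of variables $X=\mu Y$ (your $Z=\mu Y$)---but the execution differs. The paper simply invokes the SDP dual
\[
  2^{-D_H^\epsilon(\rho\|\sigma)} = \sup_{\substack{\mu\rho\le\sigma+X \\ X\ge 0,\ \mu\ge 0}} \mu - \tr(X)/\epsilon
\]
as established in~\cite{DKFRR13} and then substitutes $X=\mu Y$, handling the $\mu=0$ case with one sentence. You instead give a self-contained derivation: weak duality is shown directly, and for strong duality you dualize only the scalar constraint $\tr(Q\rho)\ge\epsilon$, retain the compact domain $\{0\le Q\le\id\}$, swap via Sion, evaluate the inner minimum as $-\tr((\mu\rho-\sigma)_+)$, and then reintroduce an operator multiplier through the variational identity $\tr(N_+)=\min\{\tr(Z):Z\ge 0,\ Z\ge N\}$. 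Your route avoids any external citation for strong duality and makes the role of compactness explicit, at the price of a few extra steps; the paper's route is shorter but outsources the strong-duality argument to~\cite{DKFRR13}.
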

\begin{proof}
As shown in~\cite{DKFRR13}, the expression on the right hand side of~\eqref{eq_DHdefinition} is a semidefinite program whose dual has the form
\begin{align}
  2^{-D_H^\epsilon(\rho \| \sigma)} =  \sup_{\substack{\mu \rho \leq \sigma + X \\ X \geq 0 \\ \mu \geq 0}} \mu - \tr(X)/\epsilon \ ,
\end{align}
where the optimisation is over operators $X$ and reals $\mu$. Replacing $X$ by $\mu Y$ we can rewrite this as 
\begin{align} 
2^{-D_H^\epsilon(\rho \| \sigma)} = \sup_{\substack{\mu (\rho-Y) \leq \sigma  \\ \mu Y \geq 0 \\ \mu \geq 0}} \mu(1- \tr(Y)/\epsilon) \ .
\end{align}
To conclude the proof, we note that the condition $\mu Y \geq 0$ can be replaced by $Y \geq 0$ because $\mu \geq 0$ and because for $\mu = 0$ the value of $Y$ is irrelevant. 
\end{proof}

\begin{remark} \label{rem_DHepsilondependence}
It is obvious from this representation that $D_H^\epsilon(\rho \| \sigma)$ is a monotonically non-increasing function in $\epsilon$.
\end{remark}

The following lemma provides an upper bound on $D_H^\epsilon(\rho \| \sigma)$, expressed in terms of the trace distance of $\rho$ to an operator $\bar{\rho}$ that satisfies a simple operator inequality. 

\begin{lemma} \label{lem_DHupperbound}
  Let $\rho$, $\bar{\rho}$, and $\sigma$ be non-negative operators and suppose that $\bar{\rho} \leq \lambda \sigma$ for some $\lambda > 0$. Then
  \begin{align}
    D_H^\epsilon(\rho \| \sigma) \leq  \log_2(\lambda) - \log_2\bigl(1 - \Delta(\rho, \bar{\rho})/\epsilon\bigr)\ .
  \end{align}
\end{lemma}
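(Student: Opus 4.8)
The plan is to work directly from the semidefinite-program definition~\eqref{eq_DHdefinition} of $D_H^\epsilon$ together with the variational formula~\eqref{eq_tracedistancemaximisation} for the trace distance. Since $2^{-D_H^\epsilon(\rho \| \sigma)}$ is an infimum of $\tr(Q\sigma)/\epsilon$ over the feasible operators $Q$, proving an upper bound on $D_H^\epsilon(\rho\|\sigma)$ amounts to establishing a lower bound on $\tr(Q\sigma)/\epsilon$ that is uniform over all $Q$ with $0 \leq Q \leq \id$ and $\tr(Q\rho) \geq \epsilon$.

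First I would fix such a $Q$. The hypothesis $\bar{\rho} \leq \lambda\sigma$ gives $\lambda\sigma - \bar{\rho} \geq 0$, and pairing with $Q \geq 0$ yields $\lambda\tr(Q\sigma) \geq \tr(Q\bar{\rho})$. Next I would write $\tr(Q\bar{\rho}) = \tr(Q\rho) - \tr\bigl(Q(\rho - \bar{\rho})\bigr)$ and bound the two terms separately: the feasibility constraint gives $\tr(Q\rho) \geq \epsilon$, while~\eqref{eq_tracedistancemaximisation} gives $\tr\bigl(Q(\rho-\bar{\rho})\bigr) \leq \bigl|\tr\bigl(Q(\rho-\bar{\rho})\bigr)\bigr| \leq \Delta(\rho,\bar{\rho})$. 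Combining these, $\tr(Q\bar{\rho}) \geq \epsilon - \Delta(\rho,\bar{\rho})$, whence $\tr(Q\sigma)/\epsilon \geq \bigl(1 - \Delta(\rho,\bar{\rho})/\epsilon\bigr)/\lambda$.

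Taking the infimum over all feasible $Q$ (this set is nonempty, since $Q = \id$ satisfies $\tr(Q\rho) = \tr(\rho) \geq \epsilon$ for $\epsilon$ in the admissible range) then gives $2^{-D_H^\epsilon(\rho\|\sigma)} \geq \bigl(1 - \Delta(\rho,\bar{\rho})/\epsilon\bigr)/\lambda$, and applying $-\log_2$ to both sides yields the claimed inequality.

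I do not expect a substantial obstacle here; the only points that need a little care are the degenerate cases. If $1 - \Delta(\rho,\bar{\rho})/\epsilon \leq 0$ then the right-hand side is $+\infty$ by convention and there is nothing to prove, and the case $\epsilon = 0$ is covered by the continuous extension used to define $D_H^0$. One could alternatively phrase the argument via Remark~\ref{rem_DHrescaling}, first reducing to $\lambda = 1$ by replacing $\sigma$ with $\lambda\sigma$, but the direct computation above is already essentially a one-liner.
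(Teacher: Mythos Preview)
Your argument is correct. The route differs from the paper's, though both are short. The paper works with the dual formulation~\eqref{eq_DHdual}: it takes the Jordan decomposition $\rho-\bar\rho=Y^+-Y^-$, observes that $\rho-Y^+\leq\bar\rho\leq\lambda\sigma$, and plugs the feasible pair $(\mu,Y)=(1/\lambda,Y^+)$ into the dual supremum to obtain $2^{-D_H^\epsilon(\rho\|\sigma)}\geq \lambda^{-1}\bigl(1-\tr(Y^+)/\epsilon\bigr)\geq \lambda^{-1}\bigl(1-\Delta(\rho,\bar\rho)/\epsilon\bigr)$. You instead bound the primal infimum~\eqref{eq_DHdefinition} directly, using the variational expression~\eqref{eq_tracedistancemaximisation} to control $\tr\bigl(Q(\rho-\bar\rho)\bigr)$ for every feasible $Q$. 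Your approach is marginally more self-contained, since it does not rely on the SDP duality lemma preceding Lemma~\ref{lem_DHupperbound}; the paper's approach makes the role of the operator $Y^+$ (the ``error'' in the approximation $\bar\rho\approx\rho$) more explicit, which fits naturally with how the lemma is later invoked. The two arguments are of course weak-duality mirrors of one another.
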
 

\begin{proof}
  Let $Y^+ \geq 0$ and $Y^- \geq 0$ be the positive and negative parts of $\rho - \bar{\rho}$, respectively, so that $ \rho - \bar{\rho} = Y^+ - Y^-$  and $\tr(Y^+)  \leq \Delta(\rho, \bar{\rho})$ (see Eq.~\ref{eq_tracedistancepositive}).  We then have, in particular, $\rho - Y^+ \leq \bar{\rho}$ and, using the assumption that $\bar{\rho} \leq \lambda \sigma$,
  \begin{align}
    \rho - Y^+ \leq  \lambda \sigma \ .
  \end{align}
  This means that $\mu = \frac{1}{\lambda}$ and $Y=Y^+$ fulfill the constraints of the maximisation in~\eqref{eq_DHdual} for $D_H^\epsilon(\rho \| \sigma)$ and, hence,
  \begin{align}
    2^{-D_H^\epsilon(\rho\|\sigma)} 
    \geq \frac{1}{\lambda} \bigl(1- \frac{\tr(Y^+)}{\epsilon} \bigr) 
    \geq \frac{1}{ \lambda}\bigl(1- \frac{\Delta(\rho, \bar{\rho})}{\epsilon}\bigr) \ .
  \end{align}
  Taking the negative logarithm on both sides of the inequality yields the claim.
\end{proof}

Although we are not using this for our argument, we note that Lemma~\ref{lem_DHupperbound} can be extended to a relation between the smooth relative max-entropy and the generalised relative entropy.

\begin{lemma} \label{lem_DHDmax}
  Let $\rho$ be a density operator, let $\sigma$ be a non-negative operator, and let $\epsilon > \epsilon' \geq 0$. Then
  \begin{align}
    D_H^{\epsilon}(\rho \| \sigma) \leq D_{\max}^{\epsilon'}(\rho \| \sigma) + \log_2 \frac{\epsilon}{\epsilon - \epsilon'} \ .
  \end{align}
\end{lemma}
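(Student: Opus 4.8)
The plan is to reduce the claim to Lemma~\ref{lem_DHupperbound}, which already bounds $D_H^{\epsilon}(\rho\|\sigma)$ in terms of an operator $\bar\rho$ obeying an operator inequality $\bar\rho\le\lambda\sigma$ together with its trace distance to $\rho$. Concretely, I would fix any $\mu>0$ and any non-negative operator $\bar\rho$ with $\tr(\bar\rho)\le 1$ satisfying the two constraints in the definition~\eqref{eq_Dmaxdef} of $D_{\max}^{\epsilon'}(\rho\|\sigma)$, i.e.\ $\mu\bar\rho\le\sigma$ and $1-F(\bar\rho,\rho)^2\le\epsilon'^2$; it suffices to treat the case where such a pair exists with $\mu$ arbitrarily close to $2^{-D_{\max}^{\epsilon'}(\rho\|\sigma)}$ (if this quantity is $0$, i.e.\ $D_{\max}^{\epsilon'}=\infty$, there is nothing to prove).

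The next step is to turn the fidelity constraint into a trace-distance constraint: from $1-F(\bar\rho,\rho)^2\le\epsilon'^2$ and Lemma~\ref{lem_tracedistancefidelity} (which gives $\Delta(\cdot,\cdot)^2\le 1-F(\cdot,\cdot)^2$) we obtain $\Delta(\rho,\bar\rho)\le\epsilon'$. Since also $\bar\rho\le\frac{1}{\mu}\sigma$, Lemma~\ref{lem_DHupperbound} applies with $\lambda=1/\mu$ and yields
\begin{align}
  D_H^{\epsilon}(\rho\|\sigma)\le \log_2\tfrac{1}{\mu}-\log_2\Bigl(1-\tfrac{\Delta(\rho,\bar\rho)}{\epsilon}\Bigr)\le -\log_2\mu-\log_2\tfrac{\epsilon-\epsilon'}{\epsilon}= -\log_2\mu+\log_2\tfrac{\epsilon}{\epsilon-\epsilon'}\ ,
\end{align}
where the middle inequality uses $\epsilon'<\epsilon$, so that $1-\Delta(\rho,\bar\rho)/\epsilon\ge 1-\epsilon'/\epsilon>0$ and the logarithm is monotone. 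Optimising over the admissible $\mu$ (equivalently, letting $\mu$ approach the supremum $2^{-D_{\max}^{\epsilon'}(\rho\|\sigma)}$) replaces $-\log_2\mu$ by $D_{\max}^{\epsilon'}(\rho\|\sigma)$, giving the claim.

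I do not expect a serious obstacle; the argument is essentially bookkeeping connecting the two one-shot quantities through Lemmas~\ref{lem_tracedistancefidelity} and~\ref{lem_DHupperbound}. The only points needing a little care are: the boundary case $\epsilon'=0$, where $D_H^0$ is defined by continuous extension and one should either pass to the limit $\epsilon'\downarrow 0$ in the bound or argue directly using $\Delta(\rho,\bar\rho)=0$; the degenerate situations where $D_{\max}^{\epsilon'}(\rho\|\sigma)=\infty$ or the supremum in~\eqref{eq_Dmaxdef} is not attained, which are handled by the approximation over $\mu$ above; and checking that Lemma~\ref{lem_tracedistancefidelity} is invoked in the subnormalised regime $\tr(\bar\rho)\le 1$, which is precisely the generality in which the trace distance was defined in Section~\ref{sec_relativeentropy}.
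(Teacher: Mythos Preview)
Your proposal is correct and follows essentially the same route as the paper: pick (an approximate) optimiser $\bar\rho$ for $D_{\max}^{\epsilon'}$, convert the fidelity constraint into $\Delta(\rho,\bar\rho)\le\epsilon'$ via Lemma~\ref{lem_tracedistancefidelity}, and then feed $\bar\rho\le\sigma/\mu$ into Lemma~\ref{lem_DHupperbound}. One minor slip in your discussion of edge cases: the quantity $D_H^0$ never appears here since $\epsilon>\epsilon'\ge 0$ means $\epsilon>0$; the case $\epsilon'=0$ simply gives $\Delta(\rho,\bar\rho)=0$ and goes through directly.
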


\begin{proof}  
Let $\mu = 2^{-D_{\max}^{\epsilon'}(\rho \| \sigma)}$ and let $\bar{\rho}$ be such that the expression on the right hand side of~\eqref{eq_Dmaxdef} is satisfied for $D_{\max}^{\epsilon'}(\rho\|\sigma)$. That is, we have $\bar{\rho} \leq \sigma/\mu$ as well as $\sqrt{1-F(\bar{\rho}, \rho)^2} \leq \epsilon'$, which, by Lemma~\ref{lem_tracedistancefidelity},  implies  $\Delta(\bar{\rho}, \rho) \leq \epsilon'$. Hence, by Lemma~\ref{lem_DHupperbound},
\begin{align}
  D_H^\epsilon(\rho \| \sigma) 
\leq
   - \log_2(\mu) - \log_2 \bigl(1-(\epsilon'/\epsilon) \bigr)
=
  D_{\max}^{\epsilon'}(\rho\|\sigma) + \log_2 \frac{\epsilon}{\epsilon - \epsilon'} \ .  
\end{align}
\end{proof}

The following claim about the smooth relative max-entropy for product states is known as the \emph{Quantum Asymptotic Equipartition Property}~\cite{TCR09}. (Note that this is a strictly more general statement than the ``classical'' Asymptotic Equipartition Property stated as Lemma~\ref{lem_typicalsubspace}.) While the proof in~\cite{TCR09} applies to the case where $\sigma$ is a density operator, the slightly extended claim provided here  follows directly from Remark~\ref{rem_DHrescaling}  (see also Footnote~9 of~\cite{TCR09} as well as Chapter~6 of~\cite{Tom12}). 

\begin{lemma} \label{lem_DAEP}
  For any density operator $\rho$, for any non-negative operator $\sigma$, for any  $\epsilon \in (0,1)$, and for sufficiently large $n \in \mathbb{N}$, 
  \begin{align} \label{eq_DAEP}
    \frac{1}{n} D_{\max}^\epsilon(\rho^{\otimes n} \| \sigma^{\otimes n}) < D(\rho \| \sigma) + c \sqrt{\frac{ \log_2 (2 / \epsilon^2)}{n}} \ ,
  \end{align}
  where $c = c(\rho, \sigma)$ is independent of $n$ and $\epsilon$.
\end{lemma}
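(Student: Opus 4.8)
The plan is to reduce the claim to the special case in which $\sigma$ is a density operator — where the statement is precisely the Quantum Asymptotic Equipartition Property established in~\cite{TCR09} — and then to transport it to a general non-negative $\sigma$ by means of the rescaling identity of Remark~\ref{rem_DHrescaling}.

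First I would dispose of the degenerate situations: if $\mathrm{supp}(\rho) \not\subseteq \mathrm{supp}(\sigma)$ then $D(\rho\|\sigma) = \infty$ and~\eqref{eq_DAEP} holds trivially, so we may assume $\sigma \neq 0$ and set $t := \tr(\sigma) > 0$ and $\bar{\sigma} := \sigma/t$, which is a density operator with $\bar{\sigma}^{\otimes n} = \sigma^{\otimes n}/t^n$. Next I would invoke the result of~\cite{TCR09} applied to the pair of density operators $(\rho, \bar{\sigma})$: there is a constant $c = c(\rho,\bar{\sigma})$, independent of $n$ and $\epsilon$, such that for all sufficiently large $n \in \mathbb{N}$,
\begin{align}
  \frac{1}{n} D_{\max}^\epsilon(\rho^{\otimes n} \| \bar{\sigma}^{\otimes n}) < D(\rho \| \bar{\sigma}) + c \sqrt{\frac{\log_2(2/\epsilon^2)}{n}} \ .
\end{align}

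Finally I would unwind the normalisation. By Remark~\ref{rem_DHrescaling} (applied with $\lambda = 1/t^n$) we have $D_{\max}^\epsilon(\rho^{\otimes n}\|\bar{\sigma}^{\otimes n}) = D_{\max}^\epsilon(\rho^{\otimes n}\|\sigma^{\otimes n}) + n\log_2 t$, and directly from the definition of the relative entropy $D(\rho\|\bar{\sigma}) = D(\rho\|\sigma) + \log_2 t$. Substituting both into the displayed inequality and dividing by $n$, the two occurrences of $\log_2 t$ cancel and we obtain exactly~\eqref{eq_DAEP} with $c(\rho,\sigma) := c(\rho,\bar{\sigma})$. No genuine obstacle arises here — the only point to watch is that neither the lower-order correction hidden in the strict inequality of~\cite{TCR09} nor the rescaling term $n\log_2 t$ depends on $\epsilon$, and that after division by $n$ the latter contributes only the $\epsilon$-free constant $\log_2 t$; hence the inherited constant $c$ remains independent of both $n$ and $\epsilon$, as claimed, and "sufficiently large $n$" is preserved under the reduction.
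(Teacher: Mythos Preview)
Your proposal is correct and follows precisely the route the paper indicates: the paper does not spell out a proof but remarks that the case where $\sigma$ is a density operator is the result of~\cite{TCR09}, and that the extension to arbitrary non-negative $\sigma$ ``follows directly from Remark~\ref{rem_DHrescaling}'' --- which is exactly the normalisation-and-rescaling argument you have written out.
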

  
Because of Lemma~\ref{lem_DHDmax}, almost the same upper bound also holds for $D_H^\epsilon(\cdot \| \cdot)$. In fact, as a consequence of the Quantum Stein's Lemma~\cite{HP91,ON00}, the statement holds asymptotically with equality~\cite{DKFRR13}.
  
  \begin{lemma} \label{lem_QAEPequal}
  Let $\rho$ be a density operator, let $\sigma$ be a non-negative operator, and let $\epsilon \in (0,1)$. Then
\begin{align}
  \lim_{n \to \infty} \frac{1}{n} D_H^{\epsilon}(\rho^{\otimes n} \| \sigma^{\otimes n}) = D(\rho \| \sigma) \ .
\end{align}      
  \end{lemma}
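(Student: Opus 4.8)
The plan is to establish the two inequalities $\limsup_{n\to\infty}\frac1n D_H^\epsilon(\rho^{\otimes n}\|\sigma^{\otimes n})\le D(\rho\|\sigma)$ and $\liminf_{n\to\infty}\frac1n D_H^\epsilon(\rho^{\otimes n}\|\sigma^{\otimes n})\ge D(\rho\|\sigma)$ separately. First I would reduce to the case where $\sigma$ is a density operator: setting $\lambda=\tr(\sigma)$ and $\hat\sigma=\sigma/\lambda$, Remark~\ref{rem_DHrescaling} gives $D_H^\epsilon(\rho^{\otimes n}\|\sigma^{\otimes n})=D_H^\epsilon(\rho^{\otimes n}\|\hat\sigma^{\otimes n})-n\log_2\lambda$ and $D(\rho\|\sigma)=D(\rho\|\hat\sigma)-\log_2\lambda$, so both sides shift by the same amount. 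I would also dispose of the degenerate case $\operatorname{supp}(\rho)\not\subseteq\operatorname{supp}(\sigma)$ (where the right-hand side is $+\infty$): if $P_\sigma$ denotes the projector onto $\operatorname{supp}(\sigma)$, the test $Q_n=\id-P_\sigma^{\otimes n}$ satisfies $\tr(Q_n\sigma^{\otimes n})=0$ and $\tr(Q_n\rho^{\otimes n})=1-(\tr P_\sigma\rho)^n\to 1>\epsilon$ (the limit holds because $\operatorname{supp}(\rho)\not\subseteq\operatorname{supp}(\sigma)$ forces $\tr(P_\sigma\rho)<1$), so $Q_n$ is eventually feasible in~\eqref{eq_DHdefinition} and forces $D_H^\epsilon(\rho^{\otimes n}\|\sigma^{\otimes n})=\infty$. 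Hence from now on we may assume $D(\rho\|\sigma)<\infty$.

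For the upper bound I would combine the two one-shot lemmas already at hand. Applying Lemma~\ref{lem_DHDmax} to the $n$-fold products with the fixed smoothing parameter $\epsilon'=\epsilon/2$ gives $D_H^\epsilon(\rho^{\otimes n}\|\sigma^{\otimes n})\le D_{\max}^{\epsilon/2}(\rho^{\otimes n}\|\sigma^{\otimes n})+1$. Dividing by $n$ and invoking the Quantum Asymptotic Equipartition Property of Lemma~\ref{lem_DAEP} (with $\epsilon/2$ in place of $\epsilon$) yields, for all sufficiently large $n$,
\begin{align}
  \frac1n D_H^\epsilon(\rho^{\otimes n}\|\sigma^{\otimes n}) < D(\rho\|\sigma) + c\sqrt{\tfrac{\log_2(8/\epsilon^2)}{n}} + \tfrac1n \ ,
\end{align}
and the claimed $\limsup$ bound follows by letting $n\to\infty$.

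The real work is the lower bound, which is exactly the achievability (``direct'') part of the Quantum Stein Lemma. Unwinding~\eqref{eq_DHdefinition}, it amounts to producing, for every $\delta>0$ and all large $n$, a test $0\le Q_n\le\id$ with $\tr(Q_n\rho^{\otimes n})\ge\epsilon$ and $\tr(Q_n\sigma^{\otimes n})\le 2^{-n(D(\rho\|\sigma)-\delta)}$; feasibility then gives $2^{-D_H^\epsilon(\rho^{\otimes n}\|\sigma^{\otimes n})}\le\tr(Q_n\sigma^{\otimes n})/\epsilon$, hence $\frac1n D_H^\epsilon(\rho^{\otimes n}\|\sigma^{\otimes n})\ge D(\rho\|\sigma)-\delta+\frac1n\log_2\epsilon$, and $\delta\downarrow 0$ finishes the argument. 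When $\rho$ and $\sigma$ commute one takes $Q_n$ to be the spectral projector of $\rho^{\otimes n}-2^{-n(D(\rho\|\sigma)-\delta)}\sigma^{\otimes n}$ onto its nonnegative part: the bound $\tr(Q_n\sigma^{\otimes n})\le 2^{-n(D(\rho\|\sigma)-\delta)}$ holds by construction, while the weak law of large numbers applied to $\frac1n\sum_{i=1}^n(\log_2\rho-\log_2\sigma)$, whose mean is $D(\rho\|\sigma)$, shows $\tr(Q_n\rho^{\otimes n})\to 1$. The noncommutative case is precisely the Hiai--Petz theorem~\cite{HP91} (see also~\cite{ON00} for the matching strong converse and~\cite{DKFRR13} for the present $D_H^\epsilon$ formulation), whose proof rests on the delicate operator-monotone and interlacing estimates that replace the naive ``typical projector''. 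I would cite this rather than reprove it, so that the main obstacle is concentrated entirely in this one step; everything else is bookkeeping with the rescaling identity and the two AEP lemmas.
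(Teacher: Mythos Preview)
Your proposal is correct and follows essentially the same route as the paper: the paper does not give a self-contained proof but simply notes that the upper bound follows by combining Lemma~\ref{lem_DHDmax} with Lemma~\ref{lem_DAEP}, and that the full equality is the Quantum Stein Lemma~\cite{HP91,ON00,DKFRR13}. Your write-up is in fact more detailed than the paper's own justification, correctly handling the rescaling to a normalised $\sigma$ and the degenerate support case before deferring to Hiai--Petz for the achievability direction.
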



\section{General facts about the fidelity} \label{app_fidelity}

In the literature, the definition and discussion of the fidelity $F(\rho, \sigma)$ is often restricted to the case where its arguments, $\rho$ and $\sigma$, are density operators  (see, e.g., Chapter~9 of~\cite{NC00}). In this work, however, we need the fidelity for general non-negative operators. Recall that we defined $F(\rho, \sigma) = \| \sqrt{\rho} \sqrt{\sigma} \|_1$. Fortunately, most established properties of the fidelity are still valid in this more general case. For completeness, we state them in the following. 

\begin{lemma} \label{lem_tracedistancefidelity}
For any two non-negative operators $\rho$ and $\sigma$ with $\tr(\rho) \geq \tr(\sigma)$, the trace distance is upper bounded by
  \begin{align} \label{eq_tracedistancefidelity}
  \Delta(\rho, \sigma) 
  \leq \sqrt{\tr(\rho)^2-F(\rho, \sigma)^2} \ . 
\end{align}
\end{lemma}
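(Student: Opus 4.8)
The plan is to reduce the general statement to the well-known case of density operators by a rescaling argument. First I would observe that if $\tr(\rho) = \tr(\sigma)$ and both are normalised, the claim $\Delta(\rho,\sigma) \leq \sqrt{1 - F(\rho,\sigma)^2}$ is the standard Fuchs--van de Graaf inequality (see, e.g., Chapter~9 of~\cite{NC00}), noting that in this case the second term in our definition of $\Delta$ vanishes and $F$ coincides with the usual fidelity. So the real work is handling (i) unnormalised operators and (ii) operators of unequal trace.

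For the unequal-trace case, write $t = \tr(\rho) \geq \tr(\sigma) = s \geq 0$. The idea is to pad $\sigma$ up to trace $t$ by adding a component orthogonal to the supports of both $\rho$ and $\sigma$: introduce an auxiliary one-dimensional space spanned by a unit vector $\ket{e}$ orthogonal to everything, and set $\tilde\rho = \rho$, $\tilde\sigma = \sigma + (t-s)\proj{e}$. Then $\tr(\tilde\rho) = \tr(\tilde\sigma) = t$, and because the added piece is orthogonal, $F(\tilde\rho,\tilde\sigma) = \|\sqrt{\rho}\sqrt{\sigma}\|_1 = F(\rho,\sigma)$ and $\|\tilde\rho - \tilde\sigma\|_1 = \|\rho-\sigma\|_1 + (t-s)$. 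Now apply the normalised inequality to $\tilde\rho/t$ and $\tilde\sigma/t$: this gives $\frac{1}{2t}\|\rho-\sigma\|_1 + \frac{t-s}{2t} \leq \sqrt{1 - F(\rho,\sigma)^2/t^2}$, i.e. $\frac{1}{2}\|\rho-\sigma\|_1 + \frac{1}{2}(t-s) \leq \sqrt{t^2 - F(\rho,\sigma)^2}$. Since $t - s = \tr(\rho) - \tr(\sigma) = |\tr(\rho-\sigma)|$, the left hand side is exactly $\Delta(\rho,\sigma)$ as defined in Section~\ref{sec_relativeentropy}, and the right hand side is $\sqrt{\tr(\rho)^2 - F(\rho,\sigma)^2}$, which is the claim.

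The one point requiring care is that I invoked the Fuchs--van de Graaf bound for (possibly) subnormalised states of equal trace $t$; but after dividing by $t$ both $\tilde\rho/t$ and $\tilde\sigma/t$ are genuine density operators on the enlarged space, so the textbook statement applies directly. I should also check the degenerate case $F(\rho,\sigma) > \tr(\rho)$ cannot occur: indeed $F(\rho,\sigma) = \|\sqrt\rho\sqrt\sigma\|_1 \leq \|\sqrt\rho\|_2 \|\sqrt\sigma\|_2 = \sqrt{\tr(\rho)\tr(\sigma)} \leq \tr(\rho)$ by Cauchy--Schwarz for the Hilbert--Schmidt inner product, so the square root on the right hand side is well-defined and the bound is non-vacuous.

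I expect the main (minor) obstacle to be bookkeeping with the definition of $\Delta$ for unnormalised operators — making sure the extra $\frac{1}{2}|\tr(\rho-\sigma)|$ term is correctly produced by the padding construction rather than lost — but there is no deep difficulty here; the essential content is entirely the standard density-operator inequality, transported by an orthogonal embedding and a scaling.
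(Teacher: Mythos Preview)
Your proposal is correct and follows essentially the same route as the paper's own (commented-out) proof: pad $\sigma$ by an orthogonal piece to equalise the traces, observe that both $\Delta$ and $F$ are unchanged by this padding, then rescale to unit trace and invoke the standard Fuchs--van~de~Graaf inequality. Your write-up is in fact slightly more explicit than the paper's, including the verification that $F(\rho,\sigma)\leq\tr(\rho)$ so the right-hand side is well-defined.
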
 

\comment{
\begin{proof}
  Let $\omega$ be a non-negative operator with $\tr(\omega) = \tr(\rho) - \tr(\sigma)$, whose support is orthogonal to the support of both $\rho$ and $\sigma$, and define $\sigma' = \sigma + \omega$. Then $\tr(\rho) = \tr(\sigma')$ and
    \begin{align}
   \Delta(\rho, \sigma) = \Delta(\rho, \sigma')  \qquad \text{and} \qquad  F(\rho, \sigma) = F(\rho, \sigma')  \ .
  \end{align}
  It therefore suffices to show that the claim holds for operators with $\tr(\rho) = \tr(\sigma) = c$. Furthermore, defining $\bar{\rho} = \rho / c$ and $\bar{\sigma} = \sigma / c$ and noting that
  \begin{align}
    \Delta(\rho, \sigma) = c  \Delta(\bar{\rho}, \bar{\sigma}) \qquad \text{and} \qquad F(\rho, \sigma) = c F(\bar{\rho}, \bar{\sigma}) \ ,
  \end{align}  
  it suffices to verify that the claim holds for $\tr(\rho) = \tr(\sigma) = 1$. For a proof of this, see, e.g., \cite{NC00}. 
\end{proof}
}

The following lemma relates the relative entropy to the fidelity (see also Section~5.4 of~\cite{Hayashi06}). 
 
 \begin{lemma} \label{lem_DFidelity}
   For any non-negative operators $\rho$ and $\sigma$
   \begin{align}
      D(\rho \| \sigma)  \geq -2 \log_2 \frac{F(\rho , \sigma)}{\tr(\rho)}  \ .
   \end{align}
 \end{lemma}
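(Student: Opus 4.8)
The plan is to reduce the claimed inequality $D(\rho\|\sigma)\geq -2\log_2\bigl(F(\rho,\sigma)/\tr(\rho)\bigr)$ to the case of density operators and then deploy a known concavity (or data-processing) argument. First I would observe that both sides are invariant under replacing $\rho$ by $\rho/\tr(\rho)$ in the appropriate sense: writing $\bar\rho = \rho/\tr(\rho)$, the left side $D(\rho\|\sigma)=\frac1{\tr(\rho)}\tr\bigl(\rho(\log_2\rho-\log_2\sigma)\bigr)$ equals $\tr\bigl(\bar\rho(\log_2\bar\rho-\log_2\sigma)\bigr)+\log_2\tr(\rho)=D(\bar\rho\|\sigma)+\log_2\tr(\rho)$, while the right side, using $F(\rho,\sigma)=\tr(\rho)^{1/2}F(\bar\rho,\sigma)$... wait, more carefully $F(\rho,\sigma)=\|\sqrt{\rho}\sqrt{\sigma}\|_1=\sqrt{\tr(\rho)}\,F(\bar\rho,\sigma)$, so $-2\log_2\bigl(F(\rho,\sigma)/\tr(\rho)\bigr)=-2\log_2 F(\bar\rho,\sigma)+\log_2\tr(\rho)$. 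Hence the $\log_2\tr(\rho)$ terms cancel and it suffices to prove $D(\bar\rho\|\sigma)\geq -2\log_2 F(\bar\rho,\sigma)$ for a \emph{density} operator $\bar\rho$ and an arbitrary non-negative $\sigma$; by Remark~\ref{rem_DHrescaling}-type rescaling (both sides shift by $-\log_2\lambda$ under $\sigma\mapsto\lambda\sigma$) one may further normalise $\tr(\sigma)=1$ if convenient, though this is not essential.

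Next I would prove the normalised inequality $D(\rho\|\sigma)\geq -2\log_2 F(\rho,\sigma) = D_{1/2}(\rho\|\sigma)$, i.e. that the von Neumann relative entropy dominates the R\'enyi divergence of order $\tfrac12$. The cleanest route is monotonicity of $\alpha$-R\'enyi divergences in $\alpha$: the map $\alpha\mapsto D_\alpha(\rho\|\sigma)$ is non-decreasing on $(0,\infty)$, and $\lim_{\alpha\to 1}D_\alpha(\rho\|\sigma)=D(\rho\|\sigma)$, so $D_{1/2}(\rho\|\sigma)\leq D(\rho\|\sigma)$, which is exactly the claim. If one prefers a self-contained argument avoiding a citation to the general monotonicity fact, I would instead use Jensen/concavity directly: with $p_i$ the eigenvalues of $\rho$ in an eigenbasis $\{\ket i\}$ and $q_i=\bra i\sigma\ket i$, one has $F(\rho,\sigma)\geq \sum_i\sqrt{p_i q_i}$ (the classical fidelity of the dephased states lower-bounds $F$ by monotonicity under the pinching channel), hence $-2\log_2 F(\rho,\sigma)\leq -2\log_2\sum_i\sqrt{p_iq_i}=-2\log_2\sum_i p_i\sqrt{q_i/p_i}$, and by concavity of $\log_2$ and Jensen this is $\leq -\sum_i p_i\log_2(q_i/p_i)=\sum_i p_i\log_2(p_i/q_i)$; finally $\sum_i p_i\log_2(p_i/q_i)\leq \tr\bigl(\rho(\log_2\rho-\log_2\sigma)\bigr)=D(\rho\|\sigma)$ because pinching $\sigma$ in the eigenbasis of $\rho$ cannot decrease the relative entropy (Peierls–Bogoliubov / monotonicity of $D$ under the pinching map).

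The main obstacle is getting the two reduction facts — $F(\rho,\sigma)\geq\sum_i\sqrt{p_iq_i}$ after pinching, and $D(\rho\|\sigma)\geq\sum_i p_i\log_2(p_i/q_i)$ after pinching — to line up with respect to the \emph{same} basis; these hold because the pinching channel (complete dephasing in the eigenbasis of $\rho$) leaves $\rho$ fixed, so by monotonicity of both $F$ (non-decreasing under channels, Lemma-type facts in Appendix~\ref{app_fidelity}) and $D$ (data processing) the pinched quantities bracket the original ones correctly. Modulo that care, the only remaining ingredient is the elementary scalar inequality $-2\log_2\sum_i\sqrt{p_iq_i}\leq\sum_i p_i\log_2(p_i/q_i)$ for probability vectors, which is precisely the statement that classical R\'enyi-$\tfrac12$ divergence is below the Kullback–Leibler divergence and follows from one application of Jensen's inequality to the concave function $\log_2$. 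I would therefore structure the write-up as: (1) the $\tr(\rho)$-rescaling reduction; (2) pinching in the eigenbasis of $\rho$, invoking monotonicity of $F$ and of $D$; (3) the one-line classical Jensen estimate.
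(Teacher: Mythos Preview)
Your primary route --- reduce to $\tr(\rho)=1$ and then invoke the monotonicity of the sandwiched R\'enyi divergence $\alpha\mapsto D_\alpha(\rho\|\sigma)$ with $D_{1/2}=-2\log_2 F$ and $D_1=D$ --- is correct and is exactly what the paper does (the paper skips the normalisation step because the identity $D_{1/2}(\rho\|\sigma)=-2\log_2\bigl(F(\rho,\sigma)/\tr(\rho)\bigr)$ already absorbs the factor).

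Your alternative ``self-contained'' argument via pinching, however, does not work as written. Monotonicity of the fidelity under the pinching channel $\cP$ (which fixes $\rho$) gives $F(\rho,\sigma)\leq F(\cP(\rho),\cP(\sigma))=\sum_i\sqrt{p_iq_i}$, i.e.\ the classical fidelity is an \emph{upper} bound on $F$, not a lower bound. Equivalently, $-2\log_2 F(\rho,\sigma)\geq -2\log_2\sum_i\sqrt{p_iq_i}$. Likewise, data processing for $D$ gives $D(\rho\|\sigma)\geq\sum_i p_i\log_2(p_i/q_i)$. So after pinching you know
\[
D_{1/2}(\rho\|\sigma)\ \geq\ D_{1/2}^{\mathrm{cl}}\ \leq\ D^{\mathrm{cl}}\ \leq\ D(\rho\|\sigma),
\]
and these inequalities do not chain to give $D_{1/2}(\rho\|\sigma)\leq D(\rho\|\sigma)$. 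The issue is intrinsic: both $D_{1/2}$ and $D$ decrease under the same channel, so comparing the pinched quantities cannot by itself order the unpinched ones. If you want a self-contained proof, you must instead prove the monotonicity in $\alpha$ directly (e.g.\ via the operator Jensen argument used in~\cite{MDSFT13}), not try to reduce to the commutative case by a single pinching.
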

 
 \begin{proof} 
   Let $D_\alpha(\cdot \| \cdot)$ be the \emph{$\alpha$-Quantum R\'enyi Divergence} as defined in~\cite{MDSFT13,WWY13}. As shown in these papers, for $\alpha = 1$ it is identical to the relative entropy, i.e., 
 \begin{align}
   D_1(\rho \| \sigma) = D(\rho \| \sigma) \ .
 \end{align}
 For $\alpha = 1/2$, it is related to the fidelity via
 \begin{align}
   D_{\frac{1}{2}}(\rho \| \sigma) = -2 \log_2 \frac{F(\rho, \sigma)}{\tr(\rho)} \ .
 \end{align}
Finally, $\alpha \mapsto D_{\alpha}(\rho \| \sigma)$ is a monotonically non-decreasing function in $\alpha$.  Combining these statements, we find 
 \begin{align}
 \label{eq_fidelityrelativeEntropy}
   -2 \log_2 F(\rho , \sigma) 
   = D_{\frac{1}{2}}(\rho \| \sigma) 
 \leq D_1(\rho \| \sigma)
 =  D(\rho \| \sigma) \ .
 \end{align}
 \end{proof}

Next we recall a statement that is known as Uhlmann's theorem~\cite{Uhl76}.

\begin{lemma} \label{lem_Uhlmann}
  Let $\rho_{D R} = \proj{\psi}_{D R}$ and $\sigma_{D R} = \proj{\phi}_{D R}$ be purifications of non-negative operators $\rho = \rho_D$ and $\sigma = \sigma_D$, respectively.  Then
  \begin{align}
    F(\rho, \sigma) = \sup_{U_R} \bigl| \bra{\psi} (\id_D \otimes U_R) \ket{\phi} \bigr| \ , 
  \end{align}
  where the maximisation is over all unitaries $U_R$ on $R$. 
\end{lemma}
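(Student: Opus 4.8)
The plan is to derive Uhlmann's theorem from two standard facts: the variational formula for the trace norm, $\|X\|_1 = \max_{U}|\tr(X U)|$ with the maximum over unitaries, and the fact that any two purifications of the same operator on a common bipartite space differ by a unitary acting on the purifying factor. Since nothing in the argument uses normalisation, it applies verbatim to arbitrary non-negative $\rho$ and $\sigma$. First I would record the trace-norm identity: writing the polar decomposition $X = U_0|X|$ with $U_0$ unitary, the choice $U = U_0^{\dagger}$ gives $\tr(XU) = \tr(U_0|X|U_0^{\dagger}) = \tr|X| = \|X\|_1$, while for any unitary $U$ the Cauchy--Schwarz inequality for the Hilbert--Schmidt inner product yields $|\tr(XU)| = |\tr(|X|^{1/2}\,UU_0\,|X|^{1/2})| \leq \||X|^{1/2}\|_2\,\|UU_0|X|^{1/2}\|_2 = \tr|X|$. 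Applied to $X = \sqrt{\rho}\sqrt{\sigma}$ this gives $F(\rho,\sigma) = \max_{U}|\tr(U\sqrt{\rho}\sqrt{\sigma})|$, with the maximum over unitaries on $D$.

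Next I would prove the claim for one convenient pair of purifications. Fixing an orthonormal basis $\{\ket{i}_D\}$ of $D$, a copy $D'\cong D$, and $\ket{\Omega} = \sum_i\ket{i}_D\otimes\ket{i}_{D'}$, the vectors $\ket{\psi_\rho} = (\sqrt{\rho}_D\otimes\id_{D'})\ket{\Omega}$ and $\ket{\psi_\sigma} = (\sqrt{\sigma}_D\otimes\id_{D'})\ket{\Omega}$ are purifications of $\rho$ and $\sigma$ (e.g.\ $\tr_{D'}\proj{\psi_\rho} = \sqrt{\rho}\,\id_D\,\sqrt{\rho} = \rho$). The ``transpose trick'' $\bra{\Omega}(M_D\otimes N_{D'})\ket{\Omega} = \tr(MN^{T})$, the transpose taken in the chosen basis, then gives for any operator $W$ on $D'$
\begin{align}
  \bra{\psi_\rho}(\id_D\otimes W_{D'})\ket{\psi_\sigma} = \bra{\Omega}(\sqrt{\rho}\sqrt{\sigma}\otimes W)\ket{\Omega} = \tr(\sqrt{\rho}\sqrt{\sigma}\,W^{T}) \ ,
\end{align}
and since $W\mapsto W^{T}$ is a bijection of the unitary group, combining with the first step shows $\sup_{W}|\bra{\psi_\rho}(\id_D\otimes W)\ket{\psi_\sigma}| = F(\rho,\sigma)$.

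To finish, let $\ket{\psi}_{DR}$ and $\ket{\phi}_{DR}$ be the given purifications. Existence of a purification of $\rho$ (resp.\ $\sigma$) on $D\otimes R$ forces $\dim R\geq\mathrm{rank}(\rho)$ (resp.\ $\geq\mathrm{rank}(\sigma)$); after replacing $D$ by the span of the supports of $\rho$ and $\sigma$ — which changes neither $F(\rho,\sigma)$ nor the relevant inner products — we may assume $\dim R\geq\dim D$, so that $D'$ embeds isometrically into $R$ and $\ket{\psi_\rho},\ket{\psi_\sigma}$ may be viewed as purifications on $D\otimes R$. Comparing Schmidt decompositions across the cut $D:R$ shows that any two purifications of the same operator on $D\otimes R$ differ by a unitary on $R$; hence there are unitaries $U_R^{(1)},U_R^{(2)}$ on $R$ with $\ket{\psi} = (\id_D\otimes U_R^{(1)})\ket{\psi_\rho}$ and $\ket{\phi} = (\id_D\otimes U_R^{(2)})\ket{\psi_\sigma}$, so that $\bra{\psi}(\id_D\otimes U_R)\ket{\phi} = \bra{\psi_\rho}(\id_D\otimes (U_R^{(1)})^{\dagger}U_R U_R^{(2)})\ket{\psi_\sigma}$. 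As $U_R$ runs over all unitaries on $R$ so does $(U_R^{(1)})^{\dagger}U_R U_R^{(2)}$; together with the previous paragraph (whose supremum is already attained using unitaries supported on the copy of $D'$), and with the elementary bound $|\bra{\psi}(\id_D\otimes U_R)\ket{\phi}|\leq F(\rho,\sigma)$ of the remark below, this gives $\sup_{U_R}|\bra{\psi}(\id_D\otimes U_R)\ket{\phi}| = F(\rho,\sigma)$.

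I expect the main obstacle to be the dimension bookkeeping of the last paragraph: one must use the statement ``purifications differ by a unitary on the reference'' correctly when $\dim R$ need not equal $\dim D$ (hence the initial reduction of $D$ to the supports), and check that each reduction leaves both sides of the identity unchanged; the analytic ingredients are entirely routine. An alternative route that is self-contained in both directions avoids the canonical purification altogether: writing Schmidt decompositions $\ket{\psi} = \sum_k\sqrt{p_k}\ket{e_k}_D\ket{f_k}_R$ and $\ket{\phi} = \sum_l\sqrt{q_l}\ket{g_l}_D\ket{h_l}_R$, one has $\bra{\psi}(\id_D\otimes U_R)\ket{\phi} = \tr(A^{T}B)$ with $A_{kl} = \langle e_k|\sqrt{\rho}\sqrt{\sigma}|g_l\rangle$ (so $\|A\|_1 = \|\sqrt{\rho}\sqrt{\sigma}\|_1 = F(\rho,\sigma)$) and $B_{kl} = \langle f_k|U_R|h_l\rangle$ a contraction, whence $|\tr(A^{T}B)|\leq\|A\|_1$; the reverse inequality then amounts to choosing the orthonormal sets $\{\ket{f_k}\},\{\ket{h_l}\}$ in $R$ (equivalently, $U_R$) so that $B$ attains the maximum of $|\tr(A^{T}\cdot)|$ over contractions, which is possible because $\dim R\geq\max(\mathrm{rank}\rho,\mathrm{rank}\sigma)$.
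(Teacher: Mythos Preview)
Your proposal is correct in spirit and your alternative Schmidt-decomposition route is sound, but the main argument contains a genuine gap in the dimension bookkeeping you flag yourself. After restricting $D$ to the span of the supports of $\rho$ and $\sigma$ you assert $\dim R\geq\dim D$; this does not follow. You only know $\dim R\geq\max(\mathrm{rank}\,\rho,\mathrm{rank}\,\sigma)$, whereas the span of the supports can have dimension as large as $\mathrm{rank}\,\rho+\mathrm{rank}\,\sigma$. For instance, take $\rho=\proj{0}$, $\sigma=\proj{+}$ on $D=\mathbb{C}^2$ with $R=\mathbb{C}$: both are rank one and purify on a one-dimensional reference, yet the span of the supports is all of $D$. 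Your canonical purifications $\ket{\psi_\rho}=\ket{0}\otimes\ket{0}$ and $\ket{\psi_\sigma}=\ket{+}\otimes\ket{+}$ genuinely require a two-dimensional $D'$, so the isometric embedding of $D'$ into $R$ fails. Your alternative route does close this: writing $\sum_{k,l}A_{kl}B_{kl}=\sum_i s_i\langle\tilde f_i|U_R|\tilde h_i\rangle$ via the SVD of $A$, with $\{\ket{\tilde f_i}\},\{\ket{\tilde h_i}\}$ orthonormal of size $\mathrm{rank}\,A\leq\min(\mathrm{rank}\,\rho,\mathrm{rank}\,\sigma)\leq\dim R$, one can always choose $U_R$ mapping one orthonormal set to the other, so the supremum is indeed attained. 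It would be cleaner to present the alternative route as the main argument.

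By contrast, the paper does not prove the lemma at all: it states it as the standard Uhlmann theorem and (in a suppressed comment) observes that the unnormalised case reduces to the normalised one by rescaling, since $F(\rho,\sigma)=\sqrt{\tr\rho\,\tr\sigma}\,F(\rho/\tr\rho,\sigma/\tr\sigma)$ and similarly for the purifying vectors. Your approach is therefore genuinely different: you give a self-contained proof from the polar decomposition and the transpose trick, whereas the paper simply appeals to the textbook statement. Your route buys independence from the literature; the paper's buys brevity.
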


\comment{
\begin{proof}
  Note that this claim can be  obtained directly from the corresponding standard statement where $\tr(\rho) = \tr(\sigma) = 1$. To see this, consider the normalised operators $\bar{\rho} = \rho / \tr(\rho)$ and $\bar{\sigma} = \sigma / \tr(\sigma)$. The corresponding purifications  $\proj{\bar{\psi}}$ and $\proj{\bar{\phi}}$ then satisfy $\ket{\psi} =\sqrt{\tr(\rho)} \ket{\bar{\psi}}$ and $\ket{\phi} = \sqrt{\tr(\sigma)} \ket{\bar{\phi}}$. Furthermore, $F(\rho, \sigma) = \sqrt{\tr(\rho) \tr(\sigma)} F(\bar{\rho}, \bar{\sigma})$.  
\end{proof}
}

The following lemma is a direct consequence of Lemma~\ref{lem_Uhlmann}. It asserts that the fidelity is monotonically non-decreasing when a partial trace is applied to both arguments. 

\begin{lemma} \label{lem_fidelitypartialtrace}
  For any two non-negative operators $\rho_{D E}$ and $\sigma_{D E}$ we have 
\begin{align}
  F(\rho_D, \sigma_D) \geq F(\rho_{D E}, \sigma_{D E}) \ .
\end{align}
\end{lemma}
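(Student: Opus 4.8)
The plan is to derive Lemma~\ref{lem_fidelitypartialtrace} directly from Uhlmann's theorem (Lemma~\ref{lem_Uhlmann}) by a purification argument, mimicking the familiar proof in the normalised case. First I would fix purifications of the two operators on the larger space. Let $\ket{\psi}_{D E R}$ and $\ket{\phi}_{D E R}$ be purifications of $\rho_{D E}$ and $\sigma_{D E}$, respectively, using a common purifying system $R$ (which can always be arranged by enlarging $R$). These are in particular purifications of $\rho_D$ and $\sigma_D$ on the composite system $E \otimes R$.

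The key step is the observation that Uhlmann's theorem, applied on the system $D$ with purifying system $E \otimes R$, gives
\begin{align}
  F(\rho_D, \sigma_D) = \sup_{U_{E R}} \bigl| \bra{\psi} (\id_D \otimes U_{E R}) \ket{\phi} \bigr| \ ,
\end{align}
where the supremum ranges over all unitaries on $E \otimes R$. Since the unitaries of the special product form $\id_E \otimes U_R$ constitute a subset of all unitaries on $E \otimes R$, restricting the supremum to this subset can only decrease it, so
\begin{align}
  F(\rho_D, \sigma_D) \geq \sup_{U_R} \bigl| \bra{\psi} (\id_{D E} \otimes U_R) \ket{\phi} \bigr| \ .
\end{align}
But the right-hand side is exactly $F(\rho_{D E}, \sigma_{D E})$ by a second application of Uhlmann's theorem, this time viewing $\ket{\psi}_{D E R}$ and $\ket{\phi}_{D E R}$ as purifications of $\rho_{D E}$ and $\sigma_{D E}$ on the purifying system $R$. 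Combining the two displays yields the claim.

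I do not expect a genuine obstacle here; the only point requiring a little care is ensuring that a \emph{common} purifying system $R$ can be chosen for both $\rho_{D E}$ and $\sigma_{D E}$ (take $R$ of dimension at least $\max[\operatorname{rank}\rho_{D E}, \operatorname{rank}\sigma_{D E}]$, embedding the smaller purification via an isometry), and noting that Lemma~\ref{lem_Uhlmann} as stated already covers non-negative (not necessarily normalised) operators, so no rescaling is needed. The argument is short and relies only on the monotonicity of a supremum under restriction of the feasible set.
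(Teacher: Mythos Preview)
Your argument is correct and is essentially identical to the paper's own (commented-out) proof: purify both operators on a common system $R$, apply Uhlmann's theorem once with purifying system $E\otimes R$ and once with purifying system $R$, and use that restricting the supremum to product unitaries $\id_E\otimes U_R$ can only decrease it. No changes are needed.
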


\comment{
\begin{proof}
  Let $\proj{\psi}_{D E R}$ and $\proj{\phi}_{D E R}$ be purifications of $\rho_{D E}$ and $\sigma_{D E}$, respectively. Then, by Lemma~\ref{lem_Uhlmann},
  \begin{align}
    F(\rho_D, \sigma_D)
  = \sup_{U_{D R} } \bigl| \bra{\psi} (\id_D \otimes U_{D R}) \ket{\phi} \bigr| 
  \geq \sup_{U_R} \bigl| \bra{\psi} (\id_D \otimes \id_D \otimes U_{R}) \ket{\phi} \bigr| 
  = F(\rho_{D E}, \sigma_{D E}) \ .
  \end{align}
\end{proof}
}

Using the Stinespring dilation theorem, the statement can be brought into the following more general form.

\begin{lemma} \label{lem_fidelityTPCPM}
  For any trace-preserving completely positive map~$\mathcal{T}$ we have
  \begin{align}
  F(\mathcal{T}(\rho), \mathcal{T}(\sigma)) \geq F(\rho, \sigma) \ .
\end{align}
\end{lemma}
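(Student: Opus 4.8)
The plan is to reduce the claim to the monotonicity of the fidelity under the partial trace (Lemma~\ref{lem_fidelitypartialtrace}) via a Stinespring dilation of $\mathcal{T}$. Since $\mathcal{T}$ is completely positive, there exist a Hilbert space $R$ and a linear map $V$ from the input space to the tensor product of the output space with $R$ such that $\mathcal{T}(X) = \tr_R(V X V^{\dagger})$ for all operators $X$; and since $\mathcal{T}$ is trace-preserving, $V$ may be taken to be an isometry, i.e.\ $V^{\dagger} V = \id$.

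The first step I would carry out is to record that the fidelity is invariant under conjugation by an isometry. For an isometry $V$ and non-negative operators $\rho,\sigma$ one has $\sqrt{V \rho V^{\dagger}} = V \sqrt{\rho}\, V^{\dagger}$, since $V^{\dagger} V = \id$ makes $V \sqrt{\rho}\, V^{\dagger}$ a non-negative operator whose square is $V \rho V^{\dagger}$; hence
\begin{align}
  F(V \rho V^{\dagger}, V \sigma V^{\dagger})
  = \bigl\| \sqrt{V \rho V^{\dagger}} \sqrt{V \sigma V^{\dagger}} \bigr\|_1
  = \bigl\| V \sqrt{\rho} \sqrt{\sigma}\, V^{\dagger} \bigr\|_1
  = \bigl\| \sqrt{\rho} \sqrt{\sigma} \bigr\|_1
  = F(\rho, \sigma) \ ,
\end{align}
where the third equality uses that the trace norm is unchanged under conjugation by an isometry. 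Combining this with Lemma~\ref{lem_fidelitypartialtrace}, applied to $V \rho V^{\dagger}$ and $V \sigma V^{\dagger}$ on the bipartite output-$R$ system, gives
\begin{align}
  F(\mathcal{T}(\rho), \mathcal{T}(\sigma))
  = F\bigl(\tr_R(V \rho V^{\dagger}), \tr_R(V \sigma V^{\dagger})\bigr)
  \geq F(V \rho V^{\dagger}, V \sigma V^{\dagger})
  = F(\rho, \sigma) \ ,
\end{align}
which is the assertion.

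The one point that requires care — and the main potential obstacle — is the availability of the Stinespring dilation in the generality needed: the operators $\rho,\sigma$ need not be normalised and the underlying spaces may only be separable. However, Stinespring's theorem applies to any completely positive map on the bounded operators of a separable Hilbert space, trace preservation translates exactly into $V^{\dagger} V = \id$, and every step above is homogeneous in each argument, so the non-normalisation of $\rho$ and $\sigma$ is irrelevant. The dilating space $R$ may itself be infinite-dimensional, but Lemma~\ref{lem_fidelitypartialtrace} was stated for arbitrary Hilbert spaces, so no extra argument is needed there.
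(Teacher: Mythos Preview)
Your proposal is correct and follows essentially the same route as the paper: decompose $\mathcal{T}$ via Stinespring into an isometry followed by a partial trace, use that the fidelity is invariant under isometries, and then invoke Lemma~\ref{lem_fidelitypartialtrace}. The paper's (commented-out) proof is a one-line version of exactly this argument; your write-up simply fills in the verification that $\sqrt{V\rho V^{\dagger}} = V\sqrt{\rho}\,V^{\dagger}$ and adds a remark on the separable-space case.
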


\comment{
\begin{proof}
  By the Stinespring dilation, $\mathcal{T}$ may be decomposed into an isometry followed by a partial trace. The claim thus follows from the fact that the fidelity is invariant under isometries applied to both arguments and from Lemma~\ref{lem_fidelitypartialtrace}.
\end{proof}
}

The next few claims allow us to keep track of the change of the  fidelity when we apply operators to its arguments. 

\begin{lemma}  \label{lem_fidelityoperator}
  For any non-negative operators $\rho$ and $\sigma$ and any operator $W$  on the same space we have
  \begin{align}
    F(\rho, W \sigma W^{\dagger}) = F(W^{\dagger} \rho W, \sigma) \ .
  \end{align}
\end{lemma}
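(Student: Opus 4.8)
The plan is to reduce the identity to Uhlmann's theorem (Lemma~\ref{lem_Uhlmann}). Write $D$ for the common space on which $\rho$, $\sigma$, and $W$ act, fix a purifying system $R$ with $\dim R = \dim D$, and let $\ket{\psi}_{D R}$ and $\ket{\phi}_{D R}$ be purifications of $\rho$ and of $\sigma$, respectively. The key (entirely elementary) observation is the operator identity $(\id_D \otimes U_R)(W \otimes \id_R) = W \otimes U_R$, valid for every unitary $U_R$ on $R$; it will let me write both sides of the claimed equation as one and the same quantity, $\sup_{U_R} \bigl| \bra{\psi} (W \otimes U_R) \ket{\phi} \bigr|$.

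The steps are then as follows. First, note that $(W \otimes \id_R)\ket{\phi}$ is a purification of $W \sigma W^{\dagger}$, since $\tr_R\bigl( (W \otimes \id_R) \proj{\phi} (W^{\dagger} \otimes \id_R) \bigr) = W \sigma W^{\dagger}$ and $\mathrm{rank}(W \sigma W^{\dagger}) \leq \mathrm{rank}(\sigma) \leq \dim R$, so $R$ is large enough; likewise $(W^{\dagger} \otimes \id_R)\ket{\psi}$ is a purification of $W^{\dagger} \rho W$. Applying Lemma~\ref{lem_Uhlmann} to the pair $\rho$, $W \sigma W^{\dagger}$ with the purifications $\ket{\psi}$ and $(W \otimes \id_R)\ket{\phi}$ yields
\begin{align*}
  F(\rho, W \sigma W^{\dagger}) = \sup_{U_R} \bigl| \bra{\psi} (\id_D \otimes U_R)(W \otimes \id_R) \ket{\phi} \bigr| = \sup_{U_R} \bigl| \bra{\psi} (W \otimes U_R) \ket{\phi} \bigr| \ ,
\end{align*}
while applying it to the pair $W^{\dagger} \rho W$, $\sigma$ with the purifications $(W^{\dagger} \otimes \id_R)\ket{\psi}$ and $\ket{\phi}$ yields
\begin{align*}
  F(W^{\dagger} \rho W, \sigma) = \sup_{U_R} \bigl| \bra{\psi} (W \otimes \id_R)(\id_D \otimes U_R) \ket{\phi} \bigr| = \sup_{U_R} \bigl| \bra{\psi} (W \otimes U_R) \ket{\phi} \bigr| \ .
\end{align*}
The two right-hand sides coincide, which is exactly the assertion of the lemma.

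The only point needing a little attention — the ``obstacle'', such as it is — is verifying that the vectors handed to Uhlmann's theorem genuinely are purifications, i.e.\ that applying the (possibly non-unitary) $W$ to one tensor factor of a purification cannot increase the rank of the reduced state; this is immediate. If one prefers a proof not invoking Uhlmann, one can argue directly: from $\sqrt{W \sigma W^{\dagger}} = |\sqrt{\sigma}\, W^{\dagger}|$ and the polar decomposition $\sqrt{\sigma}\, W^{\dagger} = V\, |\sqrt{\sigma}\, W^{\dagger}|$ one obtains $|\sqrt{\sigma}\, W^{\dagger}| = V^{\dagger} \sqrt{\sigma}\, W^{\dagger}$; since $\mathrm{ran}(\sqrt{\sigma}\, W^{\dagger} \sqrt{\rho}) \subseteq \overline{\mathrm{ran}}(\sqrt{\sigma}\, W^{\dagger})$, which is the initial space of $V^{\dagger}$, left multiplication by $V^{\dagger}$ leaves the trace norm unchanged, so $F(\rho, W \sigma W^{\dagger}) = \| \sqrt{\rho}\, |\sqrt{\sigma}\, W^{\dagger}| \|_1 = \| |\sqrt{\sigma}\, W^{\dagger}|\, \sqrt{\rho} \|_1 = \| \sqrt{\sigma}\, W^{\dagger} \sqrt{\rho} \|_1 = \| \sqrt{\rho}\, W \sqrt{\sigma} \|_1$. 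The same computation with $\sqrt{W^{\dagger} \rho W} = |\sqrt{\rho}\, W|$ gives $F(W^{\dagger} \rho W, \sigma) = \| \sqrt{\rho}\, W \sqrt{\sigma} \|_1$ as well, and comparing the two concludes the proof. Either way the mathematical content is slight; the real work is keeping track of the initial and final spaces of the partial isometries.
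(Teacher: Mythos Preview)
Your proof is correct and follows essentially the same route as the paper: both arguments apply Uhlmann's theorem (Lemma~\ref{lem_Uhlmann}) after noting that $(W\otimes\id_R)\ket{\phi}$ purifies $W\sigma W^{\dagger}$, arriving at the common expression $\sup_{U_R}\bigl|\bra{\psi}(W\otimes U_R)\ket{\phi}\bigr|$. Your write-up is slightly more explicit about the purification step, and the alternative polar-decomposition argument you append is a correct bonus not present in the paper.
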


\begin{proof}
  Let  $W_D = W$ and let $\proj{\psi}_{D R}$ and $\proj{\phi}_{D R}$ be purifications of $\rho_D = \rho$ and $\sigma_D = \sigma$, respectively. Then, by Uhlmann's theorem (Lemma~\ref{lem_Uhlmann}), 
  \begin{align}
    F(\rho_D, W_D \sigma_D W_D^{\dagger})
    =  \sup_{U_R} \bigl| \bra{\psi} (W_D \otimes U_R) \ket{\phi} \bigr|
    = \sup_{U_R}  \bigl| \bra{\phi} (W^{\dagger}_D \otimes U^{\dagger}_R) \ket{\psi} \bigr| 
    = F(\sigma_D, W^{\dagger}_D \rho_D W_D) \ ,
  \end{align}
  where the maximisation is taken over the set of unitaries $U_R$ on $R$.   
  \end{proof}

\begin{lemma} \label{lem_fidelitydecomposition}
  Let $\rho$ and $\sigma$ be non-negative operators and let $\{W_d\}_{d \in D}$ be a family of operators such that $\sum_{d \in D} W_d = \id$. Then
  \begin{align}
    \sum_{d \in D} F(W_d^{\dagger} \rho W_d, \sigma) \geq F(\rho, \sigma) \ .
  \end{align}
\end{lemma}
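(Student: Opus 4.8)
The plan is to reduce the statement to Uhlmann's theorem (Lemma~\ref{lem_Uhlmann}), which rewrites each fidelity as an overlap of purifications, and then to exploit the hypothesis $\sum_{d\in D} W_d = \id$ via the triangle inequality. The point is that, once fidelities are overlaps $\bra{\psi}(\id_D\otimes U_R)\ket{\phi}$, inserting the resolution $\id_D = \sum_d W_d$ on the $D$-factor splits a single overlap into a sum of overlaps, each of which is controlled by one of the terms on the left-hand side.

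First I would fix purifications $\ket{\psi}_{DR}$ of $\rho = \rho_D$ and $\ket{\phi}_{DR}$ of $\sigma = \sigma_D$ on a common purifying system $R$, which we may take with $\dim R = \dim D$ since $\mathrm{rank}(W_d^{\dagger}\rho W_d) \le \mathrm{rank}(\rho)$ for every $d$. Observe that $(W_d^{\dagger}\otimes \id_R)\ket{\psi}_{DR}$ is a purification of $W_d^{\dagger}\rho W_d$, because taking the partial trace over $R$ returns exactly $W_d^{\dagger}\rho W_d$. Applying Lemma~\ref{lem_Uhlmann} to the pair $(W_d^{\dagger}\rho W_d,\sigma)$ with these purifications then gives, for every unitary $U_R$ on $R$,
\[
  F(W_d^{\dagger}\rho W_d,\sigma) \;\geq\; \bigl|\bra{\psi}(W_d\otimes U_R)\ket{\phi}\bigr| \ .
\]

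Next I would choose a single, nearly optimal $U_R$ for $F(\rho,\sigma)$: given $\varepsilon>0$, Lemma~\ref{lem_Uhlmann} applied to $(\rho,\sigma)$ supplies a unitary $U_R$ with $|\bra{\psi}(\id_D\otimes U_R)\ket{\phi}| \geq F(\rho,\sigma)-\varepsilon$, and after absorbing a global phase into $U_R$ we may assume this overlap is real and nonnegative. Summing the displayed inequality over $d\in D$ and using the triangle inequality together with $\sum_{d\in D}W_d=\id_D$,
\[
  \sum_{d\in D} F(W_d^{\dagger}\rho W_d,\sigma) \;\geq\; \sum_{d\in D}\bigl|\bra{\psi}(W_d\otimes U_R)\ket{\phi}\bigr| \;\geq\; \Bigl|\sum_{d\in D}\bra{\psi}(W_d\otimes U_R)\ket{\phi}\Bigr| \;=\; \bigl|\bra{\psi}(\id_D\otimes U_R)\ket{\phi}\bigr| \;\geq\; F(\rho,\sigma)-\varepsilon \ ,
\]
and letting $\varepsilon\to 0$ yields the claim. (If $D$ is infinite and the left-hand sum diverges there is nothing to prove; otherwise the rearrangement is justified by absolute convergence.)

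I do not anticipate a substantive obstacle. The only steps needing mild care are the bookkeeping around purifications — making sure a single space $R$ and a single pair $\ket{\psi},\ket{\phi}$ serve simultaneously for $F(\rho,\sigma)$ and for all the $F(W_d^{\dagger}\rho W_d,\sigma)$ — and the harmless reduction to a real, near-optimal overlap by a global phase; everything else is the triangle inequality and the identity $\sum_d W_d=\id$.
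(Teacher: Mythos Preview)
Your proof is correct and follows essentially the same route as the paper's: fix purifications, use Uhlmann's theorem to write $F(\rho,\sigma)$ as an overlap, split via $\sum_d W_d=\id$, apply the triangle inequality, and bound each summand by $F(W_d^{\dagger}\rho W_d,\sigma)$ using Uhlmann again with the purification $(W_d^{\dagger}\otimes\id_R)\ket{\psi}$. The only cosmetic difference is that the paper assumes the supremum in Uhlmann's theorem is attained (as it is in finite dimensions), whereas you handle this via an $\varepsilon$-approximation.
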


\begin{proof}
Let $\proj{\psi}_{D R}$ and $\proj{\phi}_{D R}$ be purifications of $\rho_D = \rho$ and $\sigma_D = \sigma$, respectively.  By Uhlmann's theorem (Lemma~\ref{lem_Uhlmann}), there exists a unitary $U_R$ on $R$ such that
  \begin{align}
    F(\rho, \sigma) 
    = \bigl|\bra{\psi} (\id_D \otimes U_R) \ket{\phi} \bigr|
    = \Bigl| \sum_{d \in D}  \bra{\psi} (W_d \otimes U_R) \ket{\phi} \Bigr|
    \leq \sum_{d \in D} \bigl| \bra{\psi} (W_d \otimes U_R) \ket{\phi} \bigr| \ .
  \end{align}
  The assertion follows because, again by Uhlmann's theorem,
  \begin{align}
    \bigl| \bra{\psi} (W_d \otimes U_R) \ket{\phi} \bigr| \leq F(W_d^{\dagger} \rho W_d, \sigma) 
  \end{align}  
  holds for any $d \in D$. 
\end{proof}

\begin{lemma} \label{lem_fidelityreduced}
  Let $\rho_{D E}$ and $\sigma_{D E}$ be non-negative operators on $D \otimes E$ and let $\mathcal{W}_E$ be a trace non-increasing completely positive map on $E$. Then
  \begin{align} \label{eq_fidelityreduced}
    F\bigl(\rho_{D E}, (\cI_D \otimes \mathcal{W}_E)(\sigma_{D E})  \bigr)
    \leq F(\rho_D, \sigma_D) \ .
  \end{align}
\end{lemma}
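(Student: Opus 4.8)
The plan is to prove the bound by purifying all three operators, invoking Uhlmann's theorem (Lemma~\ref{lem_Uhlmann}) twice, and using the elementary fact that a contraction is an average of two unitaries. First I would fix a purification $\ket{\psi}_{D E R}$ of $\rho_{D E}$; the very same vector is then a purification of $\rho_D$, with purifying system $E \otimes R$. Likewise, fix a purification $\ket{\phi}_{D E S}$ of $\sigma_{D E}$, which at the same time purifies $\sigma_D$ with purifying system $E \otimes S$.

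Next I would write the trace non-increasing completely positive map $\mathcal{W}_E$ in Stinespring form: there is a Hilbert space $F$ and an operator $M \colon E \to E \otimes F$ with $M^\dagger M \leq \id_E$ such that $\mathcal{W}_E(X) = \tr_F(M X M^\dagger)$. Setting $\ket{\phi'}_{D E F S} = (\id_D \otimes M \otimes \id_S) \ket{\phi}_{D E S}$ one checks that $(\cI_D \otimes \mathcal{W}_E)(\sigma_{D E}) = \tr_{F S}\bigl(\proj{\phi'}_{D E F S}\bigr)$, so $\ket{\phi'}$ is a (subnormalised) purification of $(\cI_D \otimes \mathcal{W}_E)(\sigma_{D E})$ with purifying system $F \otimes S$. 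After embedding the two purifying systems $R$ and $F \otimes S$ into a common space, Uhlmann's theorem yields
\[
  F\bigl(\rho_{D E}, (\cI_D \otimes \mathcal{W}_E)(\sigma_{D E})\bigr) = \sup_{U} \bigl| \langle \psi |\, (\id_{D E} \otimes U)\,(\id_D \otimes M)\, | \phi \rangle \bigr| ,
\]
the supremum being over unitaries $U$ on the common purifying system. The key point is that the operator $L$ obtained by letting $(\id_E \otimes U)(M \otimes \id_S)$ act on the ``$E$ plus purifying'' systems is a contraction, since $L^\dagger L = (M^\dagger M) \otimes \id_S \leq \id$ because $U$ is unitary and $M^\dagger M \leq \id_E$.

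To conclude, I would enlarge the spaces by suitable zero-padding so that $L$ becomes an endomorphism of a single space, and use that any such contraction can be written as $L = \frac{1}{2}(U_1 + U_2)$ with $U_1, U_2$ unitary — immediate from the singular value decomposition, writing each singular value $\cos\theta \in [0,1]$ as $\frac{1}{2}(e^{i\theta} + e^{-i\theta})$. Recalling that $\ket{\psi}$ and $\ket{\phi}$ purify $\rho_D$ and $\sigma_D$, a second application of Uhlmann's theorem together with the triangle inequality gives
\[
  \bigl| \langle \psi |\, (\id_D \otimes L)\, | \phi \rangle \bigr| \leq \tfrac{1}{2}\bigl| \langle \psi | (\id_D \otimes U_1) | \phi \rangle \bigr| + \tfrac{1}{2}\bigl| \langle \psi | (\id_D \otimes U_2) | \phi \rangle \bigr| \leq F(\rho_D, \sigma_D) ,
\]
and taking the supremum over $U$ on the left-hand side proves the lemma.

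I expect the only genuine difficulty to be the bookkeeping: keeping track of which subsystem plays the role of ``purifying system'' in each of the two applications of Uhlmann's theorem, and arranging the dimensions so that the ``contraction $=$ average of two unitaries'' step applies (hence the zero-padding). A slightly different but essentially equivalent route avoids the second purification altogether: one checks directly that $\tr_E\bigl[(\cI_D \otimes \mathcal{W}_E)(\sigma_{D E})\bigr] = \tr_E\bigl[(\id_D \otimes \mathcal{W}_E^{*}(\id_E))\, \sigma_{D E}\bigr] \leq \sigma_D$, because $\mathcal{W}_E^{*}(\id_E) \leq \id_E$, and then combines monotonicity of the fidelity under the partial trace over $E$ (Lemma~\ref{lem_fidelitypartialtrace}) with the fact that $\tau_D \leq \sigma_D$ implies $F(\rho_D, \tau_D) \leq F(\rho_D, \sigma_D)$ — the latter again a consequence of Uhlmann's theorem, obtained by purifying $\sigma_D = \tau_D + \omega_D$ so that the purification contains one of $\tau_D$ as a component.
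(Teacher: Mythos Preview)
Your main argument is correct, but it takes a different route from the paper's. The paper proceeds exactly along the lines of your ``slightly different but essentially equivalent route'' at the end: it writes $\mathcal{W}_E$ in Kraus form, observes that $\sigma'_D := \tr_E\bigl[(\cI_D \otimes \mathcal{W}_E)(\sigma_{DE})\bigr] \leq \sigma_D$ because $\sum_e W_e^\dagger W_e \leq \id_E$, and then combines $F(\rho_{DE},\sigma'_{DE}) \leq F(\rho_D,\sigma'_D)$ (Lemma~\ref{lem_fidelitypartialtrace}) with $F(\rho_D,\sigma'_D) \leq F(\rho_D,\sigma_D)$. For this last step the paper invokes operator monotonicity of the square root directly on the closed-form fidelity $\tr\sqrt{\sqrt{\rho_D}\,\cdot\,\sqrt{\rho_D}}$, rather than the Uhlmann-based argument you sketch.

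Your primary approach---Stinespring dilation, Uhlmann on both sides, and the decomposition of a contraction as the average of two unitaries---is a genuine alternative. It stays entirely within the purification picture and avoids appealing to operator monotonicity, at the price of the dimension bookkeeping you flag (embedding $R$ and $F\otimes S$ into a common purifying space, then zero-padding so that $L$ becomes a square contraction before applying the SVD trick). The paper's route is shorter and needs no such padding; yours is arguably more self-contained if one does not want to cite the Löwner--Heinz theorem.
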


\begin{proof}
  Let $X_E \mapsto \sum_e W_e X W_e^{\dagger}$ be an operator-sum representation of $\mathcal{W}_E$.   The second argument of the fidelity on the left hand side of~\eqref{eq_fidelityreduced}, $\sigma'_{D E}  = {(\cI_D \otimes \mathcal{W}_E)}(\sigma_{D E}) $, may then be written as
  \begin{align}
    \sigma'_{D E}
    = \sum_e (\id_D \otimes W_{e}) \sigma_{D E} (\id_D \otimes W_{e}^{\dagger})  \ .
  \end{align}
  Because, by assumption, $\sum_e W_{e}^{\dagger} W_{e} \leq \id_E$, we have 
    \begin{align}
    \sigma'_D
    = \tr_E(\sigma'_{D E})
    = \tr_E\bigl( \sum_e (\id_D \otimes W_{e}^\dagger W_{e}) \sigma_{D E} \bigr)
    \leq \tr_E(\sigma_{D E}) = \sigma_D \ .
  \end{align}
  Together with the fact that the square root is operator monotone (cf.\ Theorem~V.1.9 of~\cite{Bhatia97}), this implies
  \begin{align}
    F(\rho_D, \sigma'_D) 
  = \tr\bigl(\sqrt{\sqrt{\rho_D} \sigma'_D \sqrt{\rho_D}} \bigr)
  \leq  \tr\bigl(\sqrt{\sqrt{\rho_D} \sigma_D \sqrt{\rho_D}} \bigr)
  = F(\rho_D, \sigma_D) \ .
  \end{align}
  The claim then follows from Lemma~\ref{lem_fidelitypartialtrace}, which asserts that $F(\rho_{DE}, \sigma'_{D E}) \leq F(\rho_D, \sigma'_D)$. 
\end{proof}

We also recall that the fidelity is continuous in its arguments with respect to the trace norm.

\begin{lemma} \label{lem_fidelitycontinuoustracenorm}
  Let $\rho$, $\rho'$, and $\sigma$ be non-negative operators. Then
  \begin{align}
    \bigl| F(\rho, \sigma) - F(\rho', \sigma) \bigr|^2 \leq \| \rho - \rho' \|_1 \tr(\sigma) \ .
  \end{align}
\end{lemma}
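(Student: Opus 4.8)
The plan is to reduce the inequality to three elementary facts about Schatten norms together with one classical operator inequality. The three norm facts are: (i) the reverse triangle inequality $\bigl|\,\|X\|_1 - \|Y\|_1\,\bigr| \le \|X-Y\|_1$, valid for any operators; (ii) H\"older's inequality $\|XY\|_1 \le \|X\|_2\,\|Y\|_2$, where $\|Z\|_2 = \sqrt{\tr(Z^{\dagger}Z)}$ is the Hilbert--Schmidt norm; and (iii) the identity $\|\sqrt{\sigma}\|_2^2 = \tr(\sigma)$ for non-negative $\sigma$. The classical operator inequality is the Powers--St\o rmer inequality, $\tr\bigl((\sqrt{\rho} - \sqrt{\rho'})^2\bigr) \le \|\rho - \rho'\|_1$, which holds for arbitrary non-negative operators $\rho$ and $\rho'$.

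Given these ingredients, the argument is short. Writing $F(\rho, \sigma) = \|\sqrt{\rho}\,\sqrt{\sigma}\|_1$ and $F(\rho', \sigma) = \|\sqrt{\rho'}\,\sqrt{\sigma}\|_1$, I would first apply (i) with $X = \sqrt{\rho}\,\sqrt{\sigma}$ and $Y = \sqrt{\rho'}\,\sqrt{\sigma}$, so that $X - Y = (\sqrt{\rho}-\sqrt{\rho'})\sqrt{\sigma}$, to get
\begin{align}
  \bigl|F(\rho,\sigma) - F(\rho',\sigma)\bigr|
  \le \bigl\|(\sqrt{\rho} - \sqrt{\rho'})\sqrt{\sigma}\bigr\|_1 \ .
\end{align}
Then (ii) and (iii) bound the right-hand side by $\bigl\|\sqrt{\rho} - \sqrt{\rho'}\bigr\|_2\,\sqrt{\tr(\sigma)}$, and since $\sqrt{\rho}-\sqrt{\rho'}$ is Hermitian, squaring gives
\begin{align}
  \bigl|F(\rho,\sigma) - F(\rho',\sigma)\bigr|^2
  \le \tr\bigl((\sqrt{\rho} - \sqrt{\rho'})^2\bigr)\,\tr(\sigma) \ .
\end{align}
Finally the Powers--St\o rmer inequality applied to the first factor turns this into $\|\rho - \rho'\|_1\,\tr(\sigma)$, which is exactly the claim.

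The only non-routine step — and hence the one I would flag as the crux — is the Powers--St\o rmer inequality; everything else is standard manipulation of Schatten norms. I would simply invoke it as a well-known classical result, noting that it is valid for general non-negative operators (not merely density operators), which is the generality needed here, and that it is precisely what produces the clean, dimension-free constant $\tr(\sigma)$. A self-contained proof of Powers--St\o rmer is available in the literature; since it is a standard fact I would treat it as a black box, and I do not expect any conceptual difficulty to arise elsewhere in the argument.
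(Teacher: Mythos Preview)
Your proof is correct and takes a genuinely different route from the paper's. The paper works via Uhlmann's theorem: it chooses purifications $\ket{\psi},\ket{\psi'},\ket{\phi}$ of $\rho,\rho',\sigma$ with $F(\rho,\rho')=|\spr{\psi}{\psi'}|$, bounds the difference of the two suprema by $\sup_U \bigl|(\bra{\psi}-\bra{\psi'})(\id\otimes U)\ket{\phi}\bigr|$, applies Cauchy--Schwarz, and finishes with the (generalised) Fuchs--van~de~Graaf inequality $\tr(\rho)+\tr(\rho')-2F(\rho,\rho')\le\|\rho-\rho'\|_1$. You instead stay entirely at the operator level, using the trace-norm expression $F(\rho,\sigma)=\|\sqrt{\rho}\sqrt{\sigma}\|_1$, the reverse triangle inequality, H\"older, and Powers--St\o rmer. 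Your argument is more self-contained in that it avoids purifications altogether, while the paper's fits the style of the surrounding appendix, which develops everything through Uhlmann's theorem. The two key inequalities invoked (Powers--St\o rmer versus Fuchs--van~de~Graaf) are close cousins, so the approaches are morally parallel even though the machinery differs.
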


\begin{proof}
  Let $\ket{\phi}_{D R}$ be a purification of $\sigma_D = \sigma$. Furthermore, let $\ket{\psi}_{D R}$, $\ket{\psi'}_{D R}$ be purifications of $\rho_D = \rho$, $\rho'_D = \rho'$ such that $F(\rho, \rho') = \bigl| \spr{\psi}{\psi'} \bigr|$ (cf.\ Lemma~\ref{lem_Uhlmann}) and assume without loss of generality that $\spr{\psi}{\psi'} \geq 0$. We have
  \begin{align}
    F(\rho, \sigma) - F(\rho', \sigma)
    = \sup_{U} \bigl|\bra{\psi} (\id_D \otimes U_R) \ket{\phi}\bigr| - \sup_{U'} \bigl|\bra{\psi'} (\id_D \otimes U'_R) \ket{\phi}\bigr| \\
    \leq \sup_{U} \bigl|\bra{\psi} (\id_D \otimes U_R) \ket{\phi}\bigr| - \bigl|\bra{\psi'} (\id_D \otimes U_R) \ket{\phi}\bigr| \\
    \leq \sup_{U} \bigl|(\bra{\psi} - \bra{\psi'}) (\id_D \otimes U_R) \ket{\phi}\bigr| \\
    \leq \sup_{U} \bigl\| \ket{\psi} - \ket{\psi'} \bigr\|_2 \, \bigl\| (\id_D \otimes U_R) \ket{\phi} \bigr\|_2  \\
    = \bigl\| \ket{\psi} - \ket{\psi'} \bigr\|_2 \, \bigl\| \ket{\phi} \bigr\|_2 \ ,
  \end{align}
  where we have used the Cauchy-Schwarz inequality.  The claim then follows from $\| \ket{\phi} \|_2^2 = \tr(\sigma)$ and
  \begin{align} \label{eq_normsinequality}
    \bigl\| \ket{\psi} - \ket{\psi'} \bigr\|_2^2 = \tr(\rho) + \tr(\rho') - 2 \spr{\psi}{\psi'} = \tr(\rho) + \tr(\rho') - 2 F(\rho, \rho') \leq \|\rho - \rho' \|_1 \ ,
       \end{align}
   where we have used the Fuchs-van de Graaf inequality~\cite{FvdG99}.
     \end{proof}

Finally, we provide a lemma (Lemma~\ref{lem_fidelitysymmetricpurification}) that simplifies the evaluation of the fidelity between permutation-invariant operators. It may be seen as a generalisation of a known result on symmetric purifications, which we state as Lemma~\ref{lem_symmetricpurification} (see, e.g., Lemma~II.5 of~\cite{CKMR07} for a proof). Specifically, Lemma~\ref{lem_fidelitysymmetricpurification} may be seen as a combination of this result and Uhlmann's theorem (Lemma~\ref{lem_Uhlmann}).  We also note that the lemma may be generalised to other symmetry groups (other than the symmetric group). 

\begin{lemma} \label{lem_symmetricpurification}
For any permutation-invariant operator $\rho_{D^n}$ on $D^{\otimes n}$ and any space $R$ with  $\dim(R) \geq \dim(D)$ there exists a  permutation-invariant purification $\rho_{D^n R^n}$ on $(D \otimes R)^{\otimes n}$.
\end{lemma}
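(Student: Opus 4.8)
The plan is to construct the purification explicitly by vectorising $\sqrt{\rho_{D^n}}$ against a maximally entangled state, and then to verify permutation invariance by means of the transpose (``ricochet'') trick. First I would reduce to the case $\dim(R) = \dim(D)$: if $\dim(R) > \dim(D)$, pick a subspace $R_0 \subseteq R$ with $\dim(R_0) = \dim(D)$ and build the purification inside $(D \otimes R_0)^{\otimes n} \subseteq (D \otimes R)^{\otimes n}$; since this inclusion intertwines the two permutation actions, a purification that is permutation-invariant on $(D \otimes R_0)^{\otimes n}$ is in particular permutation-invariant on $(D \otimes R)^{\otimes n}$.

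Assume then that $R$ is a copy of $D$. Fix an orthonormal basis $\{\ket{i}\}_i$ of $D$ together with the corresponding basis of $R$, and let $\ket{\Omega}_{DR} = \sum_i \ket{i}_D \otimes \ket{i}_R$ be the (unnormalised) maximally entangled vector. Order the $n$ copies of $D \otimes R$ so that the $j$-th copy of $D$ is paired with the $j$-th copy of $R$; regrouped this way, $\ket{\Omega}_{DR}^{\otimes n}$ is the maximally entangled vector between $D^{\otimes n}$ and $R^{\otimes n}$, and hence $\tr_{R^n}\bigl[(M \otimes \id_{R^n}) \proj{\Omega}_{DR}^{\otimes n} (M^\dagger \otimes \id_{R^n})\bigr] = M M^\dagger$ for every operator $M$ on $D^{\otimes n}$. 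Taking $M = \sqrt{\rho_{D^n}}$ produces a purification $\ket{\psi}_{D^n R^n} = (\sqrt{\rho_{D^n}} \otimes \id_{R^n}) \ket{\Omega}_{DR}^{\otimes n}$ of $\rho_{D^n}$.

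It remains to check that $\proj{\psi}_{D^n R^n}$ is permutation-invariant. Let $\pi \in S_n$ and let $W_\pi$ be the unitary on $(D \otimes R)^{\otimes n}$ permuting the $n$ copies of $D \otimes R$; with the chosen ordering it factorises as $W_\pi = P_\pi^D \otimes P_\pi^R$, where $P_\pi^D$ and $P_\pi^R$ are the permutation unitaries on $D^{\otimes n}$ and $R^{\otimes n}$. The transpose trick $(\id_{D^n} \otimes N_{R^n}) \ket{\Omega}_{DR}^{\otimes n} = (N^T_{D^n} \otimes \id_{R^n}) \ket{\Omega}_{DR}^{\otimes n}$, applied with $N = P_\pi^R$ and using that permutation matrices satisfy $(P_\pi)^T = P_{\pi^{-1}}$, gives $(\id_{D^n} \otimes P_\pi^R) \ket{\Omega}_{DR}^{\otimes n} = (P_{\pi^{-1}}^D \otimes \id_{R^n}) \ket{\Omega}_{DR}^{\otimes n}$. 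Hence
\begin{align}
  W_\pi \ket{\psi}_{D^n R^n}
  &= \bigl(P_\pi^D \otimes P_\pi^R\bigr)\bigl(\sqrt{\rho_{D^n}} \otimes \id_{R^n}\bigr) \ket{\Omega}_{DR}^{\otimes n}
  = \bigl(P_\pi^D \sqrt{\rho_{D^n}} \otimes \id_{R^n}\bigr)\bigl(\id_{D^n} \otimes P_\pi^R\bigr) \ket{\Omega}_{DR}^{\otimes n} \\
  &= \bigl(P_\pi^D \sqrt{\rho_{D^n}} \otimes \id_{R^n}\bigr)\bigl(P_{\pi^{-1}}^D \otimes \id_{R^n}\bigr) \ket{\Omega}_{DR}^{\otimes n}
  = \bigl(P_\pi^D \sqrt{\rho_{D^n}} (P_\pi^D)^{-1} \otimes \id_{R^n}\bigr) \ket{\Omega}_{DR}^{\otimes n} \\
  &= \bigl(\sqrt{\rho_{D^n}} \otimes \id_{R^n}\bigr) \ket{\Omega}_{DR}^{\otimes n}
  = \ket{\psi}_{D^n R^n} \ ,
\end{align}
where the penultimate equality uses that $P_\pi^D \rho_{D^n} (P_\pi^D)^{-1} = \rho_{D^n}$ by the permutation invariance of $\rho_{D^n}$, so that the same identity holds for its square root. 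Thus $W_\pi \ket{\psi}_{D^n R^n} = \ket{\psi}_{D^n R^n}$ for all $\pi$, whence $\proj{\psi}_{D^n R^n}$ is permutation-invariant, which proves the lemma.

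I expect the only delicate points to be bookkeeping: making sure that, under the identification of tensor factors used, $\ket{\Omega}_{DR}^{\otimes n}$ really is the canonical maximally entangled vector between $D^{\otimes n}$ and $R^{\otimes n}$ (so that the partial-trace identity producing $M M^\dagger$ holds with the stated ordering), and that the simultaneous permutation unitary $W_\pi$ decomposes as $P_\pi^D \otimes P_\pi^R$ with respect to that ordering. Both are routine but should be spelled out, since the transpose-trick manipulation relies on them.
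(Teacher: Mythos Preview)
Your proof is correct, and it is essentially the standard argument: the paper does not give its own proof but defers to Lemma~II.5 of~\cite{CKMR07}, where the construction is precisely the canonical purification $(\sqrt{\rho_{D^n}} \otimes \id_{R^n})\ket{\Omega}^{\otimes n}$ together with the observation that $\sqrt{\rho_{D^n}}$ inherits permutation invariance from $\rho_{D^n}$. Your explicit use of the transpose trick to verify invariance is a clean way to spell out the step that is often left implicit.
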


\begin{lemma} \label{lem_fidelitysymmetricpurification}
  Let  $\rho_{D^n}$ and $\sigma_{D^n}$  be  permutation-invariant non-negative operators on  $D^{\otimes n}$ and let $\rho_{D^n R^n}$ be a permutation-invariant purification of $\rho_{D^n}$. Then there exists a permutation-invariant purification  $\sigma_{D^n R^n}$ of $\sigma_{D^n}$ such that
  \begin{align} \label{eq_fidelitysymmetric}
    F(\rho_{D^n}, \sigma_{D^n}) = F(\rho_{D^n R^n}, \sigma_{D^n R^n}) \ .
  \end{align} 
\end{lemma}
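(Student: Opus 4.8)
The plan is to reduce the statement to Uhlmann's theorem (Lemma~\ref{lem_Uhlmann}) together with a symmetrisation of the optimal unitary. Write the given purification as $\rho_{D^n R^n} = \proj{\psi}$, where $\ket{\psi}$ is a permutation-invariant vector in $D^{\otimes n}\otimes R^{\otimes n}$, and let $P^D_\pi$ and $P^R_\pi$ be the permutation representations on $D^{\otimes n}$ and $R^{\otimes n}$, so that $(P^D_\pi\otimes P^R_\pi)\ket{\psi} = \ket{\psi}$ for every permutation $\pi$. By Lemma~\ref{lem_symmetricpurification}, $\sigma_{D^n}$ also admits a permutation-invariant purification $\proj{\phi_0}$ on $(D\otimes R)^{\otimes n}$. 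Since applying a unitary on $R^{\otimes n}$ does not change the marginal on $D^{\otimes n}$, and $F(\proj{\psi},\proj{\phi}) = |\spr{\psi}{\phi}|$ for any vectors, it suffices to exhibit a unitary $U'$ on $R^{\otimes n}$ that commutes with all $P^R_\pi$ and satisfies $\bra{\psi}(\id_{D^n}\otimes U')\ket{\phi_0} = F(\rho_{D^n},\sigma_{D^n})$: then $\sigma_{D^n R^n} := (\id_{D^n}\otimes U')\proj{\phi_0}(\id_{D^n}\otimes (U')^{\dagger})$ is a permutation-invariant purification of $\sigma_{D^n}$, and $F(\rho_{D^n R^n},\sigma_{D^n R^n}) = |\spr{\psi}{\phi_0'}| = F(\rho_{D^n},\sigma_{D^n})$ with $\ket{\phi_0'}=(\id_{D^n}\otimes U')\ket{\phi_0}$ (the reverse inequality being anyway automatic from Lemma~\ref{lem_fidelitypartialtrace}).

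First I would invoke Uhlmann's theorem: there is a unitary $U$ on $R^{\otimes n}$ with $\bra{\psi}(\id_{D^n}\otimes U)\ket{\phi_0} = F(\rho_{D^n},\sigma_{D^n})$, the supremum being attained by compactness of the unitary group, and the value made real and non-negative by absorbing a phase into $U$. Next, using that both $\ket{\psi}$ and $\ket{\phi_0}$ are fixed by every $P^D_\pi\otimes P^R_\pi$, conjugating $\id_{D^n}\otimes U$ by $P^D_\pi\otimes P^R_\pi$ gives, for every $\pi$,
\[
  \bra{\psi}(\id_{D^n}\otimes U)\ket{\phi_0}
  = \bra{\psi}\bigl(\id_{D^n}\otimes (P^R_\pi)^{\dagger} U P^R_\pi\bigr)\ket{\phi_0} \ .
\]
Averaging over $\pi$, the twirled operator $\bar U = \frac{1}{n!}\sum_\pi (P^R_\pi)^{\dagger} U P^R_\pi$ — which lies in the commutant of the permutation action on $R^{\otimes n}$ and has $\|\bar U\|_\infty \le 1$ — still satisfies $\bra{\psi}(\id_{D^n}\otimes \bar U)\ket{\phi_0} = F(\rho_{D^n},\sigma_{D^n})$.

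The remaining step, which I expect to be the main obstacle, is to replace the contraction $\bar U$ by a genuine unitary while keeping both its commutation with the permutations and the value of the matrix element, since only a unitary on $R^{\otimes n}$ produces another purification of $\sigma_{D^n}$. Here I would use that the commutant of the permutation action on $R^{\otimes n}$ is, by the representation theory of the symmetric group (cf.\ Appendix~\ref{app_representation}), a finite-dimensional $*$-algebra isomorphic to a direct sum of full matrix algebras; within each such matrix block any contraction is the midpoint $\tfrac12(U^{(1)}_{\mathrm{block}}+U^{(2)}_{\mathrm{block}})$ of two unitaries of the block, by a singular-value decomposition (writing each singular value $s\in[0,1]$ as $\tfrac12(e^{i\theta}+e^{-i\theta})$ with $\theta=\arccos s$). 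Assembling the blocks yields unitaries $U^{(1)}, U^{(2)}$ on $R^{\otimes n}$, both commuting with all $P^R_\pi$, with $\bar U = \tfrac12(U^{(1)}+U^{(2)})$. Since, by Uhlmann, $F(\rho_{D^n},\sigma_{D^n})$ is an upper bound on $|\bra{\psi}(\id_{D^n}\otimes U^{(i)})\ket{\phi_0}|$ for $i=1,2$, and the average of these two complex numbers equals the non-negative real number $F(\rho_{D^n},\sigma_{D^n})$, strict convexity of the closed disc of radius $F(\rho_{D^n},\sigma_{D^n})$ forces both of them to equal $F(\rho_{D^n},\sigma_{D^n})$; in particular $U' := U^{(1)}$ has the required properties.

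Finally, setting $\ket{\phi} = (\id_{D^n}\otimes U')\ket{\phi_0}$ and $\sigma_{D^n R^n} = \proj{\phi}$, the vector $\ket{\phi}$ is permutation-invariant because $U'$ commutes with each $P^R_\pi$ and $\ket{\phi_0}$ is permutation-invariant, $\sigma_{D^n R^n}$ is a purification of $\sigma_{D^n}$, and $F(\rho_{D^n R^n},\sigma_{D^n R^n}) = |\spr{\psi}{\phi}| = F(\rho_{D^n},\sigma_{D^n})$, which is~\eqref{eq_fidelitysymmetric}. I would remark that the argument as presented uses finite-dimensionality of $D$ and $R$, which suffices for the applications in Section~\ref{sec_fidelity} and Theorem~\ref{thm_maininequality}; if desired, the general separable case follows by the same finite-rank approximation as in Remark~\ref{rem_infinitedimensional}.
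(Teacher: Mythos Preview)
Your argument is correct and takes a genuinely different route from the paper. The paper does not twirl an Uhlmann unitary; it re-runs the standard constructive proof of Uhlmann's theorem, setting $\ket{\Phi} = (\sqrt{\sigma_{D^n}}\, U_{D^n}^{\dagger}\, \rho_{D^n}^{-1/2} \otimes \id_{R^n})\ket{\Psi}$ with $U_{D^n}$ the unitary from the polar decomposition $\sqrt{\rho_{D^n}}\sqrt{\sigma_{D^n}} = Q_{D^n} U_{D^n}$, and then simply observes that every factor ($\rho_{D^n}^{\pm 1/2}$, $\sigma_{D^n}^{1/2}$, $Q_{D^n}$, and hence $U_{D^n} = Q_{D^n}^{-1}\sqrt{\rho_{D^n}}\sqrt{\sigma_{D^n}}$) is permutation-invariant because each is a function of permutation-invariant operators. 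That approach is shorter and avoids any analysis of the commutant, at the cost of a WLOG reduction to invertible $\rho_{D^n}$, $\sigma_{D^n}$ via an $\epsilon\,\id$-perturbation. Your approach is more structural: it isolates exactly why the symmetry survives the Uhlmann optimisation, and the contraction-to-unitary step (block SVD in the commutant, followed by the extreme-point argument on the disc) would transfer verbatim to any compact symmetry group.

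One small caveat: you invoke Lemma~\ref{lem_symmetricpurification} to produce the initial permutation-invariant purification $\ket{\phi_0}$ of $\sigma_{D^n}$ on $(D\otimes R)^{\otimes n}$, but that lemma carries the hypothesis $\dim R \ge \dim D$, which is not assumed in the statement you are proving. In every use of Lemma~\ref{lem_fidelitysymmetricpurification} within the paper this dimension bound does hold, so the gap is harmless in context; to match the paper's full generality you could either add the assumption explicitly, or note that the paper's polar-decomposition construction already furnishes a permutation-invariant $\ket{\phi_0}$ on the given space (after which, of course, your twirling step becomes unnecessary).
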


\begin{proof}
  The proof of this lemma essentially follows the lines of the standard proof of Uhlmann's theorem (see, e.g., Chapter~9 of~\cite{NC00}), while keeping track of the permutation invariance the relevant operators.
  
For the following, we assume without loss of generality that $\rho_{D^n}$ and $\sigma_{D^n}$ are invertible. (The claim for the cases where this assumption does not hold may be obtained by considering the operators $\rho_{D^n} + \epsilon \, \id_{D^n}$ and $\sigma_{D^n} + \epsilon \,  \id_{D^n}$ for $\epsilon > 0$ and then taking the limit $\epsilon \to 0$.)

Let $\ket{\Psi}_{D^n R^n}$ be a vector in $(D \otimes R)^{\otimes n}$ such that $\rho_{D^n R^n} = \proj{\Psi}_{D R}$ and define
\begin{align} 
  \ket{\Omega}_{D^n R^n} = (\rho_{D^n}^{-\frac{1}{2}} \otimes \id_{R^n}) \ket{\Psi}_{D^n R^n} \ .
\end{align}
Note that $\tr_{R^n}(\proj{\Omega}_{D^n R^n}) = \id_{D^n}$. It thus follows from the Schmidt decomposition that $\ket{\Omega}_{D^n R^n}$ has the form
\begin{align} \label{eq_Omegadecomposition}
  \ket{\Omega}_{D^n R^n} = \sum_{x} \ket{d_x}_{D^n} \otimes \ket{r_x}_{R^n} \ ,
\end{align}
where $\{\ket{d_x}_{D^n}\}_x$ and $\{\ket{r_x}_{R^n}\}_x$ are orthonormal bases of $D^{\otimes n}$ and $R^{\otimes n}$, respectively. 

Let $U_{D^n}$ be the unitary operator in the left polar decomposition of $\sqrt{\rho_{D^n}} \sqrt{\sigma_{D^n}}$, i.e., 
\begin{align} \label{eq_Udecompositiondef}
  \sqrt{\rho_{D^n}} \sqrt{\sigma_{D^n}} = Q_{D^n}  U_{D^n}\ ,
\end{align}
where 
\begin{align}
  Q_{D^n} =  \sqrt{( \sqrt{\rho_{D^n}} \sqrt{\sigma_{D^n}} ) ( \sqrt{\rho_{D^n}} \sqrt{\sigma_{D^n}} )^{\dagger}} = \sqrt{\sqrt{\rho_{D^n}} \sigma_{D^n} \sqrt{\rho_{D^n}}} 
  \end{align}
  is non-negative. We now define the purification $\sigma_{D^n R^n} = \proj{\Phi}_{D^n R^n}$ by 
\begin{align}
  \ket{\Phi}  = (\sqrt{\sigma_{D^n}} U^{\dagger}_{D^n} \otimes \id_{R^n}) \ket{\Omega}  \ .
\end{align}
It is readily verified that this is indeed a purification of $\sigma_{D^n}$. 

The fidelity between the purifications is given by
\begin{align}
  F(\rho_{D^n R^n}, \sigma_{D^n R^n})
  = \bigl| \spr{\Psi}{\Phi} \bigr|
  = \bigl| \bra{\Omega} (\sqrt{\rho_{D^n}} \sqrt{\sigma_{D^n}} U^{\dagger} \otimes \id_{R^n})  \ket{\Omega} \bigr| \ .
\end{align}
Exploiting now the particular form~\eqref{eq_Omegadecomposition} of $\ket{\Omega}$ as well as~\eqref{eq_Udecompositiondef}, this can be rewritten as
\begin{align}
  F(\rho_{D^n R^n}, \sigma_{D^n R^n})
  = \bigl| \sum_x \bra{d_x} \sqrt{\rho_{D^n}} \sqrt{\sigma_{D^n}} U^{\dagger}_{D^n} \ket{d_x} \bigr|
  = \bigl| \tr(\sqrt{\rho_{D^n}} \sqrt{\sigma_{D^n}} U^{\dagger}_{D^n}) \bigr|
  = \tr(Q_{D^n})  \ .
\end{align}
The claim~\eqref{eq_fidelitysymmetric} then follows by inserting the explicit expression for $Q_{D^n}$, i.e., 
\begin{align}
  \tr(Q_{D^n})
=  \bigl\| \sqrt{\rho_{D^n}} \sqrt{\sigma_{D^n}} \bigr\|_1
=  F(\rho_{D^n}, \sigma_{D^n})  \ .
\end{align}

To verify that $\sigma_{D^n R^n}$ is permutation-invariant, we first note that for any permutation-invariant Hermitian operator $X$ on an $n$-fold product space and for any real function $f$ the operator $f(X)$ is also permutation-invariant. (To see this, consider the decomposition $X = \sum_i x_i \Pi_i$, where $\Pi_i$ are the projectors onto the eigenspaces of $X$ and $x_i$ are the corresponding eigenvalues.  Because $[X, \pi] = 0$ for any permutation $\pi$, we also have $[\Pi_i, \pi] = 0$ for any $i$. Using now that $f(X) = \sum_{i} f(x_i) \Pi_i$, we conclude that $[f(X), \pi] = 0$.)  We therefore know, in particular, that
\begin{align}
  [\rho_{D^n}^{\frac{1}{2}}, \pi] = 0 \qquad \text{and}  \qquad [\rho_{D^n}^{-\frac{1}{2}}, \pi] = 0 \qquad \text{and} \qquad   [\sigma_{D^n}^{\frac{1}{2}}, \pi] = 0  \qquad \text{and} \qquad [\sigma_{D^n}^{-\frac{1}{2}}, \pi] = 0  
\end{align}
for any permutation $\pi$. Furthermore, it follows from the explicit expression for $Q_{D^n}$ that this operator is also permutation-invariant. Similarly,  since $U^{\dagger}_{D^n}$ can be written as
\begin{align}
  U^{\dagger}_{D^n} = \sigma_{D^n}^{-\frac{1}{2}} \rho_{D^n}^{-\frac{1}{2}} Q_{D^n}  \ ,
\end{align}
it is also permutation-invariant. By assumption, we also have $\pi \ket{\Psi}_{D^n R^n} = \ket{\Psi}_{D^n R^n}$. Because $\ket{\Phi}_{D^n R^n}$ is obtained by multiplying permutation-invariant operators to $\ket{\Psi}_{D^n R^n}$, we conclude that $\pi \ket{\Phi}_{D^n R^n} = \ket{\Phi}_{D^n R^n}$, i.e., $\sigma_{D^n R^n}$ is invariant under permutations. 
\end{proof}

\section{On the Schur-Weyl duality} \label{app_representation} 

The following lemma follows immediately from the considerations in Chapter~6 of~\cite{Harrow05} (see, in particular, Eq.~6.25). 

\begin{lemma} \label{lem_SchurWeyldecomposition}
  Let $D$ and $E$ be Hilbert spaces with $\dim(D) = \dim(E) = d$ and let $n \in \mathbb{N}$. Furthermore, let $\Lambda_{n,d}$ be the set of Young diagrams of size $n$ with at most $d$ rows, and, for any $\lambda \in \Lambda_{n,d}$, let $U_\lambda$ and $V_\lambda$ be the corresponding irreducible representations of the unitary group $U(d)$ and the symmetric group $S_n$, respectively, so that, according to the Schur-Weyl duality (see, e.g., Theorem~1.10 of~\cite{Christandl06})
  \begin{align} \label{eq_decompositionDn}
   D^{\otimes n}  & \cong \bigoplus_{\lambda \in \Lambda_{n,d}} U_{D, \lambda} \otimes V_{D, \lambda} \\  \label{eq_decompositionEn}
   E^{\otimes n}  & \cong \bigoplus_{\lambda \in \Lambda_{n,d}} U_{E, \lambda} \otimes V_{E, \lambda} \ .
  \end{align}
  Then there exists a family $\{\ket{\psi_\lambda}_{V_{D, \lambda} V_{E, \lambda}}\}_{\lambda \in \Lambda_{n,d}}$ of maximally entangled normalised vectors   on $V_{D, \lambda} \otimes V_{E, \lambda}$ such that any vector $\ket{\Omega} \in \Sym^n(D \otimes E)$ in the symmetric subspace of $(D \otimes E)^{\otimes n}$ can be decomposed as
  \begin{align} \label{eq_symmetricdecomposition}
    \ket{\Omega} = \sum_{\lambda} \ket{\phi_\lambda}_{U_{D, \lambda} U_{E, \lambda}} \otimes \ket{\psi_\lambda}_{V_{D, \lambda} V_{E, \lambda}} \ ,
  \end{align}
  where $\{\ket{\phi_\lambda}_{U_{D, \lambda}  U_{E, \lambda}}\}_{\lambda \in \Lambda_{n,d}}$ is a family of (not necessarily normalised) vectors on $U_{D, \lambda} \otimes U_{E, \lambda}$.
\end{lemma}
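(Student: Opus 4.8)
The plan is to make the Schur--Weyl duality explicit on the two tensor factors of $(D \otimes E)^{\otimes n} \cong D^{\otimes n} \otimes E^{\otimes n}$ and then extract the permutation-invariant part. First I would observe that, under the canonical rearrangement of tensor factors, a permutation $\pi \in S_n$ acts on $(D \otimes E)^{\otimes n}$ as the diagonal operator $\pi_D \otimes \pi_E$, where $\pi_D$ and $\pi_E$ are the corresponding permutation operators on $D^{\otimes n}$ and $E^{\otimes n}$. Inserting the decompositions~\eqref{eq_decompositionDn} and~\eqref{eq_decompositionEn} then yields $(D \otimes E)^{\otimes n} \cong \bigoplus_{\lambda, \mu} (U_{D,\lambda} \otimes U_{E,\mu}) \otimes (V_{D,\lambda} \otimes V_{E,\mu})$, on which $\pi$ acts in the $(\lambda,\mu)$-block as $\id_{U_{D,\lambda} \otimes U_{E,\mu}} \otimes \bigl(V_\lambda(\pi) \otimes V_\mu(\pi)\bigr)$, where I identify $V_{D,\lambda}$ and $V_{E,\mu}$ with the abstract irreducible $S_n$-modules $V_\lambda$, $V_\mu$ (fixed together with $S_n$-invariant inner products).

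Next I would identify $\Sym^n(D \otimes E)$ with the joint fixed subspace of all the $\pi$. Because the $U$-factors are untouched, this reduces to computing the $S_n$-invariant subspace of $V_{D,\lambda} \otimes V_{E,\mu}$ under the diagonal action, for each $(\lambda,\mu)$. Since the irreducible $S_n$-representations are self-dual (their matrices are real orthogonal in an orthonormal basis adapted to an invariant inner product), Schur's lemma gives that this invariant subspace has dimension $\langle \chi_{V_\lambda}, \chi_{V_\mu} \rangle = \delta_{\lambda\mu}$, and for $\lambda = \mu$ its unique (up to scalar) element is the vector corresponding to the identity intertwiner, i.e.\ $\ket{\psi_\lambda}_{V_{D,\lambda} V_{E,\lambda}} = \tfrac{1}{\sqrt{\dim V_\lambda}} \sum_k \ket{v_k} \otimes \ket{v_k}$ in the matching orthonormal bases, which is normalised and maximally entangled. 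Hence $\Sym^n(D \otimes E) \cong \bigoplus_\lambda (U_{D,\lambda} \otimes U_{E,\lambda}) \otimes \mathbb{C}\ket{\psi_\lambda}$, so an arbitrary $\ket{\Omega}$ in this space has the form~\eqref{eq_symmetricdecomposition} after absorbing the scalar of each one-dimensional factor into $\ket{\phi_\lambda}_{U_{D,\lambda} U_{E,\lambda}}$; because each invariant subspace is one-dimensional, the vectors $\ket{\psi_\lambda}$ are the same for every $\ket{\Omega}$.

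I expect the main obstacle to be bookkeeping rather than a genuine conceptual difficulty: one has to be careful that $V_{D,\lambda}$ and $V_{E,\lambda}$ are identified with the same abstract $S_n$-module so that the diagonal action and the intertwiner argument make sense, and one has to check that the resulting fixed vector is maximally entangled, which is precisely the point where the self-duality (reality) of the symmetric-group irreducibles enters. Everything else is a routine invocation of Schur's lemma and character orthogonality; alternatively, as noted above, the statement may be read off directly from Eq.~6.25 of~\cite{Harrow05}.
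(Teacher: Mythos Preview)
Your proposal is correct and follows essentially the same route as the paper: decompose $(D\otimes E)^{\otimes n}$ via Schur--Weyl duality on each factor, note that permutations act as $\id\otimes(V_\lambda(\pi)\otimes V_\mu(\pi))$ on the $(\lambda,\mu)$-block, and use the reality/self-duality of the $S_n$-irreducibles together with Schur's lemma to see that the diagonal $S_n$-invariants in $V_{D,\lambda}\otimes V_{E,\mu}$ are $\delta_{\lambda\mu}$-dimensional and spanned by the maximally entangled vector. The only cosmetic difference is that the paper carries this out by fixing an arbitrary expansion of $\ket{\Omega}$ and explicitly turning each $V$-component into an intertwiner via the maximally entangled state, whereas you compute the invariant subspace abstractly first; the ingredients and logic are the same.
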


\begin{proof}
  According to~\eqref{eq_decompositionDn} and~\eqref{eq_decompositionEn}, the space $(D \otimes E)^{\otimes n}$ decomposes as
  \begin{align}
    (D \otimes E)^{\otimes n} \cong \Bigl( \bigoplus_{\lambda \in \Lambda_{n,d}} U_{D, \lambda} \otimes V_{D, \lambda}  \Bigr) \otimes \Bigl( \bigoplus_{\lambda' \in \Lambda_{n,d}} U_{E, \lambda'} \otimes V_{E, \lambda'}  \Bigr) \ .
  \end{align}
  Any vector $\ket{\Omega} \in (D \otimes E)^{\otimes n}$ can therefore always be written as
    \begin{align} \label{eq_Omegageneral}
    \ket{\Omega} = \sum_{\lambda, \lambda' \in \Lambda_{n,d}} \sum_i  \ket{\phi_{\lambda, \lambda', i}}_{U_{D, \lambda} U_{E, \lambda'}} \otimes \ket{\psi_{\lambda, \lambda', i}}_{V_{D, \lambda} V_{E, \lambda'}} \ ,
  \end{align}
  where, for any $\lambda, \lambda' \in \Lambda_{n,d}$, $\{\ket{\phi_{\lambda, \lambda', i}}_{U_{D, \lambda} U_{E, \lambda'}}\}_i$ and $\{\ket{\psi_{\lambda, \lambda', i}}_{V_{D, \lambda} V_{E, \lambda'}}\}_i$ are families of vectors in ${U_{D, \lambda} \otimes U_{E, \lambda'}}$ and ${V_{D, \lambda} \otimes V_{E, \lambda'}}$, respectively. 
  
For any $\lambda \in \Lambda_{n, d}$, let  $\{\ket{v_k}_{V_{D, \lambda}}\}_k$ and $\{\ket{\bar{v}_k}_{V_{E, \lambda}}\}_k$ be orthonormal bases of $V_{D, \lambda}$ and $V_{E, \lambda}$, respectively, with respect to which the representations of the symmetric group $S_n$ are given by the same real-valued matrices. (Such bases always exist, see, e.g., \cite{JamKer81}.) We then define the maximally entangled vector $\ket{\psi_\lambda}_{V_{D, \lambda} V_{E, \lambda}}$ on $V_{D, \lambda} \otimes V_{E, \lambda}$ by
\begin{align}
  \ket{\psi_\lambda}_{V_{D, \lambda} V_{E, \lambda}}
  =  \sqrt{\frac{1}{\dim {V_\lambda}}} \sum_{k} \ket{v_k}_{V_{D, \lambda}} \otimes \ket{\bar{v}_k}_{V_{E, \lambda}} \ .
\end{align}
  Now, to prove the claim~\eqref{eq_symmetricdecomposition} for any permutation-invariant $\ket{\Omega}$, it suffices to show that the  vectors on $V_{D, \lambda} \otimes V_{E, \lambda'}$ in~\eqref{eq_Omegageneral} satisfy 
  \begin{align} \label{eq_psilambdastructure}
    \ket{\psi_{\lambda, \lambda', i}}_{V_{D, \lambda} V_{E\lambda'}} = \begin{cases} \ket{\psi_\lambda}_{V_{D, \lambda} V_{E, \lambda}} & \text{if $\lambda = \lambda'$} \\ 0 & \text{otherwise} \end{cases} 
  \end{align}
   for all $\lambda, \lambda' \in \Lambda_{n,d}$ and for all $i$. 
  
For any permutation $\pi$, let $V_{\lambda}(\pi)$ be its action on the irreducible space $V_{\lambda}$.  Using that, by definition, the matrix elements $\bra{v_k} V_{D, \lambda}(\pi) \ket{v_{k'}} = \bra{\bar{v}_k} V_{E, \lambda}(\pi) \ket{\bar{v}_{k'}}$ are real-valued, it is easily verified that
    \begin{align} \label{eq_symmetryentangled}
     (V_{D, \lambda}(\pi) \otimes V_{E, \lambda}(\pi)) \ket{\psi_{\lambda}}_{V_{D, \lambda} V_{E, \lambda}} = \ket{\psi_{\lambda}}_{V_{D, \lambda} V_{E, \lambda}}  \quad (\forall \pi) \ .
  \end{align}
  Furthermore, the Schur-Weyl duality (cf.\ Theorem~1.10 of~\cite{Christandl06}) states that $\pi$ acts on $\bigoplus_{\lambda} U_{\lambda} \otimes V_{\lambda}$ as
  \begin{align}
    V(\pi) = \bigoplus_{\lambda \in \Lambda_{n,d}} \id_{U_{\lambda}} \otimes V_{\lambda}(\pi) \ ,
  \end{align}
  Using this and that the vector $\ket{\Omega}$ is by assumption invariant under the action of  $\pi$, we find that
  \begin{align} \label{eq_symmetryonlambda}
     (V_{D, \lambda}(\pi) \otimes V_{E, \lambda'}(\pi)) \ket{\psi_{\lambda, \lambda', i}}_{V_{D, \lambda} V_{E, \lambda'}} = \ket{\psi_{\lambda, \lambda', i}}_{V_{D, \lambda} V_{E, \lambda'}}  \quad (\forall \pi) 
  \end{align}
  holds for all $\lambda, \lambda' \in \Lambda_{n,d}$ and for all $i$ for which the corresponding term in the sum~\eqref{eq_Omegageneral} is nonzero.  
  
  For any such triple $(\lambda, \lambda', i)$ let $H_{\lambda, \lambda', i}$ be the homomorphism between the irreducible representations $V_{E, \lambda'}$ and $V_{E, \lambda}$ defined by
  \begin{align}  \label{eq_Hdef}
    \bra{\alpha} H_{\lambda, \lambda', i}  \ket{\beta}
  =  \bra{\psi_{\lambda, \lambda', i}} (\id_{D, \lambda} \otimes \ket{\beta}\bra{\alpha}) \ket{\psi_{\lambda}}   \ ,
  \end{align}
  for any $\ket{\alpha} \in V_{E, \lambda}$, $\ket{\beta} \in V_{E, \lambda'}$. Using~\eqref{eq_symmetryonlambda} and~\eqref{eq_symmetryentangled} we find that, for any permutation $\pi$, 
\begin{multline}
  \bra{\alpha} H_{\lambda, \lambda', i}  V_{E, \lambda'}(\pi) \ket{\beta}
    =  \bra{\psi_{\lambda, \lambda', i}} \bigl(\id_{D, \lambda} \otimes V_{E, \lambda'}(\pi) \ket{\beta}\bra{\alpha}\bigr) \ket{\psi_{\lambda}} \\
    =  \bra{\psi_{\lambda, \lambda', i}} \bigl(V_{D, \lambda}(\pi)^{\dagger} \otimes \ket{\beta}\bra{\alpha}\bigr) \ket{\psi_{\lambda}} \\
    = \bra{\psi_{\lambda, \lambda', i}} \bigl(\id_{D, \lambda} \otimes \ket{\beta}\bra{\alpha} V_{E, \lambda}(\pi)\bigr) \ket{\psi_{\lambda}} 
    =  \bra{\alpha} V_{E, \lambda}(\pi) H_{\lambda, \lambda', i}  \ket{\beta} \ .
\end{multline}
  This implies that $H_{\lambda, \lambda', i} V_{E, \lambda'}(\pi) = V_{E, \lambda}(\pi) H_{\lambda, \lambda', i}$, i.e., $H_{\lambda, \lambda', i}$ commutes with the action of the symmetry group. Hence, by Schur's lemma (see, e.g., Lemma~0.8 of~\cite{Christandl06}) and the fact that the representations $V_{E, \lambda}$ and $V_{E, \lambda'}$ are inequivalent for $\lambda \neq \lambda'$, we find
  \begin{align}
    H_{\lambda, \lambda', i} = c_{\lambda, i} \delta_{\lambda, \lambda'} \id_{V_{E, \lambda}} \ ,
  \end{align}
  for some appropriately chosen coefficients $c_{\lambda, i}$.  Using~\eqref{eq_Hdef} with $\ket{\alpha} = \ket{\bar{v}_k}_{V_{E, \lambda}}$ and $\ket{\beta} = \ket{\bar{v}_{k'}}_{V_{E, \lambda'}}$ we obtain
  \begin{align}
    c_{\lambda, i} \delta_{\lambda, \lambda'} \delta_{k, k'} 
    =  \bra{\bar{v}_k} H_{\lambda, \lambda', i} \ket{\bar{v}_{k'}} 
     =       \bra{\psi_{\lambda, \lambda', i}} (\id_{D, \lambda} \otimes \ket{\bar{v}_{k'}}\bra{\bar{v}_k}) \ket{\psi_{\lambda}}      
     = \frac{1}{\sqrt{\dim(V_\lambda)}} \bra{\psi_{\lambda, \lambda', i}} (\ket{v_k} \otimes \ket{\bar{v}_{k'}})  \ .
  \end{align}
  Since this holds for any $k, k'$, we conclude that $\ket{\psi_{\lambda, \lambda', i}}_{V_{D, \lambda} V_{E, \lambda'}}$ is proportional to $\ket{\psi_\lambda}_{V_{D, \lambda} V_{E, \lambda}}$ if $\lambda = \lambda'$ and $0$ otherwise. Note that, for $\lambda = \lambda'$, the corresponding proportionality constant can without loss of generality be absorbed in $\ket{\phi_{\lambda, \lambda', i}}_{U_{D, \lambda} U_{E, \lambda'}}$ in the sum~\eqref{eq_Omegageneral}, so that $\ket{\psi_{\lambda, \lambda, i}}_{V_{D, \lambda} V_{E, \lambda}}$ is normalised. Hence, noting that $\ket{\psi_\lambda}_{V_{D, \lambda} V_{E, \lambda}}$ is also normalised, we have established~\eqref{eq_psilambdastructure}.
    \end{proof}
        
 \section{Strong faithfulness of squashed entanglement} \label{app_squashed}
  
As an example for how our result can be applied, we present here an argument proposed by Li and Winter~\cite{WL}. The argument is described in detail in~\cite{LW14}. We summarise it here for convenience.

\emph{Squashed entanglement} is a measure of entanglement defined for any bipartite state $\rho_{A C}$ as
 \begin{align}
   E_{\mathrm{sq}}(\rho_{A C}) = \frac{1}{2} \inf_{\rho_{A C E}} I(A : C | E)_{\rho} \ ,
 \end{align}
 where the infimum ranges over all non-negative extensions $\rho_{A C E}$ of $\rho_{A C}$~\cite{ChrWin03}. It is known that squashed entanglement is \emph{faithful}, i.e., strictly positive for any entangled state~\cite{BCY11,LiWin14}. In other words, $E_{\mathrm{sq}}(\rho_{A C})  = 0$ if and only if the state $\rho_{A C}$ is separable.  Theorem~\ref{thm_maininequality} implies a novel quantitative version of this claim. The main idea is to relate $E_{\mathrm{sq}}(\rho_{A C})$ to the distance between $\rho_{A C}$ and the closest state that is $k$-extendible (see Footnote~\ref{ftn_extendible} for a definition).
  
 \begin{theorem}[\cite{LW14}] \label{thm_squashed}
   For any density operator $\rho_{A C}$ on $A \otimes C$  and any $k \in \mathbb{N}$ there exists a $k$-extendible density operator $\omega_{A C}$ such that\footnote{We formulate the claim here for the trace distance~$\Delta(\cdot, \cdot)$, but note that it also holds for the purified distance defined in~\cite{TCR09}.}
   \begin{align} \label{eq_squashedextendiblebound}
     \Delta(\rho_{A C},  \omega_{A C}) \leq (k-1) \sqrt{\frac{\ln 2}{2} E_{\mathrm{sq}}(\rho_{A C})}  \ .
   \end{align}
 \end{theorem}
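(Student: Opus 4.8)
The plan is to use the recovery map furnished by Theorem~\ref{thm_maininequality} to approximately copy the $C$-system of a near-optimal extension of $\rho_{AC}$, and then symmetrise over the copies. First I would fix $\epsilon_0 > 0$ and pick a density operator $\rho_{ACE}$ with $\tr_E(\rho_{ACE}) = \rho_{AC}$ and $I(A:C|E)_\rho \le 2 E_{\mathrm{sq}}(\rho_{AC}) + \epsilon_0$ (if $E_{\mathrm{sq}}(\rho_{AC}) = \infty$ there is nothing to prove). Applying Theorem~\ref{thm_maininequality} to $\rho_{AEC}$ with $E$ playing the role of the conditioning system $B$ produces a trace-preserving completely positive map $\cR_{E \to E C}$, and inequality~\eqref{eq_maininequalityc} gives
\begin{align}
  \Delta\bigl(\rho_{ACE},\, \cR_{E \to EC}(\rho_{AE})\bigr) \le \delta := \sqrt{\ln 2\,\bigl(2 E_{\mathrm{sq}}(\rho_{AC}) + \epsilon_0\bigr)} \ .
\end{align}
Writing $\cN_E := \tr_C \circ\, \cR_{E \to EC}$ for the induced channel on $E$ and using $\tr_C(\rho_{ACE}) = \rho_{AE}$ together with the monotonicity of the trace distance under partial trace (cf.~\eqref{eq_tracedistancemonotone}), one also obtains $\Delta(\cN_E(\rho_{AE}), \rho_{AE}) \le \delta$.

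Next I would build, for $j = 1, \dots, k$, a state $\sigma^{(j)}$ on $A \otimes C_1 \otimes \cdots \otimes C_j \otimes E$, where each $C_i$ is a copy of $C$: set $\sigma^{(1)}_{A C_1 E} = \rho_{A C E}$ and obtain $\sigma^{(j)}$ from $\sigma^{(j-1)}$ by applying $\cR_{E \to E C_j}$ to the $E$-register while leaving $A, C_1, \dots, C_{j-1}$ untouched. Since this map acts only on $E$, the marginals satisfy $\sigma^{(j)}_{A C_1 \cdots C_{j-1} E} = \sigma^{(j-1)}$, $\sigma^{(j)}_{AE} = \cN_E(\sigma^{(j-1)}_{AE})$ and $\sigma^{(j)}_{A C_j E} = \cR_{E \to E C_j}(\sigma^{(j-1)}_{AE})$, so iterating gives $\sigma^{(j)}_{AE} = \cN_E^{j-1}(\rho_{AE})$. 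By the triangle inequality and repeated monotonicity of the trace distance under $\cN_E$ one gets $\Delta(\sigma^{(j)}_{AE}, \rho_{AE}) \le (j-1)\delta$, and hence, applying monotonicity once more under $\cR_{E \to E C_j}$,
\begin{align}
  \Delta\bigl(\sigma^{(j)}_{A C_j E}, \rho_{A C E}\bigr)
  \le \Delta\bigl(\cR_{E\to EC_j}(\sigma^{(j-1)}_{AE}),\, \cR_{E\to EC_j}(\rho_{AE})\bigr) + \Delta\bigl(\cR_{E\to EC_j}(\rho_{AE}), \rho_{ACE}\bigr)
  \le (j-1)\delta \ .
\end{align}
Because the maps applied after step $j$ leave $A$ and $C_j$ invariant, $\sigma^{(k)}_{A C_j} = \tr_E\bigl(\sigma^{(j)}_{A C_j E}\bigr)$, and therefore $\Delta(\sigma^{(k)}_{A C_j}, \rho_{AC}) \le (j-1)\delta$ for every $j$.

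Finally I would set $\omega_{AC} := \frac1k \sum_{j=1}^k \sigma^{(k)}_{A C_j}$. The symmetrisation of $\sigma^{(k)}_{A C_1 \cdots C_k}$ over all permutations of $C_1, \dots, C_k$ is a permutation-invariant extension of $\omega_{AC}$, which witnesses that $\omega_{AC}$ is $k$-extendible. Convexity of the trace distance then yields
\begin{align}
  \Delta(\omega_{AC}, \rho_{AC}) \le \frac1k \sum_{j=1}^k (j-1)\,\delta = \frac{k-1}{2}\,\delta \ ,
\end{align}
and letting $\epsilon_0 \to 0$ turns this into $\Delta(\omega_{AC}, \rho_{AC}) \le (k-1)\sqrt{\tfrac{\ln 2}{2}\, E_{\mathrm{sq}}(\rho_{AC})}$, i.e.~\eqref{eq_squashedextendiblebound}.

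The step I expect to be most delicate is obtaining the \emph{linear}-in-$k$, rather than square-root, control of the accumulated error. Propagating the recovery error through fidelities (or through the Fuchs--van de Graaf inequalities) would introduce a square root at each of the $k$ iterations and spoil the scaling. The key is instead to bound the drift $\Delta(\cN_E^{j-1}(\rho_{AE}), \rho_{AE})$ \emph{directly} as a telescoping sum of the single-step estimate $\Delta(\cN_E(\rho_{AE}), \rho_{AE}) \le \delta$, using only data processing for the trace distance; with this in hand the averaging over the $k$ copies converts the per-copy bound $(j-1)\delta$ into the factor $\tfrac{k-1}{2}$.
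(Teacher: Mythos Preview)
Your argument is correct and follows essentially the same route as the paper: iteratively apply the recovery map from Theorem~\ref{thm_maininequality} on the $E$-register to build a $k$-fold extension, control the per-copy error via data processing and a telescoping sum to get $\Delta(\sigma^{(k)}_{AC_j},\rho_{AC})\le (j-1)\delta$, then symmetrise and average. The only cosmetic difference is that you track the drift of the $AE$-marginal through the induced channel $\cN_E$ whereas the paper compares consecutive states $\rho^{i}_{AC_iE}$ directly; both proofs share the same implicit limiting/compactness step when passing from a near-optimal extension to the infimum defining $E_{\mathrm{sq}}$.
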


\begin{proof} 
 Let $\rho_{A C E}$ be a non-negative extension of $\rho_{A C}$. Theorem~\ref{thm_maininequality} implies that there exists a trace-preserving completely positive reconstruction map  $\cT_{E \to C E}$ such that
 \begin{align} \label{eq_reconstrdist}
   \Delta\bigl(\rho_{A C E}, (\cI_A \otimes \cT_{E \to C E})(\rho_{A E})\bigr) \leq \delta = \sqrt{\ln(2) I(A : C | E)_\rho} 
 \end{align}
 (see Eq.~\ref{eq_maininequalityc}).
 
 For $i \in \mathbb{N}$, define $\rho^i_{A C_1 \cdots C_i E}$ inductively by
 \begin{align} \label{eq_rhoidef}
   \rho^{i+1}_{A C_1 \cdots C_{i+1} E} = (\cI_{A C_1 \cdots C_i} \otimes \cT_{E \to C_{i+1} E})(\rho^i_{A C_1 \cdots C_i E}) \ ,
 \end{align}
  and $\rho^1_{A C_1 E} = \rho_{A C E}$.  Because the trace distance cannot increase under the action of $\cT_{E \to C E}$ (see Eq.~\ref{eq_tracedistancemonotone}) we  have
 \begin{align}
  \Delta(\rho^{i}_{A C_i E}, \rho^{i+1}_{A C_{i+1} E})
  \leq \Delta(\rho^{i-1}_{A E}, \rho^{i}_{A E})
  \leq \Delta(\rho^{i-1}_{A C_{i-1} E}, \rho^{i}_{A C_i E}) 
 \end{align}
 for $i > 1$. Furthermore, from~\eqref{eq_reconstrdist} we have
 \begin{align}
     \Delta(\rho^{1}_{A C_1 E}, \rho^{2}_{A C_{2} E})
     = \Delta\bigl(\rho_{A C E}, (\cI_{A} \otimes \cT_{E \to C E})(\rho_{A E}) \bigr)     
      \leq \delta \ .
 \end{align}
 The combination of these inequalities yields
 \begin{align}
    \Delta(\rho^{i}_{A C_i E}, \rho^{i+1}_{A C_{i+1} E}) \leq \delta
 \end{align}
 for any $i \in \mathbb{N}$. We now apply the triangle inequality to conclude that
 \begin{align} 
   \Delta( \rho_{A C E}, \rho^{j}_{A C_{j} E})
   =  \Delta( \rho^1_{A C_1 E}, \rho^{j}_{A C_{j} E})
   \leq \sum_{i=1}^{j-1} \Delta(\rho^{i}_{A C_{i} E}, \rho^{i+1}_{A C_{i+1} E}) = (j-1) \delta
 \end{align}
 for any $j \in \mathbb{N}$. Furthermore, because in the definition of the density operators $\rho^i_{A C_1 \cdots C_i E}$ (see Eq.~\ref{eq_rhoidef}) the reconstruction map does not act on the systems $C_1, \ldots C_i$, we have $\rho^j_{A C_1 \cdots C_j} = \rho^k_{A C_1 \cdots C_j}$ for any $j \leq k$, and hence
 \begin{align} \label{eq_rhojbound}
    \Delta( \rho_{A C}, \rho^{k}_{A C_{j}})
    = \Delta( \rho_{A C}, \rho^{j}_{A C_{j}})
  \leq (j-1) \cdot \delta \ .
 \end{align}
 
 Define now the density operator
 \begin{align}
   \bar{\omega}_{A C_1 \cdots C_k} = \frac{1}{k!} \sum_{\pi} \rho^k_{A C_{\pi(1)} \cdots C_{\pi(k)}} \ ,
 \end{align}
 where the sum ranges over all permutations of $\{1, \ldots, k\}$. The density operator $\omega_{A C} = \bar{\omega}_{A C_1}$ is then $k$-extendible by construction.  Using the convexity of the trace distance we find
  \begin{align}
   \Delta(\rho_{A C}, \omega_{A C})  
   =  \Delta(\rho_{A C}, \bar{\omega}_{A C_1})  
   \leq  \frac{1}{k!} \sum_{\pi} \Delta(\rho_{A C}, \rho^k_{A C_{\pi(1)}})
   =  \frac{1}{k} \sum_{j=1}^k \Delta(\rho_{A C}, \rho^{k}_{A C_j}) \ .
   \end{align}
   Inserting now the bound~\eqref{eq_rhojbound} we conclude that
 \begin{align}
  \Delta(\rho_{A C}, \omega_{A C})  
   \leq \frac{1}{k} \sum_{j=1}^k (j-1) \cdot \delta
   \leq \frac{k-1}{2} \cdot \delta 
   = \frac{k-1}{2}  \sqrt{\ln(2) I(A : C | E)_\rho}   \ .
 \end{align}
 The claim of the theorem follows because the above holds for any non-negative extension $\rho_{A C E}$ of $\rho_{A C}$.  
  \end{proof} 
 
   We remark that the bound provided by Theorem~\ref{thm_squashed} does not depend on the dimension of the two subsystems $A$ and $C$.  As mentioned above, this yields a quantitative claim on the faithfulness of squashed entanglement, which we formulate  as Corollary~\ref{cor_squashed} below. Its proof uses the following statement about the distance of $k$-extendible states from the set of separable states, which we denote by $S_{A : C}$. 
   
  \begin{lemma} \label{lem_extendibleseparable}
   For any $k$-extendible density operator $\omega_{A C}$ on $A \otimes C$
   \begin{align}
     \inf_{\sigma_{A C} \in S_{A : C}}  \Delta(\omega_{A C}, \sigma_{A C}) 
   \leq 2 \frac{(\dim C)^2}{k} \ .
   \end{align}
 \end{lemma}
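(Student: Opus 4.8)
The plan is to reduce the claim to the finite quantum de Finetti theorem. Since $\omega_{AC}$ is $k$-extendible, fix an extension with $\bar\omega_{AC_i}=\omega_{AC}$ for $i=1,\dots,k$ (tracing out any surplus copies of $C$ if the extension involves more than $k$ of them). Averaging this operator over all permutations of the systems $C_1,\dots,C_k$ leaves a density operator that is invariant under these permutations and still satisfies $\bar\omega_{AC_i}=\omega_{AC}$ for every $i$, since all these marginals were already equal; hence we may assume without loss of generality that $\bar\omega_{AC_1\cdots C_k}$ is permutation-invariant on the $C$-systems.

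Next I would invoke the quantum de Finetti theorem in the variant that allows a ``spectator'' system on which the permutations do not act~\cite{CKMR07}: for a state $\bar\omega_{AC_1\cdots C_k}$ that is permutation-invariant on $C_1,\dots,C_k$, each of dimension $\dim C$, there is a probability measure $\mathrm{d}\mu$ on density operators $\sigma_C$ on $C$ together with a family of density operators $\{\rho_A^{\sigma}\}$ on $A$ such that
\[
  \Delta\bigl(\bar\omega_{AC_1},\ \int \rho_A^{\sigma}\otimes\sigma_C\,\mathrm{d}\mu(\sigma)\bigr)\ \le\ \frac{2(\dim C)^2}{k}\ .
\]
The decisive point is that this bound involves only $\dim C$ and not $\dim A$, precisely because $A$ is not among the permuted systems; one obtains it by running the usual symmetric-subspace argument on a permutation-invariant purification and tracing out the purifying register, or equivalently by estimating $\sigma_C$ from an informationally complete measurement on $C_2,\dots,C_k$ and keeping $A C_1$ as the conditional state.

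To conclude, observe that $\tau_{AC}:=\int \rho_A^{\sigma}\otimes\sigma_C\,\mathrm{d}\mu(\sigma)$ is a convex mixture of product states and therefore lies in $S_{A:C}$, while $\bar\omega_{AC_1}=\omega_{AC}$. Thus $\inf_{\sigma_{AC}\in S_{A:C}}\Delta(\omega_{AC},\sigma_{AC})\le\Delta(\omega_{AC},\tau_{AC})\le 2(\dim C)^2/k$, which is the assertion of the lemma.

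The real work sits in the middle step: obtaining the $\dim A$-independent de Finetti estimate with the linear rate $1/k$ rather than $1/\sqrt{k}$. The factor $(\dim C)^2$ enters through the number of irreducible blocks (``types'') appearing in $\Sym^{k-1}(C)$, which is polynomial in $k$ of degree $\dim(C)^2-1$, and the linear-in-$1/k$ decay is exactly what distinguishes genuine $k$-extendibility from weaker Bose-symmetric notions. A fully self-contained treatment would have to redo the representation-theoretic estimates of Section~\ref{sec_deFinetti} while carrying the inert system $A$ along; here it is cleanest simply to cite the de Finetti theorem in its spectator form.
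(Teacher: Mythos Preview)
Your proposal is correct and matches the paper's own argument essentially line for line: symmetrise the $k$-extension over permutations of the $C$-systems and then invoke the de Finetti theorem with a spectator system from~\cite{CKMR07} (the paper cites Theorem~II.$7'$ there). The only difference is that you spell out why the bound depends on $\dim C$ alone and why the resulting mixture lies in $S_{A:C}$, whereas the paper leaves both points implicit in the citation.
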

 \begin{proof}
   By definition, there exists a density operator $\bar{\omega}_{A C_1 \cdots C_k}$ such that $\omega_{A C} = \bar{\omega}_{A C_i}$ for  $i = 1, \ldots, k$. Because this condition still holds if the order of the  subsystems $C_1, \ldots, C_n$ is permuted, one can assume without loss of generality that $\bar{\omega}_{A C_1 \ldots C_k}$ is invariant under such permutations. The claim then follows immediately from Theorem~II.$7'$ of~\cite{CKMR07}. 
 \end{proof}
      
 \begin{corollary}[\cite{LW14}] \label{cor_squashed}
   For any density operator $\rho_{A C}$  on $A \otimes C$
   \begin{align} \label{eq_squashedfaithful}
     \inf_{\sigma_{A C} \in S_{A : C}} \Delta(\rho_{A C}, \sigma_{A C}) \leq 2 \dim C \sqrt[4]{2 \ln(2) E_{\mathrm{sq}}(\rho_{A C})} \ .
   \end{align}
 \end{corollary}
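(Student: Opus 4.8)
The plan is to combine Theorem~\ref{thm_squashed} with Lemma~\ref{lem_extendibleseparable} via the triangle inequality and then to optimise over the extension parameter~$k$. Abbreviate $D = \dim C$ and $\delta = \sqrt{\tfrac{\ln 2}{2}\, E_{\mathrm{sq}}(\rho_{A C})}$. For any $k \in \mathbb{N}$, Theorem~\ref{thm_squashed} provides a $k$-extendible density operator $\omega_{A C}$ with $\Delta(\rho_{A C}, \omega_{A C}) \leq (k-1)\delta$, and Lemma~\ref{lem_extendibleseparable} then provides a separable $\sigma_{A C} \in S_{A : C}$ with $\Delta(\omega_{A C}, \sigma_{A C}) \leq 2 D^2/k$. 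The triangle inequality for the trace distance gives
\begin{align} \label{eq_sqbound}
  \inf_{\sigma_{A C} \in S_{A : C}} \Delta(\rho_{A C}, \sigma_{A C}) \leq (k-1)\,\delta + \frac{2 D^2}{k} \ .
\end{align}
I will use below the algebraic identity $2 D\sqrt{2\delta} = 2 \dim C\, \sqrt[4]{2 \ln(2)\, E_{\mathrm{sq}}(\rho_{A C})}$, which follows from $\sqrt{2\delta} = \sqrt{2}\,\bigl(\tfrac{\ln 2}{2}\, E_{\mathrm{sq}}(\rho_{A C})\bigr)^{1/4} = \bigl(2 \ln(2)\, E_{\mathrm{sq}}(\rho_{A C})\bigr)^{1/4}$.

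Next I would choose $k$ to balance the two contributions in~\eqref{eq_sqbound}, treating the degenerate cases first. If $E_{\mathrm{sq}}(\rho_{A C}) = 0$, then $\delta = 0$ and~\eqref{eq_sqbound} reads $\inf_{\sigma} \Delta(\rho_{A C}, \sigma_{A C}) \leq 2 D^2/k$ for every $k$, so letting $k \to \infty$ shows the left hand side is zero, matching the right hand side of~\eqref{eq_squashedfaithful}. If $\delta > 2 D^2$, then $2 D\sqrt{2\delta} > 2 D\sqrt{4 D^2} = 4 D^2 \geq 4 > 1$, so by the identity above the right hand side of~\eqref{eq_squashedfaithful} exceeds~$1$ and the claim holds trivially since $\Delta(\cdot,\cdot) \leq 1$ for density operators. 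In the remaining range $0 < \delta \leq 2 D^2$, the real number $x = D\sqrt{2/\delta}$ satisfies $x \geq 1$, so $k = \ceil{x} \in \mathbb{N}$ is a valid choice with $k - 1 \leq x \leq k$; substituting into~\eqref{eq_sqbound} gives $(k-1)\delta \leq x\delta = D\sqrt{2\delta}$ and $2 D^2/k \leq 2 D^2/x = D\sqrt{2\delta}$, hence
\begin{align} \label{eq_sqbound2}
  \inf_{\sigma_{A C} \in S_{A : C}} \Delta(\rho_{A C}, \sigma_{A C}) \leq 2 D\sqrt{2\delta} \ .
\end{align}

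Finally, applying the identity from the first paragraph to~\eqref{eq_sqbound2} yields exactly $2 \dim C\, \sqrt[4]{2 \ln(2)\, E_{\mathrm{sq}}(\rho_{A C})}$, which is~\eqref{eq_squashedfaithful}. I do not anticipate a genuine obstacle here: the argument is essentially bookkeeping, and the only mild subtlety is that $k$ must be a positive integer, which is why a ceiling appears and why the regimes $E_{\mathrm{sq}}(\rho_{A C}) = 0$ and $\delta > 2 D^2$ are peeled off separately. It is worth observing that the unconstrained minimiser of the map $k \mapsto k\delta + 2 D^2/k$ already reproduces the constant in~\eqref{eq_squashedfaithful} exactly, so (up to the harmless $(k-1)$-versus-$k$ slack) this method is as tight as the inputs Theorem~\ref{thm_squashed} and Lemma~\ref{lem_extendibleseparable} allow, and improvements would require sharpening one of those.
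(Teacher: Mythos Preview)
Your proposal is correct and follows essentially the same approach as the paper: combine Theorem~\ref{thm_squashed} and Lemma~\ref{lem_extendibleseparable} via the triangle inequality and optimise over~$k$, and your choice $k=\ceil{D\sqrt{2/\delta}}$ is exactly the paper's $k=\ceil{\sqrt[4]{8/(\ln(2)E_{\mathrm{sq}})}\,\dim C}$ in different notation. If anything, your version is slightly more careful, since you explicitly handle the degenerate cases $E_{\mathrm{sq}}=0$ and $\delta>2D^2$ to guarantee that the ceiling yields a valid positive integer, whereas the paper simply states the value of~$k$ and asserts the claim follows.
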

  
  \begin{proof}
   Let $\omega_{A C}$ be a $k$-extendible density operator that satisfies~\eqref{eq_squashedextendiblebound}. Using the triangle inequality we can combine this with Lemma~\ref{lem_extendibleseparable} to obtain
   \begin{multline}
      \inf_{\sigma_{A C} \in S_{A : C}} \Delta(\rho_{A C}, \sigma_{A C}) 
  \leq \Delta(\rho_{A C}, \omega_{A C}) +   \inf_{\sigma_{A C} \in S_{A : C}} \Delta(\omega_{A C}, \sigma_{A C})   \\
  \leq  (k-1) \sqrt{\frac{\ln 2}{2} E_{\mathrm{sq}}(\rho_{A C})} + 2 \frac{(\dim C)^2}{k} \ .
  \end{multline}
  Inserting  $k = \ceil{ \sqrt[4]{\frac{8}{\ln(2) E_{\mathrm{sq}}(\rho_{A C})}}  \dim C }$ then yields the claim.     
   \comment{
   More precisely,  
   \begin{multline}
   (k-1) \sqrt{\frac{\ln(2)}{2} E_{\mathrm{sq}}(\rho_{A C}) } + 2 \frac{(\dim C)^2}{k}  \\
   \leq \sqrt[4]{\frac{8}{\ln(2) E_{\mathrm{sq}}(\rho_{A C})}}  \dim C \sqrt{\frac{\ln(2)}{2} E_{\mathrm{sq}}(\rho_{A C}) }  + 2 (\dim C)^2 \cdot \sqrt[4]{\frac{\ln(2) E_{\mathrm{sq}}(\rho_{A C})}{8}} \frac{1}{\dim C} \\
   =  2 \sqrt[4]{\frac{8}{\ln(2) E_{\mathrm{sq}}(\rho_{A C})}}  \dim C \sqrt{\frac{\ln(2)}{8} E_{\mathrm{sq}}(\rho_{A C}) }  + 2 \dim C \cdot \sqrt[4]{\frac{\ln(2) E_{\mathrm{sq}}(\rho_{A C})}{8}} \\
   = 4  \dim C \sqrt[4]{\frac{\ln(2) E_{\mathrm{sq}}(\rho_{A C})}{8}} \ .
   \end{multline}
  }
   \end{proof}   
 Corollary~\ref{cor_squashed} quantifies the faithfulness of squashed entanglement in terms of the trace norm. Compared to previously known versions of this claim~\cite{BCY11}, only the dimension of \emph{one} the two subsystems enters as a factor in the bound~\eqref{eq_squashedfaithful}. (Because of the symmetry of the other involved quantities, one can always choose the lower-dimensional one.) We also note that the example of the totally antisymmetric state on $A \otimes C$ with $\dim A = \dim C = d$ shows that such a factor is necessary. Indeed, the squashed entanglement of this state is of the order $O(1/d)$~\cite{CSW12} whereas its trace distance to the closest separable state cannot be smaller than~$\frac{1}{4}$. (To see this, note that for any product state $\sigma_A \otimes \sigma_C$ we have $\tr(\sigma_{A} \otimes \sigma_{C} \Pi_{\mathrm{as}}) \leq \frac{1}{2}$ where $\Pi_{\mathrm{as}}$ denotes the projector onto the antisymmetric subspace of $A \otimes C$.\comment{We have $\tr(\Pi_{\mathrm{as}} \sigma_{A} \otimes \sigma_{C}) = \frac{1}{2} \tr((\id_{AC} - F_{AC}) \sigma_{A} \otimes \sigma_{C}) = \frac{1}{2} \left(\tr(\sigma_A) \tr(\sigma_{C}) - \tr(\sigma_{A} \sigma_{C}) \right) \leq 1/2$.})
 
\section*{Acknowledgments}

We thank Aram Harrow for advice on Lemma~\ref{lem_SchurWeyldecomposition}, Andreas Winter for making us aware of the implications that our result has for squashed entanglement (described in Appendix~\ref{app_squashed}), and Mark Wilde for pointing out an error in the dimension factor of a previous version of Corollary~\ref{cor_squashed}. We also thank Normand Beaudry, Mario Berta, Fr\'ed\'eric Dupuis, Volkher Scholz, David Sutter, Marco Tomamichel, Mark Wilde, and Lin Zhang for discussions and comments on earlier versions of this manuscript. This project was supported by the European Research Council (ERC) via grant No.~258932 and from the Swiss National Science Foundation (SNSF) via the National Centre of Competence in Research ``QSIT''.

\bibliographystyle{plain}
\bibliography{big2}

\end{document}